\renewcommand{\figurename}{Figure}
\newtheorem{theorem}{Theorem}
\newtheorem{lemma}[theorem]{Lemma}
\newcommand{\true}{\mathsf{true}}
\newcommand{\A}{\mathcal A}
\newcommand{\R}{\mathcal R}
\newcommand{\locs}{L}
\newcommand{\loc}{\ell}
\newcommand{\ttto}[3]{\xrightarrow{#1 \,\, \, #2 \,\,\, #3\downarrow}}
\newcommand{\tto}[2]{\xrightarrow{#1 \,\, \, #2 \,\,}}
\newcommand{\Hsquare}{\text{\fboxsep=-.2pt\fbox{\rule{0pt}{1ex}\rule{1ex}{0pt}}}}
\newcommand{\domain}{D}
\newcommand{\alphabet}{\Sigma}%{\mathsf{A}}
\newcommand{\reg}{R}
\newcommand{\up}{\mathsf{up}}
\newcommand{\valuation}{\nu}%{\vartheta}
\newcommand{\nat}{\mathbb{N}}
\newcommand{\tuple}[1]{\langle #1 \rangle}
\def\abs#1{\ensuremath{\lvert #1\rvert}}
\newcommand{\semantics}[1]{\llbracket #1\rrbracket}
\newcommand{\post}{\mathsf{post}}
\newcommand{\wait}{\mathsf{wait}}
\def\@tvsp{\mathchoice{{}\mkern-5.5mu}{{}\mkern-5.5mu}{{}\mkern-3.5mu}{}}
\def\ltrivert{\left\langle\@tvsp\left\langle}
\def\rtrivert{\right\rangle\@tvsp\right\rangle}
\newcommand{\defequals}{\stackrel{\mathrm{def}}{=}}
\newcommand*\ie{\textit{i.e.}}  
\newcommand*\eg{\textit{e.g.}}
\newcommand{\val}{\valuation}
\newcommand*{\defeq}{\mathrel{\rlap{%
                     \raisebox{0.3ex}{$\m@th\cdot$}}%
                     \raisebox{-0.3ex}{$\m@th\cdot$}}%
                     =}
\newcommand*{\ndefeq}{\mathrel{\rlap{%
                     \raisebox{0.3ex}{$\m@th\cdot$}}%
                     \raisebox{-0.3ex}{$\m@th\cdot$}%
                     \rlap{%
                     \raisebox{0.3ex}{$\m@th\cdot$}}
                     \raisebox{-0.3ex}{$\m@th\cdot$}}
                     =}
\newcommand{\data}{\mathsf{data}} % a function returning all data values of a configuration
\newcommand{\synch}{\mathsf{synch}}
\newcommand{\init}{\mathsf{init}}
\newcommand{\nlogspace}{\textup{NLOGSPACE}}
\newcommand{\pspace}{\textup{PSPACE}}
\newcommand{\conpspace}{\textup{co\text{-}(N)PSPACE}}
\newcommand{\npspace}{\textup{(N)PSPACE}}
\newcommand{\ackermann}{\textup{Ackermann}}
\newcommand{\exptime}{\textup{EXPTIME}}
\newcommand{\nexptime}{\textup{NEXPTIME}}
\newcommand{\conexptime}{\textup{co\text{-}NEXPTIME}}
\newcommand{\tower}{\textup{tower}}
\newenvironment{qlemma}[1]{%
{\medskip\noindent{\bf Lemma #1.}\hspace{-2mm}}%
\begin{itshape}%
}{%
\end{itshape}%
}
\newcommand{\lzero}{\mathsf{zero}}
\newcommand{\lreset}{\mathsf{reset}}
\newcommand{\inBinary}[1]{\texttt{{#1}}}
\newcommand{\Bit}[1]{\mathsf{Bit}_{#1}}
\newcommand{\loctwo}[1]{2^{#1}}
\newcommand{\locctwo}[1]{2^{#1}_c}
\newcommand{\counter}{\mathsf{counter}}
\newcommand{\waitexp}{\mathsf{waitExp}}
\newcommand{\waittow}{\mathsf{waitTow}}
\newcommand{\waitdoub}{\mathsf{waitDoub}}
\newcommand{\waitrep}{\mathsf{waitRep}}
\newcommand{\rep}{\mathsf{rep}}
\newcommand{\store}{\mathsf{store}}
\newcommand{\arepeat}{\mathsf{rep}}
\newcommand{\atower}{\mathsf{tow}}
\newcommand{\adouble}{\mathsf{doub}}
\newcommand{\aexp}{\mathsf{exp}}
\newcommand{\Data}{\mathsf{data}}
\newcommand{\lang}{\mathsf{lang}}
\newcommand{\proj}{\mathsf{proj}}
\newcommand{\theokDRA}{
	The synchronization problem for $k$-DRAs is $\pspace$-complete. 
}
 \newcommand{\is}{\textup{in}} % initial state $q_\is$
 \newcommand{\acc}{\textup{f}} % initial state $q_\is$
 \newcommand{\Rfamily}{\mathbf{R}}
 \newcommand{\tm}{\mathcal{M}}
\begin{document}
\title{Synchronizing Data Words for Register Automata} 
\author{Karin Quaas \\
Universit\"at Leipzig
\and
Mahsa Shirmohammadi\\
CNRS \& IRIF
}

\date{}
\maketitle

\begin{abstract}
Register automata (RAs) are  finite  automata extended with a finite set of registers
 to store and compare data from an infinite domain. %  for equality. 
We study  the concept of  synchronizing data words in RAs: 
does there exist a data word that sends all states of the RA
to a single state?

For deterministic RAs with $k$ registers ($k$-DRAs), 
we prove that inputting data words with  $2k+1$ distinct data from the infinite data domain 
is  sufficient to synchronize. 
We show that the synchronization problem for DRAs is in general $\pspace$-complete,
and it is  $\nlogspace$-complete for $1$-DRAs. 
For nondeterministic RAs (NRAs), we show that
$\ackermann(n)$ distinct data (where $n$ is the size of the RA)
might be necessary to synchronize.
The  synchronization problem for NRAs is in general undecidable, however,
we   establish $\ackermann$-completeness of  
the   problem for $1$-NRAs. 
Another main result is the  $\nexptime$-completeness of 
the length-bounded synchronization problem for NRAs, where a bound on the length of the synchronizing data word, written in binary, is given.
A variant of this  last construction allows to prove that the length-bounded 
universality problem for NRAs is $\conexptime$-complete.
\end{abstract}

\maketitle
\section{Introduction}
Given a deterministic finite automaton (DFA), a \emph{synchronizing word} is a word that sends all states of the automaton
to a unique state.  Synchronizing words for finite automata have been studied since the 1970\emph{s}~\cite{Cerny64,Pin78,Volkov08,M10}
and are the subject of one of the most well known open problems in automata theory---the 
\v{C}ern\'{y} conjecture.
This conjecture states that the length of a shortest synchronizing word
for a DFA with $n$ states is at most~$(n-1)^{2}$.
Synchronizing words moreover have applications in planning, control of discrete event systems, biocomputing, and
robotics~\cite{Ben03,Volkov08,DMS11b}.  More recently the notion has been
generalized from automata to  games~\cite{LarsenLS14,MahsaThesis,KretinskyLLS15} and  
infinite-state systems~\cite{DBLP:conf/fsttcs/0001JLMS14,MahsaDmitry16}, with applications to modelling complex 
systems such as distributed data networks or real-time embedded systems.

In this paper we are interested in  \emph{synchronizing data words} for \emph{register automata}.
\emph{Data words} are  sequences of pairs, where the first element of each pair is taken from a finite alphabet and the second element 
is taken from \emph{an infinite data domain}, such as the natural numbers or ASCII strings.  
Data words have 
applications in querying and reasoning about data models with complex structural properties, e.g., XML
and graph databases~\cite{Angles:2008:SGD:1322432.1322433,Figueira:2009:SDX:1559795.1559827, Bojanczyk:2011:XEL:1989727.1989731,Barcelo:2012:ELP:2389241.2389250}. For reasoning about data words,
various formalisms have been considered, including
first-order logic for data words~\cite{DBLP:conf/lics/BojanczykMSSD06,DBLP:conf/fct/BouajjaniHJS07},
extensions of linear temporal logic~\cite{DBLP:conf/time/LisitsaP05,DBLP:journals/iandc/DemriLN07,DBLP:journals/tocl/DemriL09,DBLP:journals/tcs/DemriLS10}, data automata~\cite{DBLP:journals/iandc/BouyerPT03,DBLP:conf/lics/BojanczykMSSD06}, register automata~\cite{DBLP:journals/tcs/KaminskiF94,DBLP:journals/tcs/SakamotoI00,DBLP:journals/tocl/NevenSV04,DBLP:journals/tocl/DemriL09} and extensions thereof, \eg \,~\cite{DBLP:conf/popl/Tzevelekos11,DBLP:journals/corr/abs-1202-3957,ClementeL15}.

\emph{Register automata} (RAs) are a generalization of finite automata for processing data words. 
RAs are equipped with a finite set of registers that can store data values.
While processing a data word 
such an automaton can store the datum at the current position in one of its registers; it can also test the current datum for equality with data already stored in its registers.
In applications, RAs allow for handling parameters such as user names, passwords, identifiers of connections, sessions, etc.
RAs come in many variants, including one-way, two-way, deterministic, nondeterministic, and alternating.
For alternating one-way RAs, classical language-theoretic decision problems, such as emptiness, universality and inclusion are undecidable. 
In this paper, we focus on the class of one-way nondeterministic RAs, which have
a decidable emptiness problem~\cite{DBLP:journals/tcs/KaminskiF94},
and the subclass of nondeterministic RAs with a single register, which has a decidable universality problem~\cite{DBLP:journals/tocl/DemriL09}.

Semantically, an RA defines an infinite-state transition system due to the unbounded domain for the data stored in the registers.
Synchronizing words were introduced for infinite-state systems with infinite branching in~\cite{DBLP:conf/fsttcs/0001JLMS14,MahsaThesis};
in particular, the notion of synchronizing words is motivated and studied for weighted automata and  timed automata.
In some infinite-state settings, such as  nested-word automata,
finding the right definition of synchronizing  word is however more challenging~\cite{MahsaDmitry16}. 
We  define the \emph{synchronization problem} for RAs within the framework suggested  in~\cite{DBLP:conf/fsttcs/0001JLMS14,MahsaThesis}: 
given an RA~$\R$ over a finite alphabet~$\Sigma$ and an infinite data domain $\domain$, 
does there exist a  data word~$w \in (\Sigma\times \domain)^{+}$ and
some state $q_w$ such that the word~$w$ sends each of the infinitely many states  of $\R$ to~$q_w$? Note that the state $q_w$ depends on the word~$w$;
we call such a data word a \emph{synchronizing data word}.

\emph{Contribution.}
The problem of finding synchronizing data words for RAs poses new challenges in the area of  synchronization.
It is  natural to ask how many distinct data are necessary and sufficient to synchronize an RA, 
which we refer to as the  \emph{data efficiency} of synchronizing data words.
We show that the data efficiency is polynomial in the number of registers for deterministic RAs (DRAs). 
For nondeterministic RAs (NRAs), we provide an example that shows that the data efficiency may be $\ackermann(n)$, where $n$~is the number of states of the NRA. 
Remarkably, the data efficiency is tightly related to the complexity of deciding the synchronization problem. 
For DRAs,
we prove that for all automata~$\R$ with $k$ registers, 
if $\R$ has a synchronizing data word, 
then it also has one with data efficiency \emph{at most} $2k+1$.
We  provide a family $(\R_k)_{k\in\nat}$ of DRAs with $k$ registers, 
for which indeed a polynomial data efficiency (in $k$) is necessary to  synchronize. 
This bound is the base of an $\npspace$-algorithm for DRAs;
we prove a matching   $\pspace$ lower bound  by ideas carried over from timed settings~\cite{DBLP:conf/fsttcs/0001JLMS14}.
We show  that the synchronization problems for DRAs with a single register ($1$-DRAs) and for DFAs are $\nlogspace$-interreducible,  implying that
the problem is $\nlogspace$-complete for $1$-DRAs.

For NRAs, a reduction from the non-universality problem yields the undecidability of the synchronization problem.
For single-register NRAs ($1$-NRAs), we prove $\ackermann$-completeness of the problem  by 
a novel construction proving that  the synchronization problem and the non-universality problem for $1$-NRAs are  
polynomial-time interreducible.
We believe that this technique is useful in studying synchronization in all nondeterministic settings, requiring careful analysis of
the size of the construction.

Another main contribution is to prove  $\nexptime$-completeness of 
the \emph{length-bounded synchronization problem} for NRAs:
given a  bound on the length (written in binary), 
does there exist a synchronizing data word with length at most the given bound?
For the lower bound, we present a  reduction from the 
membership problem of  $\mathcal{O}(2^n)$-time bounded nondeterministic Turing machines.
The crucial ingredient in this reduction is a family of RAs implementing binary counters.
A variant of our construction yields a proof for $\conexptime$-completeness of the \emph{length-bounded universality problem} for NRAs; 
the length-bounded universality problem  asks whether all data words 
of length at most a given bound (written in binary) are in the language of the automaton.
We further  make a connection to the emptiness problem of single-register alternating RAs.

An extended abstract of this article has appeared in the Proceedings 
of the 41st International Symposium on Mathematical Foundations of Computer
Science, (MFCS) 2016~\cite{BQS16}.
In comparison with the extended abstract, 
here we simplify two of the main constructions and add detailed proofs of all results.
The main improvement is giving a simpler  $\nexptime$-hardness reduction 
for the length-bounded synchronization problem for NRAs.

\section{Preliminaries}
A deterministic finite-state automaton (DFA) is a  tuple~$\A=\tuple{Q,\alphabet,\Delta}$, 
where~$Q$ is a finite set of states, 
$\alphabet$ is a finite alphabet, and~$\Delta:Q\times \alphabet \to Q$
is a transition function that is totally defined.  
The  function~$\Delta$  extends to finite words in a natural way: 
$\Delta(q,wa)=\Delta(\Delta(q,w),a)$ for all words~$w\in \alphabet^{*}$ 
and letters~$a\in \alphabet$; it extends to all sets~$S\subseteq Q$  by $\Delta(S,w)=\bigcup_{q\in S} \Delta(q,w)$. 

\medskip

{\bf Data Words and Register automata.} 
For the rest of this paper, fix an infinite data domain~$\domain$. 
Given a finite alphabet $\alphabet$, a \emph{data word over $\alphabet$} is a  finite words over~$\alphabet\times\domain$. 
For a data word $w=(a_1, d_1) (a_2,d_2) \cdots (a_n,d_n)$,  
the length $\abs{w}$ of~$w$ is $n$. 
We use $\data(w)=\{d_1,\dots,d_n\}\subseteq \domain$ to refer to the set of data values occurring in~$w$, 
and we define the \emph{data efficiency of $w$} to be~$\abs{\data(w)}$. 

Let $\reg$ be a finite set of \emph{register variables}.
We define \emph{register constraints $\phi$ over $\reg$} by the grammar 
\[
\phi \; ::= \true \; | \; =\!r \; |   \; \phi \wedge \phi \; | \; \lnot \phi,\] where~$r \in \reg$. 
We denote by~$\Phi(\reg)$ the set of all register constraints over~$\reg$. 
We may use $\neq\!r$ for  the inequality constraint~$\neg(=\! r)$. 
A \emph{register valuation} is a mapping $\valuation: \reg \to \domain$
that assigns a data value to each register; 
we sometimes write~$\valuation= (\valuation(r_1), \cdots,  \valuation(r_k)) \in \domain^{k}$,   
 where~$\reg=\{r_1,\cdots, r_k\}$. 
The satisfaction relation of register constraints is defined  on $\domain^{k}\times \domain$ as follows:
$(\valuation,d)$ satisfies the constraint $=\!r$ if $\val(r)=\! d$; 
the other cases follow. 
For example,  $((d_1,d_2,d_1),d_2)$  satisfies $((=r_1)\wedge (=r_2)) \vee (\neq r_3))$ if $d_1\neq d_2$.
For the set~$\up\subseteq\reg$ and $d\in\domain$, we define the \emph{update $\valuation[\up \defeq d]$ of valuation $\valuation$} 
 by~$(\valuation[\up \defeq d])(r) = d$ if $r \in \up$, 
and $(\valuation[\up \defeq d])(r) = \valuation(r)$ otherwise.

A {\em register automaton} (RA) is a tuple $\R =\tuple{\locs,\reg,\alphabet,T}$,  
where $\locs$ is a finite set of locations, 
$\reg$ is a finite set of registers, 
$\alphabet$~is a finite alphabet and 
$T \subseteq \locs \times \alphabet \times \Phi(\reg) \times 2^{\reg} \times \locs$ is a transition relation. 
We may use~$\loc \ttto{\phi}{a}{\up} \loc'$ 
to show transitions $(\loc, a, \phi, \up, \loc') \in T$.  
We call $\loc \ttto{\phi}{a}{\up}\loc'$ an $a$-transition and $\phi$ the \emph{guard} of this transition.
A guard $\true$ is vacuously true and may be omitted. Likewise we may omit $\up$ if $\up=\emptyset$.
We may  write $r\!\downarrow$ when $\up=\{r\}$ is  a singleton set. 
For NRAs with only one register, we may shortly write $=$ and $\neq$ for the guards $=\!r$ and $\neq\!r$, respectively, and $\downarrow$ for the update $\downarrow\!r$.

A \emph{configuration} of~$\R$ is a  pair~$(\loc,\valuation) \in \locs \times \domain^{\abs{\reg}}$ of a location~$\loc$ and a register valuation~$\valuation$.
We describe the behaviour of~$\R$ as follows:   
Given a configuration~$q=(\loc,\valuation)$ and some input $(a,d)\in \alphabet\times\domain$
an $a$-transition $\loc \ttto{\phi}{a}{\up} \loc'$ may be fired from $q$ if
$(\valuation,d)$ satisfies the constraint~$\phi$; 
then~$\R$ moves to the \emph{successor configuration}~$q'=(\loc',\valuation')$, where $\valuation'=\valuation[\up \defeq d]$
is the update of $\valuation$. 
By~$\post(q, (a,d))$, we denote the set of all successor configurations~$q'$ of~$q$ on input $(a,d)$.
We extend $\post$  to sets~$S\subseteq L\times\domain^{\abs{\reg}}$ of configurations by $\post(S,(a,d))=\bigcup_{q\in S} \post(q,(a,d))$;
and we extend $\post$ to words by
$\post(S,w\cdot(a,d))=\post(\post(S,w),(a,d))$ for all words~$w\in (\alphabet\times \domain)^{*}$, and all inputs $(a,d)\in \alphabet\times \domain$.

A run of~$\R$ over the data word~$w=(a_1, d_1) (a_2,d_2) \cdots (a_n, d_n)$ 
is a sequence of configurations $q_0 q_1\dots q_n$, where $q_i\in \post(q_{i-1},(a_i, d_i))$ for all $1\leq i \leq n$. 
If $\R$ reaches a configuration $q=(\loc,x)$ during processing a word~$w$,  
we may say that \emph{an $x$-token is in $\loc$} (or simply \emph{a token is in $\loc$}).

In the rest of the paper, we consider {\em complete} RAs, 
meaning that for all configurations~$q\in \locs \times \domain^{\abs{\reg}}$ 
and all inputs~$(a,d)\in \alphabet \times \domain$, there is 
at least one successor: $\abs{\post(q,(a,d))}\geq 1$.
We also classify the RAs into  {\em deterministic} RAs (DRAs) and {\em nondeterministic} (NRAs),
where an RA is deterministic if $\abs{\post(q,(a,d))}\leq 1$ for all configurations~$q$ and all inputs~$(a,d)$. 
A \emph{$k$-NRA} (\emph{$k$-DRA}, respectively) is an NRA (DRA, respectively) with $\abs{\reg}=k$.

{\bf Synchronizing words and synchronizing  data words.}
\emph{Synchronizing  words} are a well-studied concept for DFAs,  see, \eg,~\cite{Volkov08}.
Informally,
a synchronizing word leads the automaton from every state to the same state. Formally, 
the word~$w\in \alphabet^{+}$ is synchronizing  for a DFA~$\A=\tuple{Q,\alphabet,\Delta}$ if there 
exists some state~$q\in Q$ such that~$\Delta(Q,w)=\{q\}$.
The \emph{synchronization problem} for DFAs asks, given a DFA~$\A$, whether 
there exists some synchronizing word for~$\A$.

The synchronization problem for DFAs  is in $\nlogspace$ 
by using the \emph{pairwise synchronization} technique:
given a DFA~$\A=\tuple{Q,\alphabet,\Delta} $, it is known that $\A$ has a synchronizing word
if and only if for all pairs of states~$q,q'\in Q$, there exists a word~$v$
such that~$\Delta(q,v)=\Delta(q',v)$ (see~\cite{Volkov08} for more details).
The pairwise synchronization algorithm initially sets $S_{\abs{Q}}=Q$. 
For  $i=\abs{Q}-1,\cdots,1$, the algorithm repeats the following two steps: 
(a) For two distinct states $q,q'\in S_{i+1}$, 
 find $v_i$  such that~$\Delta(q,v_i)=\Delta(q',v_i)$.
(b) Set  $S_{i}=\Delta(S_{i+1},v_i)$ (and repeat the loop). 
The word~$w=v_{\abs{Q}-1} \cdots  v_{1}$ is synchronizing for $\A$.

We introduce synchronizing data words for  RAs.
Given an RA~$\R = \tuple{\locs, \reg,\alphabet, T}$, 
a data word~$w \in (\alphabet\times\domain)^+$ is  \emph{synchronizing} for $\R$ if there exists some
configuration~$q_w=(\loc, \valuation)$ such that 
$\post(\locs \times \domain^{\abs{\reg}} ,w)=\{q_w\}$.
Intuitively, no matter what is the starting location and register valuation,
by inputting the data word~$w$, $\R$ will be in the unique successor configuration~$q_w$. 
This configuration $q_w$ depends on~$w$.
The \emph{synchronization problem} for RAs asks, given an RA~$\R$ over a data domain~$\domain$,
whether there exists some synchronizing data word for~$\R$. 
The \emph{length-bounded synchronization problem} for RAs  decides, given an RA~$\R$ and a bound~$N\in \nat$ written in binary, whether there exists some 
synchronizing data word~$w$ for $\R$ satisfying $\abs{w}\leq N$. 

\section{Synchronizing data words for DRA\lowercase{s}}\label{synchDRAs}
In this section, 
we first show that   the synchronization problems for $1$-DRAs and DFAs are $\nlogspace$-interreducible,  implying that
the problem is $\nlogspace$-complete for $1$-DRAs. 
Next, we prove that the problem for $k$-DRAs, in general, can be decided in $\pspace$;  
a reduction similar to a timed setting, 
as in~\cite{DBLP:conf/fsttcs/0001JLMS14},  provides  the matching lower bound.
To obtain the complexity upper bounds, we prove that inputting words with data efficiency~${2\abs{\reg}+1}$ is sufficient to synchronize
a DRA.

The concept of synchronization requires that all runs of an RA, whatever the initial configuration (initial location and register valuations),
end in the same configuration~$(\loc_\synch,\nu_\synch)$, only depending on the synchronizing data word~$w_{\synch}$, formally 
$\post(\locs \times \domain^{\abs{\reg}},w_{\synch})=\{(\loc_\synch,\nu_\synch)\}$.
While processing a synchronizing data word, the infinite set of configurations of RAs must necessarily shrink to a finite set of configurations. 
The DRA~$\R$ with~$3$ registers depicted in \figurename~\ref{fig:UpdateEfficienyRegisterAutomata}
illustrates this phenomenon. 
Consider the set~$\{x_1,x_2,x_3\}\subseteq \domain$ of distinct data values: 
starting from any of the infinite configurations in~$\{\init\}\times \domain^3$, 
when processing the data word~$(a,x_1)(a,x_2)(a, x_3)$, 
$\R$ will be in a configuration in the finite set~$\{(\loc_3,(x_1, x_2, x_3)), (\loc'_3,(x_1, x_2, x_3)\}$.
We use this observation to provide a linear bound on the number of distinct data values that is sufficient for synchronizing DRAs.

In Lemma~\ref{lemmafiniteD} below, we prove  that data words over only $\abs{\reg}$ distinct data values are sufficient to 
shrink the infinite set of all configurations of DRAs to a finite set. 
We establish this result based on the following two key facts:

($1$)
When processing a synchronizing data word $w_\synch$ from a configuration 
$(\loc,\nu)$ with some register $r\in\reg$ such that  $\nu(r)\not\in\data(w_{\synch})$, 
the register $r$ must be updated.
Observe that such updates must happen at inequality-guarded transitions, which themselves
must be accessible by inequality-guarded transitions (possibly with no update). 
As an example, consider the  DRA~$\R$ in \figurename~\ref{fig:UpdateEfficienyRegisterAutomata}, and assume  
$d_1,d_2\not\in  \data(w_{\synch})$. 
The two runs of $\R$ starting from~$(\init,d_1,d_1,d_1)$
and~$(\init,d_2,d_2,d_2)$ first
  take the transition $\init \ttto{\neq r_1}{a}{r_1} \loc'_1$ updating register~$r_1$. 
  Next, the two runs must take~$\loc'_1 \ttto{\text{{\sf else}}}{a}{r_{2}} \loc'_{2}$ to update~$r_2$ and
$\loc'_2 \ttto{\text{{\sf else}}}{a}{r_{3}} \loc'_{3}$ to update~$r_3$; otherwise these two runs would never be synchronized in a single configuration.

($2$) Moreover, 
 to shrink the set~$\locs \times \domain^{\abs{\reg}}$, for every $\loc\in\locs$, 
one can find a word~$w_{\loc}$ that leads the DRA 
 from~$\{\loc\}\times \domain^{\abs{\reg}}$ to some finite set. 
Since $\R$ is deterministic, appending some prefix or suffix 
 to~$w_{\loc}$  achieves the same objective. This allows us to use a variant of the \emph{pairwise synchronization} technique to 
shrink the infinite set $\locs \times \domain^{\abs{\reg}}$ to a finite set, by successively inputting $w_{\loc}$
 for a location~$\ell$ that appears with infinitely many data in the current successor set of~$\locs \times \domain^{\abs{\reg}}$.

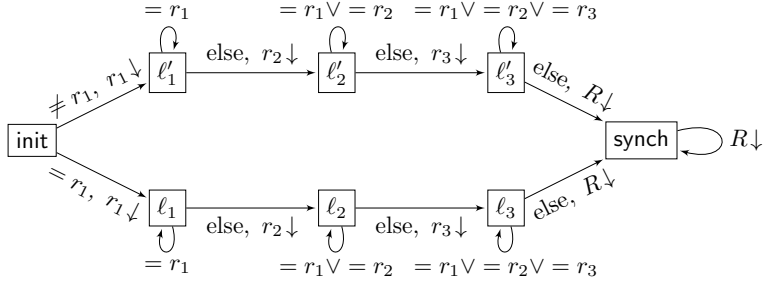
\begin{figure}[t]
\begin{center}
\scalebox{.9}{
    
\begin{tikzpicture}[>=latex',shorten >=1pt,node distance=2.3cm,on grid,auto,
roundnode/.style={circle, draw=black, minimum size=2mm},
squarednode/.style={rectangle, draw=black,  minimum size=2mm}]

\node[squarednode] (1) at (0,0) {$\init$};
\node[squarednode]  (2) [right =2cm of 1,yshift=-1cm]  {$\ell_{1}$};
\node[squarednode]  (3) [right =2cm of 1,yshift=1cm]  {$\ell'_{1}$};

\node[squarednode] (4) [right =2.5cm of 2] {$\ell_{2}$};
\node[squarednode] (5) [right =2.5cm of 4] {$\ell_{3}$};

\node[squarednode] (6) [right =9cm of 1] {$\synch$};
\node[squarednode] (7) [right =2.5cm of 3] {$\ell'_{2}$};
\node[squarednode] (8) [right =2.5cm of 7] {$\ell'_{3}$};
     
% LOOPS

\path[->] (2) edge [loop below]  node [below] {$=r_{1}$} ();
\path[->] (4) edge [loop below]  node [below] {$=r_{1}\vee =r_{2}$} ();
\path[->] (5) edge [loop below]  node [below] {$=r_{1}\vee =r_{2}\vee =r_{3}$} ();
\path[->] (3) edge [loop above]  node [above] {$=r_{1}$} ();
\path[->] (7) edge [loop above]  node [above] {$=r_{1}\vee =r_{2}$} ();
\path[->] (8) edge [loop above]  node [above] {$=r_{1}\vee =r_{2}\vee =r_{3}$} ();

% EDGES

\path[->] (1) edge node [midway,sloped,below] {$=r_{1},~r_{1}\!\downarrow$}  (2);

\path[->] (2) edge node [midway,sloped,below] {$\text{else},~r_{2}\!\downarrow$}  (4);
\path[->] (4) edge node [midway,sloped,below] {$\text{else},~r_{3}\!\downarrow$}  (5);
\path[->] (5) edge node [midway,below,sloped] {$\text{else},~\reg\!\downarrow$}  (6);

\path[->] (3) edge node [midway,sloped,above] {$\text{else},~r_{2}\!\downarrow$}  (7);
\path[->] (7) edge node [midway,sloped,above] {$\text{else},~r_{3}\!\downarrow$}  (8);
\path[->] (8) edge node [midway,above,sloped] {$\text{else},~\reg\!\downarrow$}  (6);

\path[->] (6) edge [loop right]  node [right] {$\reg \!\downarrow$} ();

\path[->] (1) edge node [midway,sloped,above] {$\neq r_{1},~r_{1}\!\downarrow$}  (3);
  \end{tikzpicture}
 }
\end{center}
 \caption{A DRA with registers~$r_1,r_2,r_3$ and the single letter~$a$ (omitted from transitions) that can be synchronized in
the configuration~$(\synch, x_4)$ by the data 
word~$w_{\synch}=(a,x_1)(a,x_2)(a, x_3)(a, x_4)$ if 
$\{x_1, x_2, x_3, x_4\}\subseteq \domain$ 
is a set of 4 distinct data.} 
\label{fig:UpdateEfficienyRegisterAutomata}
\end{figure}

\newcommand{\lemmafiniteD}{
	For all DRAs for which there exist synchronizing data words, 
	there exists some data word~$w$ 
	such that $\data(w) \leq \abs{\reg}$ and $\post(\locs\times \domain^{\abs{\reg}}, w)\subseteq \locs \times (\data(w))^{\abs{\reg}}$.
}
\begin{lemma} \label{lemmafiniteD}
\lemmafiniteD
\end{lemma}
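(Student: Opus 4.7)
The plan follows the two observations preceding the lemma. Fix a synchronizing word $w_\synch$ for $\R$ with final valuation $\nu_\synch$, and choose a set $V \subseteq \domain$ of exactly $|\reg|$ data values that contains $\nu_\synch(\reg)$ and is padded, if needed, with values fresh to $\data(w_\synch)$.

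First, I would show that from every configuration $(\ell, \nu)$ there is a word over $\alphabet \times V$ that drives it into the target set $G := \locs \times V^{|\reg|}$. The construction imitates the run of $w_\synch$ from $(\ell, \nu)$: by observation~(1), every register whose initial value lies outside $\data(w_\synch)$ must be updated along this run, and every such update occurs at a $\neq$-guarded transition (possibly approached through non-updating $\neq$-guarded transitions). I would replay this trajectory using data from $V$ by maintaining a bijection $\pi_i$ between the distinct values appearing in the original valuation $\nu_i$ and those of the replay valuation $\tilde{\nu}_i$, chosen so that the equality pattern between the current input and the current valuation matches at every step. Concretely, when the original input $d_i$ lies in $\nu_i(\reg)$ I set $e_i := \pi_{i-1}(d_i) \in V$; otherwise I pick $e_i$ in $V \setminus \tilde{\nu}_{i-1}(\reg)$. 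The invariant making $|V| = |\reg|$ exactly sufficient is that, so long as the replay has not yet entered $G$, at least one register still holds a non-$V$ value, so $\tilde{\nu}_{i-1}(\reg)$ contains strictly fewer than $|\reg|$ distinct elements of $V$ and a fresh choice remains available.

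Second, I would combine these per-configuration words into a single data-efficient word $w$ by observing that the equality type of a configuration with respect to $V$ takes only finitely many values, and that inputs from $\alphabet \times V$ induce a well-defined finite-state dynamics on these types. The target $G$ is invariant under every word over $\alphabet \times V$, because updates place $V$-values into registers while non-updates preserve the existing $V$-values. Thus, for any two equality types outside $G$, concatenating the reach-$G$ word for one type with the reach-$G$ word for where the second type lands under the first yields a single word sending both into $G$. Iterating this classical pairwise-synchronization pattern over the finitely many equality types strictly decreases the number of outside-$G$ types per iteration, terminating in a word $w$ with $|\data(w)| \leq |\reg|$ and $\post(\locs \times \domain^{|\reg|}, w) \subseteq \locs \times \data(w)^{|\reg|}$.

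The main obstacle is the replay construction of the first step: verifying that the bijection $\pi_i$ can be updated consistently through every transition, and in particular that the tight accounting of $|V| = |\reg|$ fresh values never runs out before the replay enters $G$. The bound is sharp precisely because the stock of still-fresh elements of $V$ is replenished exactly at the rate at which registers are filled with $V$-values. Once this invariant is secured, the combination in the second step reduces to a routine finite-state synchronization argument.
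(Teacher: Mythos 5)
Your overall strategy---drive every configuration into $G=\locs\times V^{|\reg|}$ by replaying the synchronizing run with data drawn from a $k$-element set $V$, then chain the resulting words over the finitely many equality types---is in spirit the strategy the paper follows (there the replay is packaged as an induction on the number of ``captured'' registers, via symbolic states). Your second step is sound: $G$ is closed under inputs from $\alphabet\times V$, a word over $\alphabet\times V$ acts uniformly on each equality type, and the chaining is the standard pairwise argument.

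The gap is in the replay itself. You assert that when $d_i\in\nu_{i-1}(\reg)$ you may set $e_i:=\pi_{i-1}(d_i)\in V$. But $\pi_{i-1}$ is the identity on every register that has not yet been updated, because the replay starts from the very valuation you are trying to drive into $G$ ($\tilde\nu_0=\nu_0=\nu$). So if the original input $d_i$ equals the initial content $\nu(r)$ of a not-yet-updated register $r$ with $\nu(r)\notin V$ --- which happens for every initial valuation whose registers contain data of $w_{\synch}$ lying outside $V$, and such valuations exist as soon as $|\data(w_{\synch})|>|\reg|$ --- then preserving the equality pattern forces $e_i=\nu(r)\notin V$. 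The replay word then contains a datum depending on the arbitrary initial valuation, so neither the data bound $|\data(w)|\le|\reg|$ nor the type-uniformity your second step relies on survives. The paper sidesteps exactly this: both its induction base and its induction step analyse the forced run only from valuations whose uncaptured registers hold data \emph{outside} $\data(v)$, where no such collision can occur, and then argue separately that the word so obtained also handles the remaining valuations (a register that already matches the input datum simply keeps, or is overwritten by, a value from the small data set). Your argument can be repaired along the same lines: perform the replay only for a representative of each equality type whose non-$V$ registers hold values outside $V\cup\data(w_{\synch})$, and then invoke the type-uniformity of $(\alphabet\times V)$-words to transfer the conclusion to every configuration of that type. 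As written, however, the invariant ``$\pi_{i-1}$ maps into $V$'' is false, and the tight accounting that makes $|V|=|\reg|$ suffice does not go through.
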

\begin{proof}
Let $\R =\tuple{\locs,\reg,\alphabet,T}$ be a DRA on the data domain~$\domain$ with $k\geq 1$ registers. 
Let
$v$ be a synchronizing data word for~$\R$ with $N=\abs{\data(v)}$ distinct data.
Suppose that $k<N$; otherwise the statement of the lemma trivially holds.

For all $1\leq i \leq k$, we say that \emph{$x_i$ is the $i$-th datum in the synchronizing data word~$v=(a_1,d_1)(a_2,d_2)\cdots (a_n,d_n)$}
 if there exists~$ j \leq k$ such that
$x_i=d_j$,  
$x_i\not \in \{d_1,\cdots,d_{j-1}\}$ and $\abs{\{d_1,\cdots,d_{j}\}}=i$. 
	For every $i\leq k$, denote by  $\langle L,i\rangle$  the set
$$\langle L,i\rangle =L \times \{\val\in \domain^k\mid \exists \reg'\subseteq \reg \, \cdot \, \abs{R'}\geq i \, \cdot \, \forall r\in \reg' \cdot
 \val(r)\in \{x_1,\cdots,x_i\}\}.$$

We {\bf Claim} that for  all locations~$\ell \in \locs$ and all~$1 \leq i \leq k$, there exists some data word~$u_i$ such that
\begin{itemize}
	\item $\data(u_i) \subseteq \{x_1,x_2,\cdots,x_i\}$, and
	\item 
		$\post(\{\ell\} \times\domain^k,u_i)\subseteq  \langle \locs,i\rangle$, 
	meaning that after reading~$u_i$ all reached configurations have at least~$i$ registers with values from~$\{x_1,x_2,\cdots,x_i\}$.
\end{itemize}
For $\loc \in \locs$, let $w_{\ell}=u_k$ satisfy the above condition. 
Set $S_0 = \locs \times \domain^k$ and $w_0=\varepsilon$. 
Then, for all $i=1,\cdots,\abs{\locs}$, repeat the following: 
if there exists some $\loc\in \locs$ such that $\{\loc\}\times (\domain \setminus \{x_1,\cdots,x_k\})^k \cap S_{i-1} \neq \emptyset$,  
then set $w_i = w_\loc$ and $S_i=\post(S_{i-1},w_i)$. Otherwise set $w_i=w_{i-1}$ and $S_i=S_{i-1}$.
Observe that $w=(w_i)_{1\leq i \leq \abs{\locs}}$  proves the statement of Lemma. It remains to prove the {\bf Claim}.

 \medskip

\noindent {\bf Proof of Claim.}  Let $\hat{\ell}$ be some location in the DRA~$\R$. The proof is by an induction on~$i$.

\noindent {\bf Base of induction.}
Let  $\wait=\{\hat{\loc}\} \times (\domain\setminus \data(v))^{k}$ be the set of  
configurations with location~$\hat{\loc}$ such that the data stored in all $k$~registers
is not in~$\data(v)$. 
 Note that for all configurations $(\hat\loc,\val)\in\wait$, 
the unique run of~$\R$ starting in $(\hat\loc,\val)$ on (a prefix of)~$v$ consists of the same sequence of the 
following transitions: 
\begin{itemize}
\item a prefix of transitions $\ttto{\bigwedge_{r\in \reg}\neq r}{}{\emptyset}$, 
 with inequality guards on all registers and with no register update,

\item followed by a transition $\ttto{ \bigwedge_{r\in \reg} \neq r}{}{\up}$, with inequality guard on all registers 
and with an update  for some non-empty set $\up\subseteq \reg$.
\end{itemize}

Otherwise, the two runs starting from any pair of configurations $(\hat{\loc},\valuation_1),(\hat{\loc},\valuation_2)\in \wait$
 with unequal valuations~$\valuation_1\neq \valuation_2$ would end up in distinct configurations,
say $(\loc,\valuation'_1),(\loc,\valuation'_2)$ with $\valuation'_1\neq \valuation'_2$.
This is a contradiction to the fact that the data word $v$ is  synchronizing.

Now let the inequality-guarded transition~$\ttto{\bigwedge_{r\in \reg} \neq r}{}{\up}$, updating the registers in~$\up$, be fired at the $j$-th input~$(a_j,d_j)$
while reading~$v$; see Figure~\ref{fig:tree-ineq}.
We prove that the data word $u_1=(a_1,x_1)(a_2,x_1)\cdots(a_j,x_1)$ with $\data(u_1)=\{x_1\}$ 
guides~$\{\hat{\loc}\}\times \domain^k$ to a subset in which each configuration has some register with value~$x_1$:
$\post(\{\hat{\loc}\}\times \domain^k, u_1)\subseteq  \langle \locs, 1\rangle$.
This phenomenon is depicted in Figure~\ref{fig:tree-ineq-d1} and can be argued as follows.
Observe that $x_1=d_1$ is the first input datum; thus 
after inputting~$(a_1,x_1)$ the set of successors is a disjoint union of two branches:
\begin{itemize}
	\item either at least one register~$r$ has datum~$x_1$ after  the transition $\tto{\bigvee_{r\in \reg}= r}{a_1}$.
		All the following successors in this branch, on input~$(a_2,x_1)(a_3,x_1)\cdots(a_j,x_1)$, preserve the datum $x_{1}$ in the register $r$;

	\item	or none of the registers is assigned~$x_1$ after  the transition $\tto{\bigwedge_{r\in \reg}\neq r}{a_1}$. 
		By inputting ~$(a_2,x_1)(a_3,x_1)\cdots(a_j,x_1)$, all the  following successors in this branch, thus, take inequality-guarded transitions, 
		and would not update any registers, except for the last transition~$\ttto{ \bigwedge_{r\in \reg} \neq r}{}{\up}$ fired by~$(a_j,x_1)$. 

\end{itemize}
The above argument proves that $u_1$ with $\data(u_1)\subseteq \{x_1\}$ is such that 
$\post(\{\hat{\loc}\}\times \domain^k, u_1)\subseteq \langle \locs,1\rangle$. 
The base of induction holds. 

\begin{figure}[h]
\begin{minipage}{.45\textwidth}
 \begin{center}
\scalebox{.7}{\tikzset{
  treenode/.style = {shape=rectangle, rounded corners,
                     draw, align=center},
  root/.style     = {treenode},
  env/.style      = {treenode},
  dummy/.style    = {circle,draw}
}
\begin{tikzpicture}
  [
    grow                    = down,
    sibling distance        = 11em,
    level distance          = 5em,
%    edge from parent/.style = {draw,-latex}, 
  ]

  \node at (2,-2.5) {$\dots$}; 
 
  \node[root] (t){$\{\hat{\ell}\}\times \domain^k$} %root
    child { node[env] (t1) {$\{\ell_{2}\}\times(\domain\setminus\{d_{1}\})^k$}
		child { node (t10) {if $d_1\neq d_2$ then} 
		edge from parent[->] node [midway,left] {$\bigvee_{r\in\reg} =r\;$} } 
    child { node (t11) {$\vdots$}
		child { node [env] (t111)  {$\{\ell_{j}\}\times (\domain\setminus\{d_{1}\times d_{2},\cdots,d_{j-1}\})^k $}
		child { node[env] (t1111) {$\{ \ell_{j+1}\}\times\{\nu\in \domain^k \; | \; \nu(r)= \left\{ \begin{array}{ll} d_{j}  & r\in\up \\ 	d\in \domain\setminus \{d_{1},\cdots, d_{j}\} & r\not\in\up \end{array}\right. \}$}
    edge from parent[->] node [right] {$\bigwedge_{r\in\reg} \neq r , \up\! \downarrow$} }
		edge from parent[->] node [ midway, left] {$\bigwedge_{r\in\reg} \neq r \;$} }
		edge from parent[->] node [ midway, left] {$ \bigwedge_{r\in\reg} \neq r \;$ } }
		edge from parent[->] node [ midway, left] {$\;\; \bigwedge_{r\in\reg} \neq r$} }
		child { node[env] (t0) {$\{\ell_{1}\}\times \domain^k \setminus(\domain\setminus\{d_{1}\})^k$} 
		edge from parent[->] node [midway,right] {$\bigvee_{r\in\reg} =r\;$} };
      
\end{tikzpicture}}\end{center}
\caption{Runs of~$\R$ over the data word~$(a_1,d_1)(a_2,d_2)\cdots(a_j,d_j)$.}
\label{fig:tree-ineq}
\end{minipage}
\begin{minipage}{.05\textwidth}\end{minipage}
\begin{minipage}{.5\textwidth}
 \begin{center}
\scalebox{.7}{\tikzset{
  treenode/.style = {shape=rectangle, rounded corners,
                     draw, align=center},
  root/.style     = {treenode},
  env/.style      = {treenode},
  dummy/.style    = {circle,draw}
}
\begin{tikzpicture}
  [
    grow                    = down,
    sibling distance        = 15em,
    level distance          = 5em,
%    edge from parent/.style = {draw,-latex}, 
  ]
		
  \node[root] (t){$\{\hat{\ell}\}\times \domain^k$} %root
    child { node[env] (t1) {$\{\ell_{2}\}\times(\domain\setminus\{x_{1}\})^k$}
		child { node (t10) {\text{No successor!}} 
		edge from parent[->] node [midway,left] {$\bigvee_{r\in\reg} =r\;\;$} } 
    child { node (t11) {$\vdots$}
		child { node [env] (t111)  {$\{\ell_{j}\}\times (\domain\backslash\{x_{1}\})^k $}
			child { node[xshift=-2cm,yshift=2mm] (t110) {\text{No successor!}} 
		edge from parent[->] node [midway,left] {$\bigvee_{r\in\reg} =r\;\;$} } 
		child { node[env,xshift=-1.3cm] (t1111) {$\{ \ell_{j+1}\}\times\{\nu\in \domain^k \; | \; \nu(r)= \left\{ \begin{array}{ll} x_{1}  & r\in\up \\ 	d\in \domain\setminus \{x_{1}\} & r\not\in\up \end{array}\right. \}$}
    edge from parent[->] node [right] {$\; \; \bigwedge_{r\in\reg} \neq r , \up\!\downarrow$} }
		edge from parent[->] node [ midway, left] {$\bigwedge_{r\in\reg} \neq r\;\;$} }
		edge from parent[->] node [ midway, left] {$\bigwedge_{r\in\reg} \neq r\;\; $} }
		edge from parent[->] node [ midway, left] {$ \bigwedge_{r\in\reg} \neq r\;\;$} }
		    child { node[env] (t0) {$\{\ell_{1}\}\times \domain^k \setminus(\domain\setminus\{x_{1}\})^k$} 
	% child { node (t00){$\vdots$} 
		child { node  (t000) {
		\parbox[c]{4cm}{Note that all successors of this branch always preserve  the value $x_{1}$ on the register $r$ which satisfies the  guard $=r$ of the first transition.
		 }}
		edge from parent[dotted] node [midway,right] {};
		}
		edge from parent[->] node [midway,right] {$\bigvee_{r\in\reg} =r\;\;$} } ;
      
\end{tikzpicture}}\end{center}
 \caption{Runs of~$\R$ over the data word~$u_1=(a_1,x_1)(a_2,x_1)\cdots(a_j,x_1)$}
 \label{fig:tree-ineq-d1}
\end{minipage}
\end{figure}

\noindent {\bf Step of induction.} Assume that the induction hypothesis holds for $i-1$, namely,
there exists some word~$u_{i-1}$ with $\data(u_{i-1})\subseteq \{x_1,\cdots,x_{i-1}\}$ such that 
$\post(\{\hat{\loc}\}\times \domain^k, u_{i-1})\subseteq \langle \locs, i-1\rangle$.
To  construct $u_i$, we define the concept of a symbolic state:
we say $(\loc,\up,\val,j)$ is a symbolic state if
$\loc\in \locs$, the set~$\up \subseteq \reg$ of registers is such that $\abs{\up}\geq \min(j,k)$ and
$\val \in \{x_1,\cdots,x_{j}\}^k$ and $j\leq N$. 
The semantics of~$(\loc,\up,\val,j)$ is the following set:
$$\semantics{(\loc,\up,\val,j)}=\{\loc\}\times \{\val'\in \domain^k \mid \val' (r)=\val(r) \text{ if } r\in \up\}.$$
Denote by $\Gamma$ the set of  all such symbolic states~$(\loc,\up,\val,i-1)$. 
By definition, the set~$\Gamma$ is finite. 
Now we can construct~$u_i$ as follows.
Let $S_{0}=\post(\{\hat{\loc}\}\times \domain^k, u_{i-1})$ and $w_0=u_{i-1}$.
Recall that $S_0 \subseteq \langle \locs, i-1\rangle$
and observe that 
$S_0 \subseteq \bigcup_{q \in \Gamma}\semantics{q}$.
Start with $j=0$ and,  while $S_j\neq \emptyset$, 
pick a symbolic state~$q=(\loc,\up,\val,i-1)$ such that $\semantics{q}\cap S_{j}\neq \emptyset$
and construct a word~$u_q$ (as explained in the details below) such that 
\begin{itemize}
	\item $\data(u_q)=\{x_1,x_2,\cdots,x_i\}$, and
	\item $\post(\semantics{q},u_q)\subseteq \langle \locs, i\rangle$.
\end{itemize}
Let $S_{j+1}=\post(S_{j}\setminus \semantics{q},u_q)$ and $w_{j+1}=w_{j}\cdot u_q$.
Repeat the loop for $j+1$.
Observe that $u_{i}=w_{j^*}$, where $j^{*}\leq \abs{S_0}$ is such that $S_{j^*}=\emptyset$, satisfies  the induction statement.

\medskip

Below, given a symbolic state $q=(\loc,\up,\val,i-1)$, 
the aim is to construct the data word~$u_q$.  
Without loss of generality, we assume that~$\abs{\up}=i-1$; otherwise~$u_q=u_{i-1}$.
Let 
$$
\wait=\semantics{(\loc,\up,\val,i-1)} \cap \{\loc\}\times \{\valuation' \mid \valuation' (r)\in \domain \setminus \data(v) \text{ if } r \not\in \up\}
$$
be the set of all configurations in  the symbolic state~$q$, where all data stored in the registers~$r\not \in \up$ are not in $\data(v)$.
Similarly to the induction base, 
no matter what the register valuation in a configuration in~$\wait$ looks like, the unique run of~$\R$ on the synchronizing word~$v=(a_1,d_1)(a_2,d_2)\cdots (a_n,d_n)$ starting in that configuration  takes the same sequence 
of transitions. 
Since $\val\in \{x_0,\cdots,x_{i-1}\}^k$, 
after inputting successive data from~$\data(v)$, all  successors of configurations in $\wait$ are elements of a symbolic state.  
For all $0 \leq j \leq n$, let the symbolic state $q^j=(\loc^{j},\up^j,\val^j,N)$ be such that
$\semantics{q^0}=\semantics{q} \cap \wait$, and  $\post(\semantics{q^{j-1}},(a_{j},d_{j})) \subseteq \semantics{q^j}$ if $j\geq 1$.

\noindent  In the sequel, we argue that there exists some~$ 1\leq m \leq n$ 
such that, in the sequence of transitions from one symbolic state to another symbolic state over the 
prefix~$(a_1,d_1)(a_2,d_2)\cdots (a_m,d_m)$ of~$v$ (the first $m$~inputs), the following holds:
\begin{itemize}
	\item   on inputting~$(a_j,d_j)$ for all $1\leq j < m$,
	the transition~$\ttto{(\bigwedge_{r\in \Lambda_j} =r)\wedge (\bigwedge_{r\not \in \Lambda_j}\neq r)} {a_j}{\Gamma_j}$ 
	with $\Lambda_j,\Gamma_j \subseteq \up$ is taken
	from $q^{j-1}$ to~$q^{j}$.
	It implies that $\val^{j-1}(r)=d_{j}$ for all $r\in \Lambda_j$, and $\val^{j}(r)=d_{j}$ for all $r\in \Gamma_j$.

	\item and  on inputting~$(a_m,d_m)$, the 
	transition~$\ttto{(\bigwedge_{r\in \Lambda_m} =r)\wedge (\bigwedge_{r\not \in \Lambda_m} \neq r)} {a_m}{\Gamma_m}$, 
	that is taken
	from $q^{m-1}$ to~$q^{m}$,
	is
	such that~$\Lambda_m\subseteq \up^m$ whereas~$\Gamma_m\not \subseteq \up^m$.
\end{itemize}

\noindent
Now from the prefix~$(a_1,d_1)(a_2,d_2)\cdots (a_m,d_m)$ of~$v$, \ie, 
the first $m$~inputs, and from the set of data  $\{x_1,x_2,\cdots,x_i\}$, 
we construct the word~$u_q=(a_1,y_1)(a_2,y_2)\cdots(a_m,y_m)$ for $q=(\loc,\up,\val,i-1)$ as follows: 
for all $1\leq j \leq m$,

\begin{itemize}
\item if $\Lambda_j\neq \emptyset$, \ie, some register~$r\in \up$ already stores the datum~$d_{j}$,
then $y_{j}=d_j$.
\item if $\Lambda_j= \emptyset$, \ie, none of the registers~$r \in \up$ stores the datum~$d_j$, then
$y_{j}=d$ where $d\in \{x_1,x_2,\cdots,x_i\}\setminus \{\val^{j-1}(r)\mid r\in \up\}$.
The existence of such~$d$ is guaranteed since~$\abs{\up}=i-1$ and $\abs{\{x_1,x_2,\cdots,x_i\}}=i$.
Moreover, since the transitions $\ttto{(\bigwedge_{r\in \up}\neq r)} {a_j}{\Gamma_j}$ 
have inequality guards for all registers, then changing the datum from~$d_j$ to~$y_j$
would result only in taking the same transition. 
\end{itemize}

\noindent Observe that $\data(u_q)\subseteq\{x_1,\cdots,x_{i}\}$. 
As a result, all registers that are updated along
the runs of~$\R$ over~$u_q$ store some datum from~$\{x_1,\cdots,x_{i}\}$.
This  argument shows that $\post(\semantics{q},u_q)\subseteq \langle \locs, i\rangle$.
This concludes the step of induction, and completes the proof.
\end{proof}

After reading some word that shrinks the infinite set of configurations of DRAs to a finite set~$S$ of configurations,
we generalize the \emph{pairwise synchronization} technique~\cite{Volkov08} to finally  synchronize configurations in~$S$.
By this generalization, we achieve the following Lemma~\ref{lemmasyncD}, 
for which the detailed proof  can be found in Appendix~\ref{append_DRAs}.

\newcommand{\lemmasyncD}{
	For all DRAs for which there exist synchronizing data words, there exists a synchronizing data word $w$ such that $\abs{w}\leq 2\abs{\reg}+1$.
}

\begin{lemma} \label{lemmasyncD}
\lemmasyncD
\end{lemma}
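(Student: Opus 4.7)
The plan is to extend Lemma~\ref{lemmafiniteD} with a pairwise synchronization phase that uses at most $\abs{\reg}+1$ additional fresh data values. After applying the word $w_0$ from Lemma~\ref{lemmafiniteD}, the configuration set is reduced to a finite set $S_0 := \post(\locs \times \domain^{\abs{\reg}}, w_0) \subseteq \locs \times \data(w_0)^{\abs{\reg}}$, with $\abs{\data(w_0)} \leq \abs{\reg}$. I fix a set $Y := \data(w_0) \cup F$ of size $2\abs{\reg}+1$, where $F$ consists of $\abs{\reg}+1$ fresh data values disjoint from $\data(w_0)$.

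The second phase produces a word $w_1$ with $\data(w_1) \subseteq Y$ that synchronizes $S_0$, so that $w_0 \cdot w_1$ synchronizes the original DRA with data efficiency at most $2\abs{\reg}+1$. I proceed by iterated pairwise merging, maintaining the invariant that the current set $S_t \subseteq \locs \times Y^{\abs{\reg}}$; this is automatically preserved by any input with data in $Y$ since register updates copy input data. At each iteration, pick two configurations $q, q' \in S_t$, find a word $u_t$ with $\data(u_t) \subseteq Y$ that merges $q$ and $q'$ (sends them to the same configuration), and set $S_{t+1} := \post(S_t, u_t)$. The remaining configurations in $S_t$ may evolve arbitrarily but stay in $\locs \times Y^{\abs{\reg}}$, so the invariant holds and $\abs{S_{t+1}} < \abs{S_t}$. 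After at most $\abs{S_0} - 1$ iterations the set is a singleton, and $w_1 := u_1 u_2 \cdots$ is the required merging word.

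The main obstacle is constructing each pairwise merging word $u_t$ with data in $Y$. I use a data-renaming argument: since the original DRA is synchronizable, some data word $v$ sends $q$ and $q'$ to a common configuration. Construct $u_t$ by renaming each input $(a, d)$ of $v$ to an input $(a, d')$ with $d' \in Y$, preserving the equality pattern of $d'$ with the current register contents of both runs. Proceeding by induction on the position in $v$, the registers of both runs always lie in $Y$; if $d$ matches some register of either run then already $d \in Y$ and we can set $d' := d$, otherwise $d$ is fresh relative to both runs and we must pick $d'$ among the values in $Y$ that avoid the combined register contents of the two runs. Since the two runs together contain at most $2\abs{\reg}$ distinct register values and $\abs{Y} = 2\abs{\reg}+1$, at least one admissible choice always remains. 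Preservation of equality patterns forces both runs to fire the same transitions under $u_t$ as under $v$, so $u_t$ still merges $q$ and $q'$, completing the argument.
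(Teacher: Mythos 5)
Your overall strategy is the same as the paper's: first apply Lemma~\ref{lemmafiniteD} to collapse $\locs\times\domain^{\abs{\reg}}$ into a finite set whose register contents come from at most $\abs{\reg}$ data, then fix a superset $Y$ of size $2\abs{\reg}+1$ and run pairwise synchronization inside $\locs\times Y^{\abs{\reg}}$, obtaining each pairwise merging word from an arbitrary one by renaming its data into $Y$; the counting argument (at most $2\abs{\reg}$ register values across the two runs versus $\abs{Y}=2\abs{\reg}+1$) is exactly the paper's. The paper packages the renaming as an induction on the length of the merging word via an explicit data bijection $\pi$, while you do a forward pass over $v$, but these are the same argument.

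The one step that would fail as written is your renaming rule: ``if $d$ matches some register of either run then already $d\in Y$ and we can set $d':=d$.'' This breaks as soon as a fresh datum has been renamed and stored. Concretely, suppose position $1$ of $v$ carries a datum $5\notin Y$ that gets stored in register $r$; you rename it to some $y\in Y$, so the renamed runs now hold $y$ in $r$ while the original runs hold $5$. If position $2$ of $v$ carries $5$ again and the original transition is guarded by $=\!r$, then neither branch of your rule works: $5$ is not in $Y$, and any choice $d'\neq y$ (in particular $d':=5$, or a value avoiding all current register contents) violates the guard $=\!r$ in the renamed run, so the renamed runs diverge from the original ones and $u_t$ need not merge $q$ and $q'$. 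The repair is exactly what the paper's bijection does: maintain the injective correspondence between original and renamed register contents; when $d$ equals the original content of some register, set $d'$ to the \emph{renamed} content of that register (well defined by injectivity), and only when $d$ matches no register of either original run pick a value of $Y$ outside the at most $2\abs{\reg}$ renamed register contents. With this correction the rest of your argument goes through.
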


Given a $1$-DRA~$\R$, the synchronization problem can be solved as follows: 
($1$) check that from each location~$\loc$ an update on the
single register is achieved by going through  inequality-guarded transitions, 
which can be done in~$\nlogspace$. 
Lemma~\ref{lemmafiniteD} ensures that feeding~$\R$ consecutively with  a single datum~$x\in \domain$ is sufficient 
for this phase and 
 the set of successors of~$\locs \times D$ would be a subset of~$\locs \times \{x\}$.
Next ($2$) pick an arbitrary set~$\{x,y,z\}$ of data including~$x$,  by Lemma~\ref{lemmasyncD}
and the pairwise synchronization technique, the  problem  reduces to the synchronization problem for
DFAs where  data  in registers and  input data extend  locations and the alphabet:
 $Q=\locs \times \{x,y,z\}$ and $\alphabet \times \{x,y,z\}$.
Since a $1$-DRA, where all transitions update the register and are  guarded with $\true$,  is equivalent to a DFA, we obtain the next theorem.

\newcommand{\theosDRA}{
The synchronization problem for $1$-DRAs is $\nlogspace$-complete.
}
\begin{theorem} \label{theosDRA}
\theosDRA
\end{theorem}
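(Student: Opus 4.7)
The plan is to establish both NLOGSPACE-membership and NLOGSPACE-hardness for $1$-DRA synchronization, giving the interreducibility with DFA synchronization that was announced before the theorem. The membership direction leverages Lemmas~\ref{lemmafiniteD} and~\ref{lemmasyncD} to reduce to a polynomial-size DFA synchronization instance, and the hardness direction uses a trivial logspace reduction from DFA synchronization (known to be NLOGSPACE-complete).

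For membership I proceed in two phases. In phase~1 I apply Lemma~\ref{lemmafiniteD} with $k=1$: a single datum $x\in\domain$ suffices to drive $\locs\times\domain$ into the finite set $\locs\times\{x\}$, provided that from every location $\loc\in\locs$ the single register can eventually be updated along a path whose guards are all satisfied by register contents different from $x$ (that is, inequality-guarded transitions). The existence of such a path from each individual location is a reachability question in the transition graph of $\R$, decidable in NLOGSPACE independently of the infinite data domain. In phase~2 I apply Lemma~\ref{lemmasyncD} with $\abs{\reg}=1$, which caps the remaining data efficiency at $3$. Fixing three distinct data $\{x,y,z\}\subseteq\domain$, I build the induced DFA $\A$ with state set $\locs\times\{x,y,z\}$ and alphabet $\alphabet\times\{x,y,z\}$, whose transitions are obtained by restricting the deterministic successor relation of $\R$ to valuations and inputs drawn from $\{x,y,z\}$. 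A synchronizing data word for $\R$ exists iff some word in $(\alphabet\times\{x,y,z\})^{+}$ collapses the subset $\locs\times\{x\}$ of $\A$ to a single state; this is a polynomial-size subset synchronization question, equivalent to pairwise synchronizability of every pair in $\locs\times\{x\}$, and decidable in NLOGSPACE by the standard product construction already recalled in the preliminaries.

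For hardness I reduce from DFA synchronization. Given a DFA $\A=\tuple{Q,\alphabet,\Delta}$, I build the $1$-DRA $\R_\A$ with locations $Q$, one register $r$, and, for every $(q,a)$ with $\Delta(q,a)=q'$, a single transition $q\ttto{\true}{a}{\{r\}}q'$. Since all guards are trivially satisfied and the register is blindly overwritten at each step, the location component of $\R_\A$ mirrors $\A$ regardless of the data stream: a word $u=a_1\cdots a_n\in\alphabet^{+}$ synchronizes $\A$ to $q_u$ iff $(a_1,d)\cdots(a_n,d)$ synchronizes $\R_\A$ to $(q_u,d)$ for any fixed $d\in\domain$. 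The reduction is manifestly computable in logarithmic space.

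The subtlest point is justifying that phase~1 of the membership argument really certifies the shrinking of the infinite configuration space in NLOGSPACE; the key observation saving us is that while the register holds a datum absent from the input stream, the only transitions that can be fired are inequality-guarded, so the check collapses to a propositional reachability problem in the transition graph, with no need to enumerate valuations. Once this is in place, the rest is routine bookkeeping through the two already-proved lemmas.
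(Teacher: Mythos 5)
Your proof is correct and follows essentially the same route as the paper: phase~1 is the $\nlogspace$ reachability check justified by Lemma~\ref{lemmafiniteD}, phase~2 reduces via Lemma~\ref{lemmasyncD} to a DFA over $\locs\times\{x,y,z\}$ and $\alphabet\times\{x,y,z\}$, and hardness encodes a DFA as a $1$-DRA with $\true$-guarded, always-updating transitions. One small caution: your claim that the subset-synchronization question for $\locs\times\{x\}$ is ``equivalent to pairwise synchronizability of every pair in $\locs\times\{x\}$'' is not a valid equivalence in general (subset synchronization of DFAs is PSPACE-hard and is not characterized by pairwise mergeability of the subset, since merging one pair can push the current set outside $\locs\times\{x\}$); it is sound here only because phase~1 lets you lift pairwise mergeability on $\locs\times\{x\}$ to pairwise mergeability of \emph{all} configuration pairs, which is exactly what the paper exploits by reducing instead to full synchronization of the DFA on $\locs\times\{x,y,z\}$.
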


We provide a family of DRAs, for which a linear bound on the data efficiency of synchronizing data words, depending on the number of registers, is
necessary. 
This necessary and sufficient bound is crucial to establish membership of synchronizing DRAs in $\pspace$. 

\newcommand{\lemmafamilyDRA}{
	There is a family of single-letter DRAs $(\R_n)_{n\in \nat}$, 
	with $n=\abs{\reg}$ registers and $\mathcal{O}(n)$ locations,  such that  all  synchronizing data words
	have data efficiency~$\Omega(n)$.
}

\begin{lemma} \label{lemmafamilyDRA}
\lemmafamilyDRA
\end{lemma}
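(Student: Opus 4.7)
The plan is to generalise the DRA of Figure~\ref{fig:UpdateEfficienyRegisterAutomata} to an arbitrary number $n$ of registers. Define $\R_n$ with registers $\reg=\{r_1,\dots,r_n\}$, a single letter $a$, and locations $\init,\loc_1,\dots,\loc_n,\loc'_1,\dots,\loc'_n,\synch$, totalling $2n+2 = O(n)$ locations. From $\init$ we fork via $\init \ttto{=r_1}{a}{r_1} \loc_1$ and $\init \ttto{\neq r_1}{a}{r_1} \loc'_1$. For each $i<n$, location $\loc'_i$ (and symmetrically $\loc_i$) carries a self-loop with guard $\bigvee_{j\leq i}=\!r_j$ together with an advancing transition with guard $\bigwedge_{j\leq i}\neq\!r_j$ that updates $r_{i+1}$ and moves to $\loc'_{i+1}$ (resp.\ $\loc_{i+1}$). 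Finally, $\loc'_n$ and $\loc_n$ have else-transitions that update all of $\reg$ and lead to $\synch$, which self-loops and updates every register on every input.

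Existence of a synchronising word is immediate: feeding $n+1$ pairwise distinct data $x_1,\dots,x_{n+1}$ in the word $(a,x_1)(a,x_2)\cdots(a,x_{n+1})$ drives every initial configuration along one of the two chains to the common target $(\synch,(x_{n+1},\dots,x_{n+1}))$.

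For the lower bound, let $w$ be any synchronising data word for $\R_n$, and note that the common target must lie at $\synch$, since any initial configuration already at $\synch$ stays there throughout $w$. Pick $d\in\domain\setminus\data(w)$, which exists because $\domain$ is infinite, and consider the initial configuration $(\init,(d,\dots,d))$: its unique run on $w$ must reach $\synch$. The first input $(a,y_1)$ of $w$ satisfies $y_1\in\data(w)$ and thus $y_1\neq d$, so the run immediately takes the $\neq\!r_1$ branch into $\loc'_1$ with valuation $(y_1,d,\dots,d)$. Inductively, suppose the run has reached $\loc'_i$ with valuation $(y_1,\dots,y_i,d,\dots,d)$ for pairwise distinct $y_1,\dots,y_i\in\data(w)$. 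Since self-loops do not leave $\loc'_i$ and the only other outgoing transition has guard $\bigwedge_{j\leq i}\neq\!r_j$, the run must at some point take this advancing transition on an input $(a,y_{i+1})$, whose guard forces $y_{i+1}\notin\{y_1,\dots,y_i\}$. Iterating up to the final advance $\loc'_n\to\synch$ yields pairwise distinct $y_1,\dots,y_{n+1}\in\data(w)$, and hence $\abs{\data(w)}\geq n+1 = \Omega(n)$.

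The main delicate point is the inductive step, where one must argue that the run is \emph{forced} to exit $\loc'_i$ via the advancing transition. This follows from three observations already noted: the synchronising location is $\synch$, so the run from $(\init,(d,\dots,d))$ must reach $\synch$; the DRA $\R_n$ is deterministic and complete, so the only non-self-loop option at $\loc'_i$ is the advancing transition; and $d\notin\data(w)$ cannot accidentally supply the "fresh" value demanded by the advancing guard. Once these pieces are in place, the $\Omega(n)$ bound is essentially a chain count along one side of the fork.
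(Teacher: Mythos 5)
Your proposal is correct and uses essentially the same construction and argument as the paper: the same two-chain automaton with equality self-loops and all-inequality advancing transitions, the same length-$(n+1)$ synchronizing word for the upper bound, and the same chain-counting for the lower bound. The only difference is presentational: where the paper asserts "by construction, $n+1$ distinct data values must be read to reach $\synch$", you make this explicit by tracing the forced run from $(\init,(d,\dots,d))$ with $d\notin\data(w)$, which is a welcome elaboration rather than a deviation.
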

\begin{proof}
	The family of DRAs~$\R_n (n\in \nat)$  is defined over an infinite data domain~$\domain$.
The DRA~$\R_n$ has $n$~registers and a single letter~$a$.
The structure of~$\R_n$ is composed of two distinguished locations~$\init$ and~$\synch$ and two chains,
where each chain has~$n$ locations: $\loc_1,\loc_2,\cdots,\loc_n$ and $\loc'_1,\loc'_2,\cdots,\loc'_n$.
The DRA~$\R_3$ is shown in \figurename~\ref{fig:UpdateEfficienyRegisterAutomata}.
The only transition in~$\synch$ is a self-loop with update on all $n$~registers, 
thus $\R_n$ can only be synchronized in~$\synch$.
There are two transitions in~$\init$, each going to one of the chains: 
$$\init \ttto{=r_1}{a}{r_1} \loc_1 \text{~~~~~and~~~~~} \init \ttto{\neq r_1}{a}{r_1} \loc'_1.$$
Then, $\post(\{\init\}\times \domain^n, (a,x))=\{\loc_1,\loc'_1\}\times (\{x\}\times \domain^{n-1})$ for all~$x\in \domain$.

From~$\{\loc_1,\loc'_1\}\times (\{x\}\times \domain^{n-1})$, informally speaking,  in both chains the respective  $i$-th locations are 
simultaneously reached after 
inputting $i$~distinct data: 
for all $1 \leq i <n$,
in each ~$\loc_i$ and~$\loc'_i$ there are two transitions. 
One transition is a self-loop, with a satisfied equality guard on at least one of the updated registers $r_1,\dots, r_i$ so far. 
The other transition 
 goes to the next location $\loc_{i+1}$ in the chain, with an inequality guard on all updated registers $r_1,r_2,\cdots, r_i$ so far, and an update on the next register $r_{i+1}$. 

$$\loc_i \tto{\bigvee_{r\in \{r_1,\cdots,r_i\}} (=r_i)}{a~~~} \loc_i \text{~~~and~~~} \loc_i \ttto{\bigwedge_{r\in \{r_1,\cdots,r_i\}} (\neq r_i)}{a}{r_{i+1}} \loc_{i+1},$$ 
$$\loc'_i \tto{\bigvee_{r\in \{r_1,\cdots,r_i\}} (=r_i)}{a~~~} \loc'_i \text{~~~and~~~} \loc'_i \ttto{\bigwedge_{r\in \{r_1,\cdots,r_i\}} (\neq r_i)}{a}{r_{i+1}} \loc'_{i+1}.$$
At the last locations~$\loc_n$ and~$\loc'_n$ of the two chains, there is 
one transition with inequality guards on all registers leaving the chain to~$\synch$, and
there is one transition which is, again, a self-loop with an equality constraint for at least one of the registers.  
$$\loc_n \ttto{\bigwedge_{r\in \reg} (\neq r_i)}{a~~~}{\reg} \synch \text{~~~and~~~} \loc_n \tto{\text{{\sf else}}}{a}\loc_n  \quad \quad \quad \loc'_n \ttto{\bigwedge_{r\in \reg} (\neq r_i)}{a~~~}{\reg} \synch \text{~~~and~~~} \loc'_n \tto{\text{{\sf else}}}{a}\loc'_n.$$

By construction, we see that $n+1$ distinct data values must be read for reaching~$\synch$ from the infinite set~$\{\init\}\times \domain^n$.
Since $\R_n$ can only be synchronized in~$\synch$, all synchronizing data words must have data efficiency at least~$n+1 \in \Omega(n)$.

It remains to prove that $\R_n$ has indeed some synchronizing word. 
Let $\{x_1,x_2,\cdots,x_{n+1}\}$ be a set of  $n+1$ distinct data values and 
$w_{\synch}=(a,x_1)(a,x_2)\cdots (a,x_n)(a,x_{n+1})$.
For the configuration space $\locs=\{\init,\synch, \loc_1,\cdots,\loc_n,\loc'_1,\cdots,\loc'_n\}$, 
observe that $\post(\locs\times \domain^n,w_{\synch})=\{(\synch,x_{n+1})\}$ and $\abs{\data(w_{\synch})}=n+1$.
The proof is complete.
\end{proof}

\begin{theorem} \label{theokDRA}
\theokDRA
\end{theorem}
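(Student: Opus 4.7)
For the upper bound, my plan is to exploit the data-efficiency bound of Lemma~\ref{lemmasyncD} to reduce synchronization of the $k$-DRA to synchronization of an implicit DFA over a finite symbolic state space, and then to apply the classical pairwise criterion. Fix any set $D=\{x_1,\ldots,x_{2k+1}\}\subseteq\domain$ of $2k+1$ distinct data values. By Lemmas~\ref{lemmafiniteD} and~\ref{lemmasyncD}, $\R$ admits a synchronizing data word iff it admits one whose data values all lie in $D$. When $\R$ is driven by inputs from $\Sigma\times D$, its configurations can be faithfully described by symbolic states $(\loc,\sigma,\sim)$, where $\sigma\colon\reg\to D\cup\{\bot\}$ records which registers currently hold a value from~$D$ and which hold a ``fresh'' value, and $\sim$ is an equivalence relation among the registers marked $\bot$ recording which of them hold the same fresh datum. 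The induced symbolic transition function is deterministic and polynomial-time computable, and each symbolic state requires only polynomial space, so this yields an implicit DFA~$\A$ with exponentially many but polynomially representable states such that $\R$ is synchronizable iff $\A$ is.

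To decide synchronization of $\A$ in polynomial space, I would invoke the classical pairwise synchronization criterion: $\A$ is synchronizable iff every pair of symbolic states is mergeable, which amounts to reachability of a diagonal pair in the product automaton $\A\times\A$. For each pair $(p,q)$, mergeability lies in $\npspace$ (guess a merging word stepwise on polynomially-representable product states); universally quantifying over all pairs via a polynomial-size counter, together with Savitch's theorem, places the whole decision in~$\pspace$.

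For the matching lower bound, I would adapt the reduction used in the timed-automata setting of~\cite{DBLP:conf/fsttcs/0001JLMS14}, starting from the $\pspace$-complete intersection non-emptiness problem for DFAs: given $\A_1,\ldots,\A_n$ over a common alphabet, I would construct a $k$-DRA that uses distinct ``component identifiers'' stored in its registers, together with equality guards, to simulate all $n$ DFAs in parallel, so that a synchronizing data word exists iff $\bigcap_i L(\A_i)\neq\emptyset$. The main obstacle here is this lower-bound construction: the upper bound follows essentially mechanically from Lemma~\ref{lemmasyncD}, whereas the hardness proof demands careful orchestration of determinism, fresh-data tokens chosen by the adversary through its initial register valuations, and register comparisons, to prevent ``cheating'' and force the synchronizing word to genuinely solve the intersection problem.
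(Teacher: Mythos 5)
Your upper bound is essentially the paper's: both arguments fix a set of $2k+1$ data values (justified by Lemma~\ref{lemmasyncD}), observe that the reachable behaviour then lives in a finite, polynomially representable state space, and decide synchronizability by the pairwise-merging criterion with an $\npspace$ reachability check per pair plus Savitch. Your symbolic states $(\loc,\sigma,\sim)$ with a $\bot$-marker for fresh values are a slightly more explicit packaging of what the paper does in its step~(2) (checking that every register is eventually updated along inequality-guarded transitions); note in passing that your claim ``$\R$ is synchronizable iff $\A$ is'' does need the small observation that a register can never return to $\bot$ once updated, so a merged symbolic state is automatically $\bot$-free --- but that is a one-line remark, not a gap.

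The lower bound is where you diverge from the paper, and where your proposal has a genuine hole. The paper reduces from the \emph{non-emptiness problem for $k$-DRAs}, which is already known to be $\pspace$-complete~\cite{DBLP:journals/tocl/DemriL09}; the reduction itself is then very light: add a reset letter $\star$ sending every location to $\loc_\is$ with a full register update, make the accepting location an absorbing sink reachable only via acceptance, and conclude that a synchronizing word exists iff the language is non-empty. You instead propose to go directly from DFA intersection non-emptiness, building a single $k$-DRA that ``simulates all $n$ DFAs in parallel'' using component identifiers in registers. As stated this does not obviously work: a $k$-DRA is a sequential deterministic machine, and a single run occupies one location at a time, so the joint state of $n$ DFAs cannot be tracked in the finite control without an exponential product; storing state identities as data in registers does not help either, because register constraints only test equality against the \emph{current input datum}, so the automaton cannot deterministically branch on which of $m_i$ states register $r_i$ encodes. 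The parallelism available in the synchronization setting comes from the fact that all configurations are simultaneously occupied, but exploiting that without letting the word ``cheat'' (merge the copies of $\A_i$ without actually being accepted by $\A_i$) is precisely the content of the emptiness-hardness result you would be re-proving from scratch. You correctly flag this construction as the main obstacle, but it is not filled in, and filling it in is the entire difficulty of the lower bound. The cleaner route is the paper's: piggyback on the known $\pspace$-hardness of $k$-DRA non-emptiness and use the reset-plus-sink gadget.
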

\begin{proof}(Sketch)
	The synchronization problem for $k$-DRA is in $\pspace$ using the following   $\conpspace$ algorithm: 
($1$) pick a set~$X=\{x_1,x_2,\cdots,x_{2k+1}\}$ of distinct data values. ($2$) guess some location~$\loc\in \locs$ and check if there is no
word~$w\in (\alphabet\times \{x_1,x_2,\cdots,x_{k}\})^*$ with length~$\abs{w}\leq 2^{k\abs{\locs}\abs{\Sigma}}$
such that along firing transitions that arer inequality-guarded on all $k$~registers,
some registers are not updated. If ($2$) is satisfied, then return ``no'' (meaning that there is no synchronizing data word for the input $k$-DRA). Otherwise, ($3$) guess two configurations~$q_1,q_2 \in \locs \times X^k$ such that 
there is no word~$w\in (\alphabet\times X)^*$ with length~$\abs{w}\leq 2^{(2k+1)\abs{\locs}\abs{\Sigma}}$
such that $\abs{\post(\{q_1,q_2\},w)}=1$. If ($3$) is satisfied, then the algorithm returns ``no''; otherwise return ``yes''.

For $\pspace$-hardness, we adapt an established reduction (see, e.g.,~\cite{DBLP:conf/fsttcs/0001JLMS14}) from the non-emptiness problem for $k$-DRA, see Appendix \ref{append_DRAs}. 
The result then follows by $\pspace$-completeness of the non-emptiness problem for $k$-DRA~\cite{DBLP:journals/tocl/DemriL09}.

\end{proof}

\section{Synchronizing data words for NRA\lowercase{s}}\label{synchNRAs}

In this section, we study the synchronization problems for NRAs.
We slightly update a result in~\cite{DBLP:conf/fsttcs/0001JLMS14} to present a general reduction from the  \emph{non-universality} problem to
the synchronization problem for NRAs. This reduction proves the  \emph{undecidability} result for the synchronization problem for $k$-NRAs, 
and $\ackermann$-hardness in $1$-NRAs. We then prove that for $1$-NRAs, the synchronization  and  non-universality problems
are indeed interreducible, which completes the picture by  $\ackermann$-completeness of the synchronization problem for~$1$-NRAs.
 
In the nondeterministic synchronization setting, we present two kinds of \emph{counting features}, which are useful for later constructions. 
For the first one, we define a family $(\R_{\counter(n)})_{n\in\nat}$ of $1$-NRAs with size only linear in $n$, where an input datum~$x\in \domain$ must be  read  $2^n$ times to achieve synchronization.

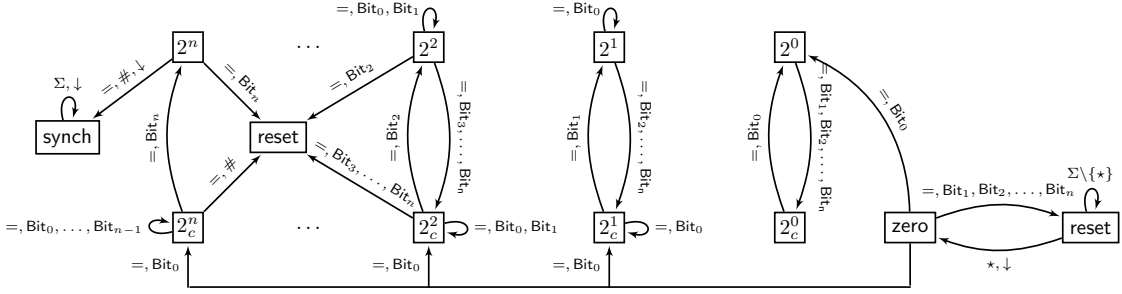
\begin{figure}[t]
\centering
  \scalebox{.8}{\begin{tikzpicture}[>=latex',shorten >=1pt,node distance=3.2cm,on grid,auto, thick,
		roundnode/.style={circle, draw=black,  thick, minimum size=6mm},
squarednode/.style={rectangle, draw=black,  thick, minimum size=5mm}]

\definecolor{myblue}{RGB}{80,80,160}
\definecolor{mygreen}{RGB}{80,160,80}

\node[squarednode] (k) at (0,3) {};
\node at (0,3) {$\loctwo{n}$};
\node[squarednode] (2) at (4,3) {};
\node at (4,3) {$\loctwo{2}$};
\node[squarednode] (1) at (7,3) {};
\node at (7,3) {$\loctwo{1}$};
\node[squarednode] (0) at (10,3) {};
\node at (10,3) {$\loctwo{0}$};

\node[squarednode] (ck) at (0,0) {};
\node  at (0,0) {$\locctwo{n}$};
\node[squarednode] (c2) at (4,0) {};
\node at (4,0) {$\locctwo{2}$};
\node[squarednode] (c1) at (7,0) {};
\node  at (7,0) {$\locctwo{1}$};
\node[squarednode] (c0) at (10,0) {};
\node  at (10,0) {$\locctwo{0}$};

\node[squarednode] (reset) at (15,0) {$\lreset$};
\node[squarednode] (reset2) at (1.5,1.5) {$\lreset$};
\node[squarednode] (synch) at (-2,1.5) {$\synch$};
\node[squarednode] (zero) at (12,0) {$\lzero$};

\path[-] (12.01,-.99) edge (-0.01,-.99);
\path[-] (zero) edge (12,-1);
\path[->] (7,-1) edge  node [left] {\scriptsize{$=,\Bit{0}$ }} (c1);
\path[->] (4,-1) edge  node [left] {\scriptsize{$=,\Bit{0}$ }} (c2);
\path[->] (0,-1) edge  node [left] {\scriptsize{$=,\Bit{0}$ }} (ck);

\node[draw=none] (Dots) at (2,0) {$\ldots$};
\node[draw=none] (Dots2) at (2,3) {$\ldots$};

% synch

\path[->] (synch) edge [loop above]  node {\scriptsize{$\Sigma, \downarrow$ }} ();

\path[->] (k) edge node [midway,sloped,above] {\scriptsize{$=,\#,\downarrow$ }}  (synch); 
\path[->] (ck) edge node [midway,sloped,above] {\scriptsize{$=,\#$ }}  (reset2);

% reset 

\path[->] (reset) edge [loop above]  node {\scriptsize{$\Sigma\backslash\{\star\}$ }} ();

% incorrect Bit_2 from 2^2

\path[->] (2) edge node [midway,sloped,above] {\scriptsize{$=,\Bit{2}$ }}  (reset2); 
\path[->] (k) edge node [midway,sloped,above] {\scriptsize{$=,\Bit{n}$ }}  (reset2); 
\path[->] (c2) edge node [midway,sloped,above] {\scriptsize{$=,\Bit{3},\dots,\Bit{n}$ }}  (reset2);

% star initialization

\draw [->] (reset) edge[bend left=20] node {\scriptsize{$\star, \downarrow$ }} (zero);
\draw [->] (zero) edge[bend left=20, sloped] node  {\scriptsize{$=,\Bit{1},\Bit{2},\dots,\Bit{n}$ }} (reset);

 % Bit_0 
 %\draw [->] (zero) edge[bend left=20] node {\scriptsize{$=,\Bit{0}$ }} (c1);
%\draw [->] (zero) edge[bend left=30]  node[below] {\scriptsize{$=,\Bit{0}$ }} (c2);
%\draw [->] (zero) edge[bend left=30]  node[below] {\scriptsize{$=,\Bit{0}$ }} (ck);
\path[->] (zero) edge[bend right=32] node [sloped, above] {\scriptsize{$=,\Bit{0}$ }}  (0); 
      
   %\draw [->] (zero) to [out=-80,in=60] ($(c0)+(1,0)$) to [out=-120, in=-20 ] 
   % node {\scriptsize{$=,\Bit{0}$ }} (c1);
     
    %\draw [->] (zero) to [out=-90,in=60]  ($(c0)+(1,0)$) to [out=-120, in=-20 ] node[below] {\scriptsize{$=,\Bit{0}$ }} (c2);
     
    %\draw [->] (zero) to [out=-70,in=60] ($(c0)+(1,0)$) to [out=-120, in=-20 ] node[below] {\scriptsize{$=,\Bit{0}$ }} (ck);
     
    %\path[->] (zero) edge[bend right=30] node [sloped, above] {\scriptsize{$=,\Bit{0}$ }}  (0); 

     % the Bit_i loops

     %\path[->] (k) edge [loop above]  node [left] {\scriptsize{$\Bit{0},\dots,\Bit{n-1},=$ }} ();
    
     \path[->] (1) edge [loop above]  node [left] {\scriptsize{$=,\Bit{0}$ }} ();
               
     \path[->] (2) edge [loop above]  node [left] {\scriptsize{$=,\Bit{0},\Bit{1}$ }} ();

     \path[->] (c1) edge [loop right]  node [right] {\scriptsize{$=,\Bit{0}$ }} ();
       
     \path[->] (c2) edge [loop right]  node [right] {\scriptsize{$=,\Bit{0},\Bit{1}$ }} ();

     \path[->] (ck) edge [loop left]  node [left] {\scriptsize{$=,\Bit{0},\dots,\Bit{n-1}$ }} ();
     
     % END OF Bit_i loops

      % the Bit_0 edge from 2^0_c to 2^0
     
     \path[->] (c0) edge[bend left=20] node [midway,sloped,above] {\scriptsize{$=,\mathsf{Bit_{0}}$ }}  (0);

     % the Bit_1,Bit_2,\dotsBit_n-edges from 2^0 to 2^0_c
         
     \path[->] (0) edge[bend left=20] node [midway,sloped,above] {\scriptsize{$=,\mathsf{Bit_{1}},\mathsf{Bit_{2}},\dots,\mathsf{Bit_{n}}$ }}  (c0);

     % the Bit_1 edge from 2^1_c to 2^1
     
     \path[->] (c1) edge[bend left=20] node [midway,sloped,above] {\scriptsize{$=,\mathsf{Bit_{1}}$ }}  (1);

      	       % the Bit_2,\dotsBit_n-edges from 2^1 to 2^1_c
         
      	       \path[->] (1) edge[bend left=20] node [midway,sloped,above] {\scriptsize{$=,\mathsf{Bit_{2}},\dots,\mathsf{Bit_{n}}$ }}  (c1);

      	        % the Bit_2 edge from 2^2_c to 2^2
     
       \path[->] (c2) edge[bend left=20] node [midway,sloped,above] {\scriptsize{$=,\mathsf{Bit_{2}}$ }}  (2);

       % the Bit_3,\dotsBit_n-edges from 2^2 to 2^2_c
         
      \path[->] (2) edge[bend left=20] node [midway,sloped,above] {\scriptsize{$=,\mathsf{Bit_{3}},\dots,\mathsf{Bit_{n}}$ }}  (c2);

       % -------
       
       % the Bit_n edge from 2^n_c to 2^n
     
       \path[->] (ck) edge[bend left=20] node [midway,sloped,above] {\scriptsize{$=,\Bit{n}$ }}  (k); 
  \end{tikzpicture}}
\caption{
	A partial picture of the $1$-NRA $\R_{\counter(n)}$ (with $n\geq 3$) implementing a binary counter. 
	In order to avoid crossing edges in the figure, we use two copies of the same location $\lreset$.
	All locations have inequality-guarded self-loops for all letters in $\Sigma\backslash\{\star\}$. 
	All missing equality-guarded $\star$-transitions are directed to $\lzero$. For all $0\leq i <n$, missing equality-guarded $\#$-transitions from $\locctwo{i}$ are guided to $\synch$ with an update on the register. 
	All other non-depicted equality-guarded transitions are directed to $\lreset$, and inequality-guarded transitions are self-loops. 
}
\label{fig:countingNRA}
\end{figure}

\newcommand{\lemmacountingNRA}{
	There is a family of $1$-NRAs $(\R_{\counter(n)})_{n\in \nat}$  
with  $\mathcal{O}(n)$ locations,  
such that for all synchronizing data words~$w$,  
some datum~$d\in \data(w)$  appears in~$w$ at least $2^n$~times.
}

\begin{lemma} \label{lemmacountingNRA}
\lemmacountingNRA
\end{lemma}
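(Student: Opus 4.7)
The family $\R_{\counter(n)}$ depicted in Figure~\ref{fig:countingNRA} uses $2(n+1) + 3 = O(n)$ locations, so the size bound is immediate; the main task is the $2^n$ lower bound on data repetitions.

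The guiding intuition is that $\R_{\counter(n)}$ simulates a binary $(n+1)$-bit counter on each register value. For a fixed datum $d$, the presence of a register-$d$ token at $\loctwo{i}$ represents ``bit $i = 1$'' while a token at $\locctwo{i}$ represents ``bit $i = 0$''. Reading $(\Bit{i}, d)$ acts as a validated increment: branches at higher bits $j > i$ stay via the explicit equality self-loops on $\Bit{0}, \ldots, \Bit{j-1}$; the branch at $\locctwo{i}$ flips to $\loctwo{i}$; branches at $\loctwo{j}$ for $j < i$ flip to $\locctwo{j}$ (carry); and every branch in an incompatible position (namely $\loctwo{i}$ itself, or $\locctwo{j}$ with $j < i$) is routed to $\lreset$ by the default rule. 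The nondeterministic fan-out $\lzero \xrightarrow{=,\Bit{0}} \{\loctwo{0}, \locctwo{1}, \ldots, \locctwo{n}\}$ launches $n+1$ simultaneous branches that collectively track a full bit vector.

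Fix a synchronizing data word $w$ with $\post(\locs \times \domain, w) = \{(\synch, d_f)\}$, and focus on the token initially at $\lzero$ with register $d_f$, whose trajectory under $w$ must end at $(\synch, d_f)$. The only equality exit from $\lzero$ that does not lead to $\lreset$ is $(\Bit{0}, d_f)$, so $w$ must contain this letter somewhere; after it fires, $n+1$ simultaneous branches appear and each must eventually reach $\synch$. Since the only profitable entries into $\synch$ are $\loctwo{n} \xrightarrow{=,\#,\downarrow} \synch$ and $\locctwo{i} \xrightarrow{=,\#,\downarrow} \synch$ for $i < n$, any other bit-pattern at the moment of reading $(\#, d_f)$ sends at least one branch to $\lreset$. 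I plan to argue, by inspecting the transition table, that the unique bit-configuration simultaneously compatible with all $n+1$ branches being synchronizable is ``bit $n$ set and bits $0, \ldots, n-1$ cleared'', i.e., counter value $2^n$. Since each validated increment advances the counter by exactly one and the counter value equals $1$ right after the $(\Bit{0}, d_f)$-fan-out, at least $2^n - 1$ further $(\Bit{i}, d_f)$-letters must appear in $w$, so $d_f$ occurs at least $2^n$ times.

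The delicate step will be ruling out shortcuts. Any branch dumped into $\lreset$ can only be revived through a $(\star, d_f)$ step to $\lzero$ followed by a new $(\Bit{0}, d_f)$-fan-out, reinstalling it at counter value $1$ and forcing essentially the same $2^n - 1$ traversal again; tokens initially sitting elsewhere can be funneled into $\lzero$ using $(\star, d_f)$, but they then require the same counter traversal. Even if $w$ spreads its counter simulations over several register values, the final $(\#, d_f)$-letter still requires the register-$d_f$ counter to be in state $2^n$ at that moment, so $d_f$ itself must account for the $2^n$ increments. Together these observations yield the claimed $2^n$ bound.
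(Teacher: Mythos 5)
Your proposal is correct and follows essentially the same route as the paper's proof: the bit-vector invariant on the $n+1$ simultaneous branches, the routing of every ill-timed $\Bit{i}$- or $\#$-transition to $\lreset$ (from which only a $\star$-induced restart escapes), and the resulting forced traversal of the counter from $1$ to $2^n$ with a single fixed datum. The only imprecision is insisting that the repeated datum is the final register value $d_f$ --- after a reset $(\star,y)$ the relevant datum becomes $y$, and letters read after the productive $\#$ keep updating the register in $\synch$ --- but since the lemma only asserts that \emph{some} datum occurs $2^n$ times, this does not affect the argument.
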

\begin{proof}(Sketch)
	The $1$-NRA~$\R_{\counter(n)}$  shown in \figurename~\ref{fig:countingNRA} encodes a binary counter that  
	ensures that
	in every synchronizing data word~$w$ some  datum~$x\in \data(w)$  appears at least $2^n$ times. 
The location~$\synch$ has self-loops on all letters,
thus,  $\R_{\counter(n)}$ can only be synchronized in location~$\synch$. 
Generally speaking, the counting involves an \emph{initializing process} and several  \emph{incrementing processes}. 
The \emph{initializing process} is started by firing a $\star$-transition, which places a token, let us say: an $x$-token, into location $\lzero$. This sets the counter to $\inBinary{0}$. 
Note that firing $\star$-transitions is the only way to guide tokens out of $\lreset$; hence, whenever there is some token in $\lreset$, a new initializing process must be started. We use this to enforce a new initializing process whenever some transition is fired that is incorrect with respect to the incrementing process.
 
An \emph{incrementing process} can be set off by inputting the datum~$x$ via equality guards. 
The numbers $1\leq m\leq 2^n$ are represented by placing a copy of the $x$-token in the locations
corresponding to the binary representation of $m$. 
An $x$-token in location~$\loctwo{i}$ (in~$\locctwo{i}$, respectively) means that the $i$-th least 
significant  in the binary representation is set to~$\inBinary{1}$ (to~$\inBinary{0}$, respectively).
First, a $\Bit{0}$-transition places a copy of the $x$-token in each of  $\{\locctwo{n},\dots,\locctwo{2},\locctwo{1},\loctwo{0}\}$ to 
represent~$\inBinary{0\dots 001}$.
In each incrementation step the $x$-tokens are re-placed by firing specific $\Bit{i}$-transitions ($0\leq i \leq n$),
following the standard procedure of binary incrementation. 
At the end, when a copy of the $x$-token locates in each of $\{\loctwo{n},\locctwo{n-1},\dots,\locctwo{0}\}$ (representing $\inBinary{10\dots 0}$), 
the $\#$-transitions guide all of these tokens to location $\synch$ and finally synchronize~$\R_{\counter}$. 
We give a detailed explanation of the structure of~$\R_{\counter(n)}$ in Appendix \ref{append_NRAs}. 
\end{proof}

We present a second kind of counting features in RAs that explains the hardness of synchronizing NRAs, 
even with a single register.
In Lemma \ref{lemmatowerNRA}, 
we define a family of $1$-NRAs (with only $\mathcal{O}(n)$~locations), where  $\tower(n)$ distinct data must be read to gain synchronization. 
Recall from~\cite{DBLP:journals/corr/Schmitz13} that the function $\tower$ is at level three of the infinite \emph{Ackermann hierarchy} $(A_k)_{k\in\nat}$ of fast-growing  functions $A_i:\nat\to\nat$, inductively defined by 
$A_1(n)=2n$ and $A_{k+1}(n)=A_k^{n}(1)=\underbrace{A_k(\dots(A_k}_{n \, \mathrm{times}}(n))\dots)$. 
Hence, applying $\adouble\defequals A_1, \aexp\defequals A_2,$ and $\tower\defequals A_3$, respectively, on some natural number $n$ 
results in some number that is \emph{double}, \emph{exponential}, and \emph{tower}, respectively, in $n$. 
The function $A_\omega(n)=A_n(n)$ is a non-primitive recursive Ackermann-like function, defined by diagonalization.

\newcommand{\lemmatowerNRA}{
	There is a family of $1$-NRAs $(\R_{\tower(n)})_{n\in \nat}$  
with  $O(n)$ locations,  
such that  $\abs{\data(w)}\ge\tower(n)$ for all synchronizing data words~$w$.
}

\begin{lemma} \label{lemmatowerNRA}
\lemmatowerNRA
\end{lemma}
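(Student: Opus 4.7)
My plan is to bootstrap the binary counter construction from Lemma~\ref{lemmacountingNRA} by nesting it $n$ times, so that each level of nesting raises the required data efficiency by one exponentiation. Recall that $\R_{\counter(n)}$ uses $2n$ hardcoded bit-locations to represent the positions of an $n$-bit counter, forcing the same datum to be repeated $2^n$ times. The key move for the tower lower bound is to encode bit-positions \emph{in the data} rather than in the control state: each bit-position of an outer counter will be represented by a distinct datum stored in the register, while an inner counting gadget enumerates those data. An outer counter of width $w$ driven by such a subroutine will require at least $w$ distinct data simply to name all of its bit-positions, and can be made to enforce a full traversal of $2^w$ increments, each of which touches every bit-position.

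With this idea, I would define $\R_{\tower(n)}$ by induction on $n$. The base $\R_{\tower(1)}$ is essentially a two-location gadget in which two distinct data are required to reach $\synch$. For the inductive step, $\R_{\tower(k+1)}$ wraps $\R_{\tower(k)}$ inside a new layer of $\mathcal{O}(1)$ locations that replay the bit-flip and carry logic of $\R_{\counter}$ but consult the register datum, rather than a fixed location, to decide which bit is currently being addressed. The inner $\R_{\tower(k)}$ plays the role of a bit-enumerator: it forces $\tower(k)$ distinct data to appear, and those data become the bit-addresses of the outer layer, so the outer counter has effective width $\tower(k)$ and a full traversal costs $2^{\tower(k)}=\tower(k+1)$ distinct data at the top. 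Since each inductive step adds only a constant number of new locations on top of the previously constructed sub-automaton, the total size remains $\mathcal{O}(n)$. As in Lemma~\ref{lemmacountingNRA}, a single synchronization target $\synch$ is reachable only via a successful full traversal at every level, and any deviation leaves a token stuck in a $\lreset$-style sink for at least one initial configuration.

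The main obstacle will be making this nesting faithful under the single-register constraint: only one datum can be held at a time, so the automaton cannot simultaneously remember the active bit-addresses of several levels. Following the nondeterministic style of Lemma~\ref{lemmacountingNRA}, I would exploit nondeterminism aggressively, letting the automaton guess the currently active level and address at each step while ensuring that any miscoordination between levels traps a token in a reset zone from some initial configuration. The formal correctness argument is then an induction whose hypothesis reads ``every synchronizing data word of $\R_{\tower(k)}$ must carry at least $\tower(k)$ distinct data through the level-$k$ sub-gadget,'' lifted from the base via the observation that one full outer increment of $\R_{\tower(k+1)}$ contains an inner execution of $\R_{\tower(k)}$ and that the $2^{\tower(k)}$ outer increments each require a fresh distinct datum at the top level.
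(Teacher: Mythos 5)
Your proposal takes a genuinely different route from the paper, and unfortunately it runs into two concrete obstacles. The paper does not nest binary counters at all: it seeds $n$ distinct tokens via a linear chain $\Data_1,\dots,\Data_{1,\dots,n}$ and then implements the Ackermann recursion $A_{k+1}(n)=A_k^n(1)$ directly with a \emph{constant} number of locations per level ($\waitdoub$, $\waitexp$, $\waittow$, $\rep$, $\store$). The engine is token replication: a pair of inequality-guarded $\arepeat$-transitions, one updating the register and one not, turns one token into two distinct tokens, and equality-guarded exits force the tokens waiting at a level to be released one at a time, each release triggering one full pass of the level below on everything accumulated in $\store$. The iteration count of each level is thus stored in the \emph{multiset of distinct data} waiting at a location, not in a binary counter.

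The first gap in your approach is the plan to ``replay the bit-flip and carry logic of $\R_{\counter}$ but consult the register datum to decide which bit is currently being addressed.'' Binary incrementation requires knowing which bit-address is the \emph{next less significant} one so that the carry can propagate; but a register automaton can only test data for equality, and the domain carries no order, so there is no way to relate the datum naming bit $i$ to the datum naming bit $i+1$. In $\R_{\counter(n)}$ this adjacency is hard-wired into the $2(n+1)$ locations $\loctwo{i},\locctwo{i}$, which is exactly what you give up by moving the addresses into the data. The second gap is the accounting of distinct data: even granting an outer counter of width $\tower(k)$, performing $2^{\tower(k)}$ increments does not force $2^{\tower(k)}$ \emph{distinct} data --- the whole point of Lemma~\ref{lemmacountingNRA} is that the same datum is reused $2^n$ times, so a counter yields a length lower bound, not a data-efficiency lower bound. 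Converting iteration count into distinct-data count is precisely what the paper's replication-plus-individual-release mechanism achieves, and your sketch has no substitute for it.
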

\begin{proof}
	The domain of the family of $1$-NRAs~$(\R_{\tower(n)})_{n\in \nat}$  is the natural numbers~$\nat$.
	The alphabet of $\R_{\tower(n)}$ is~$\alphabet=\{\#,\star, \arepeat,\adouble,\aexp,\atower\}$.
The structure of~$\R_{\tower(n)}$ is composed of 
$n$ locations $\Data_1, \Data_{1,2},\cdots ,\Data_{1,2,\cdots,n}$
and $6$ more locations~$\lreset,\synch,\store,\rep,\waitdoub,\waitexp$. The general structure of~$\R_{\tower(n)}$ is partially depicted in \figurename~\ref{fig:countingNRA-tower}. 
The NRA~$\R_{\tower(n)}$ is such that $\abs{\data(w)}\ge \tower(n)$ for all synchronizing data words~$w$.

All transitions in~$\synch$ are self-loops with an update on  the register $\synch \ttto{}{\alphabet}{r} \synch$;
thus,  $\R_{\tower(n)}$ can only be synchronized in~$\synch$. 
Moreover, $\synch$ is only accessible from $\store$ by a $\#$-transition. 
Assuming~$w$ is one of the shortest synchronizing words, we see that
$\post(\locs\times \domain,w)=\{(\synch,x)\}$, where $w$ ends with $(\#,x)$.

From all locations~$\ell\in \locs \setminus\{\synch\}$, we have
$\ell  \ttto{}{\star}{r} \Data_1$;
we say that $\star$-transitions \emph{reset}~$\R_{\tower(n)}$.
Moreover, the only outgoing transition in location~$\lreset$ is the $\star$-transition.
Thus, a \emph{reset} must occur in order to synchronize $\R_{\tower(n)}$. 
After this forced reset, say on reading~$(\star,1)$, the set of reached configurations is~$\{(\Data_1,1),(\synch,1)\}$. 
Since resetting is inefficient, we try to avoid it;
we call all transitions leading to~$\lreset$ \emph{inefficient}.

\noindent For all locations $\Data_{1,\cdots,i}$ with $1 \leq i < n$,
we define the two transitions 
$$\Data_{1,\cdots,i} \tto{\neq r}{\arepeat} \Data_{1,\cdots,i+1} \quad\text{~~~~~and~~~~~}\quad\Data_{1,\cdots,i} \ttto{\neq r}{\arepeat}{r} \Data_{1,\cdots,i+1}.$$
All other transitions in $\Data_{1,\cdots,i}$ are inefficient and directed to $\lreset$.
Below, we  rename $\Data_{1,2,\cdots,n}$ to $\waittow$.
We partially depict the transitions from $\waittow$, $\waitexp$, $\waitdoub$, $\rep$ and $\store$ in \figurename~\ref{fig:countingNRA-tower}.
All  transitions  are inefficient, except
\begin{itemize}
\item $\waittow \tto{= r}{\atower} \waitexp$, $\waittow \tto{\neq r}{\atower}\waittow$, and $\waittow \tto{\sigma}{}\waittow$ for all $\sigma\in\{\adouble,\aexp,\arepeat\}$.
	\item $\waitexp \tto{= r}{\aexp}\waitdoub$, $\waitexp \tto{}{\adouble}\waitexp $ and $\waitexp \tto{}{\arepeat}\waitexp$.
	\item $\waitdoub \tto{= r}{\adouble} \rep$, $\waitdoub \tto{\neq r}{\adouble}\waitdoub$ and $\waitdoub \tto{\neq r}{\arepeat}\waitdoub$,
	\item $\rep \tto{\neq r}{\arepeat} \store$ and $\rep \ttto{\neq r}{\arepeat}{r} \store$,
	\item $\store \tto{\atower}{} \waitexp$, $\store \tto{\aexp}{} \waitdoub$, $\store \tto{\neq r}{\adouble}\store$  and $\store \tto{\neq r}{\arepeat}\store$, and 
	\item $\store \ttto{}{\#}{r} \synch$. 
\end{itemize}
\begin{figure}
\begin{center}
	\scalebox{.85}{
		\begin{tikzpicture}[>=latex',shorten >=1pt,node distance=1.9cm,on grid,auto, thick,
roundnode/.style={circle, draw=black, thick, minimum size=2mm},
squarednode/.style={rectangle, draw=black, thick, minimum size=3mm}]

\definecolor{myblue}{RGB}{80,80,160}
\definecolor{mygreen}{RGB}{80,160,80}

\definecolor{myblue}{RGB}{80,80,160}
\definecolor{mygreen}{RGB}{80,160,80}

\node[squarednode] (reset) at (0,0) {$\lreset$};
\node[squarednode] (data1) at (3,0) {$\Data_{1}$};
\node[squarednode] (data2)  at (6,0) {$\Data_{1,2}$};
\node[draw=none] (Dots) at (9,0) {$\ldots$};
\node [squarednode] (wt)  [draw, text width=2cm] at (12,0) {$\;\; \waittow$ \\ $(\Data_{1,2,\dots,n})$};
\node[squarednode] (wd)  at (12,-2.4) {$\waitexp$};
\node[squarednode] (wr)  at (9,-2.4) {$\waitdoub$}; 
\node[squarednode] (lr)  at (6,-2.4) {$\rep$};
\node[squarednode] (st)  at (3,-2.4) {$\store$};
\node[squarednode] (synch)  at (0,-2.4) {$\synch$};
%\node [squarednode] (lab) [draw=none, left=2cm of wt]{RA~$\R_{\tower}$:};  

\path[-] (3.1,-2.6) edge (3.1,-3.3);
\path[-] (3.1,-3.3) edge (9,-3.3);
\path[->] (9,-3.3) edge  node [left] {\scriptsize{$\aexp$ }} (wr);
%\path[->] (4,-1) edge  node [left] {\scriptsize{$=,\Bit{0}$ }} (c2);
%\path[->] (0,-1) edge  node [left] {\scriptsize{$=,\Bit{0}$ }} (ck);

\path[-] (2.9,-2.6) edge (2.9,-3.6);
\path[-] (2.9,-3.6) edge (12,-3.6);
\path[->] (12,-3.6) edge  node [left] {\scriptsize{$\atower$ }} (wd);

% initialization with \star   
   \path[->] (reset) edge node {\scriptsize{$\star, \downarrow$ }} (data1);
   
   % data1 to data12
   \path[->] (data1) edge [bend right] node [midway,below] {\scriptsize{$\neq,\arepeat$ }} (data2);   
  \path[->] (data1) edge [bend left] node [midway,above] {\scriptsize{$\neq,\arepeat, \downarrow$ }} (data2);

  % data12 to dots
  \path[->] (data2) edge [bend right] node [midway, below] {\scriptsize{$\neq,\arepeat$ }} (Dots);   
  \path[->] (data2) edge [bend left] node [midway,above] {\scriptsize{$\neq,\arepeat,\downarrow$ }} (Dots);

  % Dots to WaitTow
  \path[->] (Dots) edge [bend left=-26] node [midway, below] {\scriptsize{$\neq,\arepeat$ }} (10.9,-.2);   
  \path[->] (Dots) edge [bend right=-26] node [midway,above] {\scriptsize{$\neq,\arepeat,\downarrow$ }} (10.9,.2);

  % Loops in WaitTow
  %\path [->] (wt) edge [out=-12,in=4,looseness=6] node [right=0.05cm] {\scriptsize{$\aexp,\arepeat,a$ }} (wt);
  
  \path [->] (wt) edge [loop above,looseness=5]  node [above] {\scriptsize{$\aexp,\arepeat,\adouble$}} (wt);
  
  \path [->] (wt) edge [loop right,looseness=3]  node [right] {\scriptsize{$\neq, \atower$}} (wt);

  %\path [->] (wt) edge [out=-60,in=-34,looseness=6] node [right=0.05cm] {\scriptsize{$\neq, \atower$ }} (wt);

  % Edge from WaitTow to WaitDoub
  \path[->] (wt) edge node [right] {\scriptsize{$=,\atower$ }} (wd);

  % Edge from WaitDoub to WaitRep
  \path[->] (wd) edge node [midway,sloped,above] {\scriptsize{$=,\aexp$ }} (wr);

  % Edge from WaitRep to Rep
  \path[->] (wr) edge node [midway,sloped,below] {\scriptsize{$=,\adouble$ }} (lr);

  % Replicating Edges from Rep to Store
  %\path[->] (lr) edge [out=-120,in=130] node [midway,sloped,below] {\scriptsize{$\neq,\arepeat, \downarrow$ }} node [midway,sloped,above=0.75cm] {\scriptsize{$\neq,\arepeat$ }}(st);  
  \path[->] (lr) edge [bend left] node [midway,sloped, below] {\scriptsize{$\neq,\arepeat$ }} (st);   
  \path[->] (lr) edge [bend right] node [midway,sloped,above] {\scriptsize{$\neq,\arepeat,\downarrow$ }} (st);

  % ???
  %\path[->] (lr) edge [bend left]  (st); 

  % From Store to Synch
  \path[->] (st) edge node [below] {\scriptsize{$\#,\downarrow$ }} (synch);

  % Loop in synch
\path [->] (synch) edge  [loop above] node [above] {\scriptsize{$\Sigma,\downarrow$\,}}  (synch);

% Loops in WaitDoub
\path [->] (wd) edge [out=-12,in=4,looseness=6] node [right=0.05cm,yshift=1mm] {\scriptsize{$\arepeat,\adouble$ }} node [right=0.05cm,yshift=-1mm] {\scriptsize{$\neq,\aexp$ }} (wd);

%no this one is wrong\path [->] (wd) edge [out=-102,in=-54,looseness=6] node [right=0.05cm] {\scriptsize{$\neq \aexp$ }} (wd);

% Loops in Store
%\path [->] (st) edge [out=50,in=100,looseness=9] node [left=0.05cm] {\scriptsize{$\neq,\arepeat\ \ $ }} node [above] {\scriptsize{$\neq,\adouble$ }} (st);

\path [->] (st) edge  [loop above] node [above] {\scriptsize{$\neq,\arepeat\ \ $ }} node [above,yshift=2mm] {\scriptsize{$\neq,\adouble$ }}   (st);

% Loops in WaitDoub
%\path [->] (wr) edge [out=50,in=100,looseness=8] node [left=0.05cm] {\scriptsize{$\neq,\arepeat\ \ $ }} node [above] {\scriptsize{$\neq,\adouble$ }} (wr);

\path [->] (wr) edge  [loop above] node [above] {\scriptsize{$\neq,\arepeat\ \ $ }} node [above,yshift=2mm] {\scriptsize{$\neq,\adouble$ }}   (wr);

% Biig Edge from Store to WaitRep
%\path[->] (st) edge [out=-50,in=-90] node [midway,sloped,above] {\scriptsize{$\aexp$ }} (wr); 

% Biig Edge from Store to WaitDoub
%\path[->] (st) edge [bend right=52] node [midway,sloped,above] {\scriptsize{$\atower$ }} (wd); 

 \end{tikzpicture}	
	}
	\end{center}
\caption{A partial illustration of the $1$-NRA  $\R_{\tower(n)}$ for $n\ge 3$. All $\star$-transitions are guided to $\Data_1$ with an update on the register. All other missing non-depicted transitions are directed to $\lreset$. }
\label{fig:countingNRA-tower}
\end{figure}
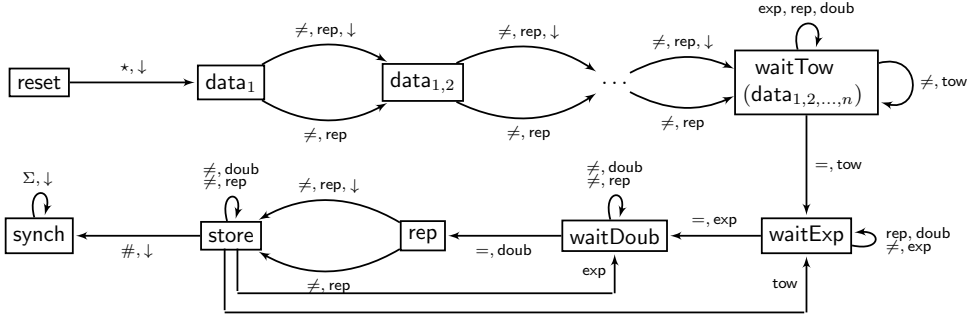
We remark that $\store \ttto{}{\#}{r} \synch$ is the only $\#$-transition that is not inefficient. This implies that for efficiently synchronizing $\R_{\tower(n)}$, one needs to re-move all produced tokens to $\store$ before firing a $\#$-transition.  The main issue in re-moving produced tokens, however, is  that some inequality-guarded transitions are unavoidable,  
and these transitions  may \emph{replicate}
the  tokens. For example, if one token  is in $\Data_1$,
 firing two transitions $\Data_1 \tto{\neq r}{ \arepeat} \Data_{1,2}$ and
$\Data_1 \ttto{\neq r}{\arepeat}{r} \Data_{1,2}$
replicates one token to two tokens in $\Data_{1,2}$.
Using this, 
one can implement \emph{doubling}, \emph{exponentialization}, and \emph{towering} of distinct tokens, as explained in the following.

\emph{Doubling:} Assume that there are $n$ distinct tokens $\{1,2,\dots,n\}$ in $\waitdoub$. 
Then the only efficient transition
is 
$\waitdoub\tto{= r}{\adouble} \waitrep$. 
In particular, all $\{\#,\aexp,\atower\}$-transitions activate a reset. As a result, as long as
some token is in $\waitdoub$,   $\{\#,\aexp,\atower\}$-transitions should be avoided for the sake of efficiency.  
This implies that for all $1\leq i\leq n$, the  $i$-token in $\waitdoub$ can 
leave the location only individually on the input~$(\adouble,i)$. 
Now, inputting $(\adouble,i)$ moves the $i$-token to $\waitrep$.  
Here the $i$-token must immediately move on to $\store$ via the inequality-guarded $\arepeat$-transitions, which will replicate the $i$-token into two tokens. 
Note that we must fire $\arepeat$-transitions with some ``fresh'' datum $j$ such that $j\not\in\{1,\dots,n\}$, otherwise a reset is evoked. (For simplicity, we  use $j=i+n$ by convention.)
It can now be easily seen that the only efficient way to guide all $n$ tokens out of $\waitdoub$ is by inputting the data word 
$$w_{\adouble(n)}=(\adouble,1)(\arepeat,n+1)(\adouble,2)(\arepeat,n+2)\dots(\adouble,n)(\arepeat,2n),$$
which puts $2n$ distinct tokens into $\store$.

\emph{Exponentialization:}
Assume there are $n$ distinct tokens $\{1,2,\dots,n\}$ in $\waitexp$. 
The only efficient transition
is $\waitexp\tto{= r}{\aexp} \waitdoub$. 
In particular, all $\{\#,\atower\}$-transitions activate a reset, and should be avoided as long as
some token is in $\waitexp$. 
This implies that for all $1\leq i\leq n$, the  $i$-token in $\waitexp$ can 
leave the location only individually on the input~$(\aexp,i)$. 
Now, inputting $(\aexp,1)$ moves the $1$-token to $\waitdoub$.  
From above we know that the only efficient way for guiding a single token in $\waitdoub$ towards synchronization is by inputting the data word $w_{\adouble(1)}$, resulting in two distinct tokens in $\store$:  $1$ and $2$.  
We can now proceed to remove the $2$-token from $\waitexp$ by inputting $(\aexp,2)$. Note that this also guides the  $\{1,2\}$-tokens residing in $\store$ to $\waitdoub$. 
Again, for efficient synchronization, we must input the data word $w_{\adouble(2)}$, which results in four distinct tokens $\{1,2,3,4\}$ in $\store$. 
It is now easy to see  
that the only efficient way to guide all $n$ tokens out of $\waitexp$ is by inputting the data word 
$$w_{\aexp(n)}=(\aexp,1) \, \cdot \, w_{\adouble(1)} \, \cdot \,(\aexp,2) \, \cdot \,w_{\adouble(2)} \, \cdot \,(\aexp,3) \, \cdot \,w_{\adouble(4)} \, \cdot \,\dots \, \cdot \,(\aexp,n) \, \cdot \,w_{\adouble(2^{n-1})},$$
which puts $2^n$ distinct tokens into $\store$.

\emph{Towering:}
Assume there are $n$ distinct tokens $\{1,2,\dots,n\}$ in $\waittow$.  
The only efficient transition
is $\waitexp\tto{= r}{\atower} \waitexp$. 
In particular, firing $\#$-transitions activates a reset, and should be avoided as long as
some token is in $\waittow$. 
This implies that for all $1\leq i\leq n$, the  $i$-token in $\waittow$ can 
leave the location only individually on the input~$(\atower,i)$. 
Now, inputting $(\aexp,1)$ moves the $1$-token to $\waitexp$. 
From above we know that the only efficient way for guiding a single token in $\waittow$ towards synchronization is by inputting the data word $w_{\aexp(1)}$, resulting in two distinct tokens in $\store$:  $1$ and $2$.  
We can now proceed to remove the $2$-token from $\waittow$ by inputting $(\atower,2)$. Note that this also guides the  $\{1,2\}$-tokens residing in $\store$ to $\waitexp$. 
Again, for efficient synchronization, we must input the data word $w_{\aexp(2)}$, which results in four distinct tokens $\{1,2,3,4\}$ in $\store$. 
It is now easy to see  
that the only efficient way to guide all $n$ tokens out of $\waittow$ is by inputting the data word 
$$w_{\atower(n)}=(\atower,1) \, \cdot \,w_{\aexp(1)} \, \cdot \,(\atower,2) \, \cdot \,w_{\aexp(2)} \, \cdot \,(\atower,3) \, \cdot \,w_{\aexp(4)} \, \cdot \,\dots \, \cdot \,(\atower,n) \, \cdot \,w_{\aexp(\tower(n-1))},$$
which puts $\tower(n)$ distinct tokens into $\store$.

Now, 
after the (forced) initial reset by firing $\star$-transitions, 
it is easy to see that the only data word that advances in synchronizing is $(\arepeat,2)(\arepeat,3)\cdots (\arepeat,n)$.
It replicates the $1$-token to
$n$ distinct tokens $1, 2, \cdots, n$, which are placed into~$\waittow$.  
From above we know that the only efficient way to guide all $n$ tokens out of $\waittow$ is by inputting $w_{\atower(n)}$, which places $\tower(n)$ distinct tokens into $\store$. 
We can now fire $\#$-transitions to synchronize~$\R_{\tower(n)}$ without evoking a reset, but note that due to the equality guard at the $\#$-transition from $\store$ to $\synch$, each of the $\tower(n)$ distinct tokens in $\store$ can move to $\synch$ only individually. 
This implies $\abs{\data(w)}\ge \tower(n)$ for all synchronizing words~$w$. 
\end{proof}

We can now use similar ideas as in Lemma~\ref{lemmatowerNRA} for defining a family of $1$-NRAs $\R_{A_n(m)}$ ($n,m\in \nat$)  such that all
synchronizing data words of $\R_{A_n(m)}$ have data efficiency at least~$A_n(m)$, where $A_n$ is at level $n$ of the Ackermann hierarchy. 
This provides a good intuition that the synchronization problem for NRAs must be $\ackermann$-hard, even 
if the NRA has a single register. 
In the following, we prove that the synchronization problem and the non-universality problem for NRAs are interreducible.

Let us first define the non-universality problem for RAs. 
To define the language of a given NRA~$\R$, we equip it with 
an initial location~$\loc_\is$ and a set $\locs_\acc$ of accepting locations,
where, without loss of generality, we assume that all outgoing transitions from~$\loc_\is$ update all registers.
The language~$L(\R)$ is the set of all data words~$w\in (\alphabet\times\domain)^*$, for which  
there is a run from~$(\loc_\is,\valuation_\is)$ to $(\loc_\acc,\valuation_\acc)$
such that $\loc_\acc \in \locs_\acc$ and $\valuation_\is,\valuation_f\in \domain^{\abs{\reg}}$.
The non-universality problem asks, given an RA, whether there exists some data word $w$ over $\alphabet$ such that  $w\not\in L(\R)$. 
We adopt an established reduction in~\cite{DBLP:conf/fsttcs/0001JLMS14} to provide the following Lemma.

\newcommand{\lemmaNonunivToSynch}{
The non-universality problem is reducible \mbox{to the synchronization problem for NRAs.}}
\begin{lemma}
	\label{lemma_nonuniv_to_synch}
	\lemmaNonunivToSynch
\end{lemma}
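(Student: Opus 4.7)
My plan is to reduce non-universality of an NRA $\R$ (with initial location $\loc_{\is}$ and accepting locations $\locs_{\acc}$) to synchronization of a new NRA $\R'$ in polynomial time, by augmenting $\R$ with fresh locations and letters that encode the acceptance test. Concretely, I add a fresh absorbing location $\loc_{\synch}$ equipped with self-loops on every letter that update all registers to the current input datum; a fresh auxiliary location $\loc_{\bad}$ whose only non-self-loop transition is a $\$$-transition to $\loc_{\is}$ with a full register update; and two fresh letters $\$, \#$. The $\$$-letter from every $\loc \in \locs$ (and from $\loc_{\bad}$) leads to $\loc_{\is}$ with a full register update, while $\$$ from $\loc_{\synch}$ is a self-loop. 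The $\#$-letter from every non-accepting $\loc \in \locs \setminus \locs_{\acc}$ leads to $\loc_{\synch}$ with full register update, whereas from every accepting $\loc \in \locs_{\acc}$ it is a self-loop \emph{without} register update, chosen so that any accepting $\R$-run can be lifted to an $\R'$-run passing unchanged through intermediate $\#$-letters. The role of $\loc_{\bad}$ is to force every synchronizing data word to contain at least one $\$$.

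For the forward direction, given $w \notin L(\R)$ I claim that $w_{\synch} = (\$, d_0)\cdot w \cdot (\#, d_f)$ synchronizes $\R'$. The $(\$, d_0)$-step collapses every configuration to the pair $\{(\loc_{\is}, v_0), (\loc_{\synch}, v_0)\}$ with $v_0 = (d_0, \ldots, d_0)$; reading $w$ keeps $\loc_{\synch}$ trapped at $\loc_{\synch}$ and guides every $\R$-run from $(\loc_{\is}, v_0)$ into a non-accepting configuration (by $w \notin L(\R)$ and the assumption that $L(\R)$ does not depend on the initial register valuation, since all outgoing transitions from $\loc_{\is}$ update all registers); and the final $(\#, d_f)$-step unifies every remaining configuration at the unique $(\loc_{\synch}, v_f)$ with $v_f = (d_f, \ldots, d_f)$.

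For the backward direction, suppose $w_{\synch}$ is synchronizing for $\R'$. Because $\loc_{\synch}$ is absorbing and populated from the initial configuration set, the synchronization location must be $\loc_{\synch}$; and because $\loc_{\bad}$ is populated initially and can be exited only via $\$$, the word $w_{\synch}$ must contain at least one $\$$. Appending $(\#, d)$ to $w_{\synch}$ if necessary (which preserves synchronization by absorption at $\loc_{\synch}$), I may assume $w_{\synch}$ ends with $\#$. Let $(\$, d)$ be its last occurrence of $\$$ and write $w_{\synch} = u \cdot (\$, d) \cdot v$, so that after $u \cdot (\$, d)$ the configuration set is exactly $\{(\loc_{\is}, v_d), (\loc_{\synch}, v_d)\}$. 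Letting $w$ denote the $\Sigma$-projection of $v$, I argue $w \notin L(\R)$ by lifting: an accepting $\R$-run $\rho$ on $w$ from $(\loc_{\is}, v_d)$ can be transported through $\R'$ by using $\rho$'s transitions on $\Sigma$-letters and the no-update $\#$-self-loops at accepting locations, arriving at an accepting (non-$\loc_{\synch}$) configuration at the final $\#$, which contradicts synchronization.

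The main obstacle is handling intermediate $\#$-letters inside $v$: a partial run that visits a non-accepting state at a $\#$-position would be diverted into $\loc_{\synch}$, which is exactly what we want for synchronization but breaks the naive lift from $\R$-runs to $\R'$-runs. The no-register-update rule on accepting $\#$-self-loops is the technical device that preserves the lift for accepting $\rho$ that stays in $\locs_{\acc}$ at every $\#$-position; to exclude accepting runs that zigzag into and out of $\locs_{\acc}$ at intermediate $\#$-positions, I will argue (by choosing the smallest suitable suffix, or by induction on the number of intermediate $\#$'s) that whenever synchronization holds we can always extract a word $w'$ that is not in $L(\R)$, typically by taking the $\Sigma$-projection up to an appropriately chosen $\#$-position in $v$.
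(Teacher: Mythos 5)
Your forward direction is fine, but the backward direction contains a genuine gap that your construction cannot repair: the decision to make $\#$ a (non-updating) \emph{self-loop} at accepting locations is what breaks the reduction. The obstacle you flag --- runs that ``zigzag'' in and out of $\locs_\acc$ at intermediate $\#$-positions --- is not a technicality to be handled by picking a clever suffix; it makes the claimed equivalence false. Concretely, take $\Sigma=\{a\}$, locations $s,p,q$ with $s,q$ accepting and $p$ not, initial location $s$, and ($\true$-guarded, update-free) transitions $s\xrightarrow{a}p$, $s\xrightarrow{a}q$, $p\xrightarrow{a}q$, $q\xrightarrow{a}p$. This NRA is universal: after any nonempty word the set reachable from $s$ is $\{p,q\}$, and $q$ is accepting. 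Yet in your $\R'$ the word $(\$,d)(a,e)(\#,e)(a,e)(\#,e)$ synchronizes: after $(\$,d)(a,e)$ the surviving non-$\loc_{\synch}$ tokens sit in $p$ and $q$; the first $\#$ sends the $p$-token to $\loc_{\synch}$ and self-loops the $q$-token; the next $a$ moves that token to $p$; the second $\#$ sends it to $\loc_{\synch}$. Every individual run hits a non-accepting location at \emph{some} checkpoint even though \emph{every} checkpoint prefix is in $L(\R)$. Synchronization only yields a ``for every run there exists a checkpoint'' statement, and no choice of suffix or induction on the number of $\#$'s can convert that into the ``there exists a word rejected along all runs'' statement you need --- in this example there is simply no witness of non-universality to extract.

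The paper's construction differs in exactly the spot where yours fails: on $\#$, accepting locations are sent to a trap location $\lreset$ whose only outgoing non-self-loop transition is labelled by the reset letter $\star$. A shortest synchronizing word is then decomposed around the \emph{last} occurrence of $\star$ and the \emph{first} $\#$ following it; the segment $w_1$ in between contains no $\#$ at all (so the intermediate-$\#$ problem never arises), and the part after that $\#$ contains no $\star$, so any token that reaches an accepting location after the last reset is permanently stuck in $\lreset$ by that $\#$, contradicting synchronization. This is how one extracts a genuine pure-$\Sigma$ witness $w_1\notin L(\R)$. To salvage your variant you would have to replace the accepting $\#$-self-loops by transitions into such a trap and redo the suffix extraction accordingly.
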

The detailed proof can be found in Appendix \ref{append_NRAs}. As an immediate result of Lemma~\ref{lemma_nonuniv_to_synch} and the undecidability of 
the non-universality problem for NRAs~(Theorems 2.7 and 5.4 in~\cite{DBLP:journals/tocl/DemriL09}), we obtain the following theorem.	
\begin{theorem} \label{theore-mundec-NRA}
	The synchronization problem for NRAs is undecidable.
\end{theorem}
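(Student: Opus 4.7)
The plan is to derive the undecidability as a direct corollary of Lemma~\ref{lemma_nonuniv_to_synch} together with the classical undecidability of the non-universality problem for NRAs. The statement is essentially a composition of two already-established facts, so my proof would be short.

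First, I would invoke the known undecidability result for non-universality of NRAs from Demri and Lazic (Theorems~2.7 and~5.4 of~\cite{DBLP:journals/tocl/DemriL09}): given an NRA $\R$ equipped with an initial location and a set of accepting locations, deciding whether there exists some data word $w \in (\alphabet \times \domain)^{*}$ with $w \notin L(\R)$ is undecidable.

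Second, I would apply Lemma~\ref{lemma_nonuniv_to_synch}, which provides an effective reduction from the non-universality problem to the synchronization problem for NRAs. That is, from any input NRA $\R$ one constructs an NRA $\R'$ such that $\R'$ admits a synchronizing data word if and only if $\R$ is not universal. Composing the reduction with a hypothetical decision procedure for synchronization would yield one for non-universality, contradicting the first step, and hence the synchronization problem for NRAs is undecidable.

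The main difficulty has in fact been pushed into the preceding lemma rather than into this theorem: once the reduction of Lemma~\ref{lemma_nonuniv_to_synch} is in hand, the conclusion is immediate. If I had to design such a reduction from scratch, the key idea (echoing the timed-setting construction of~\cite{DBLP:conf/fsttcs/0001JLMS14}) would be to attach a fresh synchronizing sink that is reachable from every configuration only by a data word witnessing non-membership in $L(\R)$, while accepting runs of $\R$ are steered into a component from which synchronization is impossible; ensuring that this guidance is robust under the infinite branching over $\domain$ would be the delicate part.
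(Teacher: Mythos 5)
Your proposal is correct and matches the paper exactly: the paper derives the theorem as an immediate consequence of Lemma~\ref{lemma_nonuniv_to_synch} combined with the undecidability of non-universality for NRAs from Theorems~2.7 and~5.4 of~\cite{DBLP:journals/tocl/DemriL09}. Your sketch of how the reduction itself works (a synchronizing sink reachable only via a non-membership witness, with accepting runs diverted to a reset component) also accurately reflects the construction given in the appendix for that lemma.
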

Next, we present a reduction showing that, for~$1$-NRAs, the synchronization problem is reducible to the non-universality problem,
providing the tight complexity bounds for the synchronizing problem.

\newcommand{\lemmaReductionSynchUniv}{
The synchronization problem is reducible to \mbox{the non-universality problem for~$1$-NRAs.}}
\begin{lemma}
	\label{lemma_reduction_synch_univ}
	\lemmaReductionSynchUniv	
\end{lemma}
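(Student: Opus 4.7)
The plan is to reduce the synchronization problem for $1$-NRAs to the non-universality problem by constructing, from a given $1$-NRA $\R = \tuple{\locs, \{r\}, \alphabet, T}$, a $1$-NRA $\R'$ (equipped with an initial location and a set of accepting locations) whose language $L(\R')$ is \emph{exactly} the set of non-synchronizing data words for $\R$. Then $\R$ admits a synchronizing data word iff some data word lies outside $L(\R')$, iff $\R'$ is non-universal, giving the desired reduction. Together with Lemma~\ref{lemma_nonuniv_to_synch} and the $\ackermann$-completeness of non-universality for $1$-NRAs, this will entail $\ackermann$-completeness of synchronization for $1$-NRAs.

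The central idea is to let $\R'$ nondeterministically simulate \emph{two} runs of $\R$ in parallel on the input $w$, starting from two freely chosen configurations in $\locs \times \domain$, and to accept exactly when the two simulated runs end in distinct configurations, i.e., exactly when $w$ is non-synchronizing. The main challenge is that $\R'$ has a single register while a parallel simulation in principle needs two. To circumvent this, locations of $\R'$ are triples $(\loc_1, \loc_2, s)$, where $\loc_1, \loc_2 \in \locs$ are the current locations of the simulated runs and $s$ is a status flag recording the equality partition of the triple $(x_1, x_2, x)$, with $x_1, x_2 \in \domain$ the (abstract) current register values of the two simulated runs and $x$ the current value of $\R'$'s register. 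There are only $5$ such partitions, from ``all three equal'' to ``all three distinct'', so $|L'| \in \mathcal{O}(|\locs|^2)$. On each input $(a,d)$, $\R'$ tests $d = r$ against its own register and, combined with $s$, knows whether $d = x_1$ and whether $d = x_2$ whenever these are determined by $s$; any undetermined equality is guessed nondeterministically. Guided by the resolved equality pattern, $\R'$ fires one compatible $a$-transition from each of $\loc_1, \loc_2$ in $\R$, either updates or keeps its own register so that it continues to track one of $x_1, x_2$, and moves to a new triple whose status is derived from the guesses and the update bits of the two simulated transitions. The initial location $\loc_\is'$ has outgoing transitions (all updating $\R'$'s register, as required by the language convention) that, on the first input $(a,d)$, nondeterministically encode the free choice of starting configurations $(\loc_1^0, x_1^0), (\loc_2^0, x_2^0) \in \locs \times \domain$ together with the initial equality pattern among $d, x_1^0, x_2^0$. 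Accepting locations are exactly those $(\loc_1, \loc_2, s)$ with either $\loc_1 \neq \loc_2$, or $\loc_1 = \loc_2$ together with a status flag certifying $x_1 \neq x_2$.

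Correctness has two symmetric directions. If $w$ is non-synchronizing, two runs of $\R$ on $w$ witness distinct ending configurations, and the equality patterns between the inputs of $w$ and the registers of those runs induce a legal sequence of statuses for $\R'$, giving an accepting run. Conversely, an accepting run of $\R'$ on $w$ produces a sequence of status flags and nondeterministic guesses that is locally consistent: every ``$=$'' guess is immediately pinned to the current input datum, and every ``$\neq$'' guess remains satisfiable because $\domain$ is infinite; hence concrete values for $x_1, x_2$ can be instantiated along the runs to recover two genuine runs of $\R$ on $w$ with distinct endings.

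The main obstacle is the finite but careful case analysis of the status-transition function: for every combination of the current status, the equality pattern (resolved or guessed) of the input with each $x_i$, and the update bits of the two simulated $a$-transitions, one must define the new status so that $\R'$'s register continues to store a value that suffices to verify future guesses against. A routine enumeration shows that the five statuses are closed under these updates, and that every genuine pair of runs of $\R$ is realized by some status sequence. Since $|\R'|$ is polynomial in $|\R|$, the reduction runs in polynomial time, completing the proof.
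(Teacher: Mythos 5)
Your reduction goes in the right direction (synchronization $\to$ non-universality), but the core construction is unsound. The problem is the nondeterministic guessing of the equality between the current input datum and the \emph{untracked} simulated register value. At any moment your automaton $\R'$ stores only one datum, so at most one of $x_1,x_2$ is verifiable against the input; the other one, as soon as it has been pinned to a concrete value --- by an update transition of its simulated run (which sets it to the current input datum) or by an earlier ``$=$'' guess --- becomes a concrete past datum that $\R'$ does not remember, and the five-element status abstraction cannot record which datum it was. All later guesses ``$d=x_2$'' or ``$d\neq x_2$'' against that pinned value are therefore unverifiable. Consequently $\R'$ admits accepting runs whose guess sequences are globally inconsistent: for instance, run $2$ updates its register to $d_1$ at step $1$ while your register tracks $x_1\neq d_1$; at a later step the input is again $d_1$, your register test returns ``$\neq$'', the status leaves $d$ versus $x_2$ undetermined, and $\R'$ may guess ``$\neq x_2$'' and fire an inequality-guarded transition that the genuine run cannot take. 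Hence $L(\R')$ strictly over-approximates the set of non-synchronizing words, which breaks the direction ``$\R$ has a synchronizing word $\Rightarrow$ $\R'$ is non-universal'': a genuinely synchronizing word may still be accepted by $\R'$ via such a phantom pair of runs. Your appeal to the infiniteness of $\domain$ only covers the case where the untracked value is still unconstrained; it does not help once that value has been pinned. This is not a fixable bookkeeping issue: faithfully simulating two parallel runs of a $1$-NRA requires maintaining two independent data values, which is exactly what a single register cannot do.

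The paper avoids this obstacle by not simulating runs at all. It first shrinks $\locs\times\domain$ to $\locs\times\{x\}$ using a data-efficiency-$1$ word (an analogue of Lemma~\ref{lemmafiniteD}, checkable in $\nlogspace$), and then encodes an entire synchronization process as a data word made of blocks, each block listing the \emph{whole} current set of reached configurations as (location, datum) pairs followed by the next input of the candidate synchronizing word. The constructed $1$-NRA $\R_{\mathsf{comp}}$ is a union of small automata, each accepting exactly the words violating one local consistency condition between consecutive blocks; every such check needs to remember only a single datum, so one register suffices. A valid encoding exists iff $\R$ has a synchronizing word iff $\R_{\mathsf{comp}}$ is non-universal. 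To repair your argument you would need to switch to an encoding of this kind rather than a product simulation.
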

\begin{proof}
	We establish a reduction from the synchronization problem to the non-universality problem for~$1$-NRAs as follows.
	Given a $1$-NRA~$\R =\tuple{\locs,\reg,\Sigma,T}$, we construct a  $1$-NRA~$\R_{\mathsf{comp}}$ equipped with an initial location
	and a set of accepting locations such that $\R$ has some synchronizing word if, and only if, there exists some data word that is not in $L(\R_{\mathsf{comp}})$.

	First, we see that an analogue of Lemma~\ref{lemmafiniteD} holds for $1$-NRAs: 
for all $1$-NRAs with some synchronizing data word, 
there exists some word~$w$ with data efficiency~$1$
such that $\post(\locs\times \domain, w)\subseteq \locs \times \data(w)$. 
For all locations~$\loc \in \locs$, such a data word must  update the register by firing an inequality-guarded 
transition that is reached only via inequality-guarded transitions;
this can be checked  in~$\nlogspace$.
Given~$\R$, we assume that such a data word~$w$ always exists;
otherwise, we define $\R_{\mathsf{comp}}$ to be a $1$-NRA with a single (initial and accepting) location equipped with self-loops for all letters, 
so that  $L(\R_{\mathsf{comp}})=(\alphabet\times\domain)^*$.    
Given $\data(w)=\{x\}$, we say that $\R$ has some synchronizing word~$v$ if  
$\post(\locs\times \{x\}, v)$ is a singleton. 

\bigskip

Second, we define a data language $\lang$ such that data words in this 
language are encodings of the synchronizing process.
Let $\locs=\{\loc_1,\loc_2,\cdots,\loc_n\}$ be the set of locations and~$x,y$ two distinct data.
Informally, each data word in $\lang$ starts with the 
\begin{itemize}
	\item \emph{initial block}: a delimiter~$(\star,y)$, the sequence~$(\loc_1,x),(\loc_2,x),\cdots,(\loc_n,x)$ and 
		an input~$(a,d)\in \alphabet\times \domain$ as the beginning of a synchronizing word. The initial block is followed by several 
	\item \emph{normal blocks}: the delimiter~$(\star,y)$, the set of successor configurations reached from the configurations and the input of the previous block,
		and the next input $(a',d')$ of the synchronizing data word. The data word finally ends with the 
	\item \emph{final block}: the delimiter~$(\star,y)$, a single successor  configuration 
		reached from the configurations and the input of the previous block, and the delimiter~$(\star,y)$. 
\end{itemize}

\medskip
Formally, the  language $\lang$ is defined over the alphabet $\alphabet_{\lang}=\alphabet\cup \locs\cup\{\star\}$
where $\star \not \in\alphabet\cup\locs$. 
It contains all data words~$u$ that 
satisfy the following \emph{membership conditions}: 

\begin{enumerate}
\item The data words~$u$ starts with $(\star,y)(\loc_1,x),(\loc_2,x),\cdots,(\loc_n,x)$ for some $x,y\in \domain$ with $y\neq x$; 
	this condition guarantees  the correctness of the encoding for the initial block.

\item Let $\proj(u)$ be the projection of~$u$ into  $\alphabet_{\lang}$ (\ie, omitting the data values).
   Then there exists some~$\loc_{\synch}\in\locs$ where $\proj(u)\in (\star \,\locs^{+} \alphabet)^+\star \,\loc_{\synch}\, \star$.
   This condition guarantees  the right form of data words to be encodings of  synchronizing processes.

 \end{enumerate}

 The next two conditions  guarantee the uniqueness of the delimiter:
	\begin{enumerate}
		\addtocounter{enumi}{2} 
	\item The letter $\star$ in~$u$ occurs only with datum $y$. 
	\item No other letter in~$u$ occurs with datum~$y$.
\end{enumerate}

The next three conditions guarantee that all the successors that can be reached from configurations and inputs in each block are correctly 
inserted in the next block.
For all $(\loc,x)\in \locs\times \domain$  and $(a,d)\in \alphabet\times \domain$  in the same block,
	\begin{enumerate}
		\addtocounter{enumi}{4} 
	\item  if $x=d$ and there exists a transition $\loc\tto{= r}{a}\loc'$ (with or without update), then 
	  $(\loc',x)$   must be in the next block. 
	\item if $x\neq d$ and there exists a transition $\loc\tto{\neq r}{a}\loc'$, then 
	  $(\loc',x)$  must be in the next block. 
	\item if $x\neq d$ and there exists a transition $\loc\ttto{\neq r}{a}{r}\loc'$ then 
	  $(\loc',d)$ must be  in the next block. 
\end{enumerate}

By construction, the NRA $\R$ has some synchronizing data word 
if, and only if, $\lang \neq\emptyset$.
Below, we construct a $1$-NRA $\R_{\mathsf{comp}}$ that accepts the complement of $\lang$. 
Then, the NRA $\R$ has some synchronizing data word 
if, and only if, there exists some data word that is not in $L(\R_{\mathsf{comp}})$.

The $1$-NRA $\R_{\mathsf{comp}}$ is the union of several $1$-NRAs that are in the family of $1$-NRAs $\Rfamily_1,\Rfamily_2,\cdots, \Rfamily_7$, where an $1$-NRA is in the family $\Rfamily_i$ if it violates the  $i$-th 
condition among the membership conditions in~$\lang$.
 
\begin{enumerate}
\item Family~$\Rfamily_1$: we add a $1$-NRA that accepts data words not starting with~$(\star,y)(\loc_1,x),\cdots,(\loc_n,x)$.
\begin{center}
\scalebox{.9}{
\begin{tikzpicture}[>=latex',shorten >=1pt,node distance=1.5cm,on grid,auto, thick,
roundnode/.style={circle, draw=black,  thick, minimum size=2mm},
squarednode/.style={rectangle, draw=black, thick, minimum size=3mm}]
 
\node[squarednode] (1)   {$i$}; 
\node[squarednode] (2)  [right =2cm of 1] {$0$};
\node[squarednode] (3)  [right=2.5cm of 2] {$1$};
\node[squarednode,draw=none] (4)  [right =2cm of 3] {$\cdots$};
\node[squarednode] (n)  [right =2cm of 4] {$n$};
\node[squarednode,accepting] (f)  [below =1.5cm of 2] {$f$};

\path [->] (1) edge node [midway,above] {\scriptsize{$~~\star~~\downarrow$}} (2);
\path [->] (1) edge node [midway,below] {\scriptsize{else }} (f);
\path [->] (2) edge node [sloped,above] {\scriptsize{$\neq r~~\loc_1~~\downarrow$}} (3);
\path [->] (2) edge node [midway,left] {\scriptsize{else }} (f);
\path [->] (3) edge node [sloped,above] {\scriptsize{$= r~~\loc_2~~$}} (4);
\path [->] (3) edge node [midway,right] {\scriptsize{else }} (f);
\path [->] (4) edge node [sloped,above] {\scriptsize{$= r~~\loc_n~~$}} (n);
\path [->] (n) edge [loop right] node [right] {\scriptsize{$\alphabet'$ }} (n);
\path [->] (f) edge [loop right] node [right] {\scriptsize{$\alphabet'$ }} (f);

\end{tikzpicture}
}
\end{center}

\item Family~$\Rfamily_2$: we add a DFA  that accepts data words~$u$ such that $\proj(u)$ is 
not in the regular language~$(\star \,\locs^{+} \alphabet)^+\star \,\loc_{\synch}\, \star$.

\item Family~$\Rfamily_3$:  we add a~$1$-NRA that accepts data words in which 
	 two delimiters~$\star$ have different data. 
\begin{center}
\scalebox{.8}{
\begin{tikzpicture}[>=latex',shorten >=1pt,node distance=1.5cm,on grid,auto, thick,
roundnode/.style={circle, draw=black, thick, minimum size=2mm},
squarednode/.style={rectangle, draw=black, thick, minimum size=3mm}]
 
\node[squarednode] (1)   {$1$}; 
\node[squarednode] (2)  [right =2cm of 1] {$2$};
\node[squarednode,accepting] (3)  [right=2cm of 2] {$3$};

\path [->] (1) edge node [midway,above] {\scriptsize{$~~\star~~\downarrow$ }} (2);
\path [->] (2) edge [loop above] node [above] {\scriptsize{else}} ();
\path [->] (2) edge node [sloped,above] {\scriptsize{$\neq r~~\star~~$}} (3);
\path [->] (3) edge [loop right] node [right] {\scriptsize{$\alphabet'$ }} (3);
\end{tikzpicture}
}
\end{center}

\item Family~$\Rfamily_4$:  we add a~$1$-NRA that accepts data words in which 
the datum  of first $\star$ is not used only by occurrences of~$\star$. 
\begin{center}
\scalebox{.9}{
\begin{tikzpicture}[>=latex',shorten >=1pt,node distance=1.5cm,on grid,auto, thick,
roundnode/.style={circle, draw=black, thick, minimum size=2mm},
squarednode/.style={rectangle, draw=black, thick, minimum size=3mm}]
 
\node[squarednode] (1)   {$1$}; 
\node[squarednode] (2)  [right =2cm of 1] {$2$};
\node[squarednode,accepting] (3)  [right=2.5cm of 2] {$3$};

\path [->] (1) edge node [midway,above] {\scriptsize{$~~\star~~\downarrow$ }} (2);
\path [->] (2) edge [loop above] node [above] {\scriptsize{else}} ();
\path [->] (2) edge node [sloped,above] {\scriptsize{$=r~~\alphabet'\setminus\{\star\}~~$}} (3);
\path [->] (3) edge [loop right] node [right] {\scriptsize{$\alphabet'$ }} (3);
\end{tikzpicture}
}
\end{center}

\item Family~$\Rfamily_5$: for all transitions $\loc\tto{= r}{a}\loc'$, we add a~$1$-NRA that
only accepts data words such that  one block contains some $(\loc,x)$
and $(a,d)$ with $x= d$ where the next block does not have~$(\loc',x)$.
\begin{center}
\scalebox{.8}{
\begin{tikzpicture}[>=latex',shorten >=1pt,node distance=1.5cm,on grid,auto, thick,
roundnode/.style={circle, draw=black, thick, minimum size=2mm},
squarednode/.style={rectangle, draw=black, thick, minimum size=3mm}]
 
\node[squarednode] (1)   {$1$}; 
\node[squarednode] (2)  [right =2cm of 1] {$2$};
\node[squarednode] (3)  [right=2cm of 2] {$3$};
\node[squarednode] (4)  [right =2cm of 3] {$4$};
\node[squarednode,accepting] (5)  [right =2cm of 4] {$5$};
\node[squarednode,accepting] (6)  [above right=1cm and 1.5cm  of 5] {$6$};
\node[squarednode] (7)  [below =2cm of 6] {$7$};

\path [->] (1) edge [loop above] node [above] {\scriptsize{$\alphabet'$ }} ();
\path [->] (1) edge node [midway,above] {\scriptsize{$\star$ }} (2);
\path [->] (2) edge [loop above] node [above] {\scriptsize{$\locs\setminus\{\loc\}$ }} ();
\path [->] (2) edge node [sloped,above] {\scriptsize{$~~\loc~~\downarrow$}} (3);
\path [->] (3) edge [loop above] node [above] {\scriptsize{$\locs$ }} ();
\path [->,red] (3) edge node [sloped,above,red] {\scriptsize{$= r~~a~~$}} (4);
\path [->] (4) edge node [sloped,above] {\scriptsize{$\star$ }} (5);
\path [->] (4) edge [loop above] node [above] {\scriptsize{$\locs\setminus\{\loc\}$ }} ();
\path [->] (5) edge [loop above] node [above]  {\scriptsize{else}} ();
\path [->] (5) edge node [sloped,above] {\scriptsize{$\star$ }} (6);
\path [->] (5) edge node [sloped,below] {\scriptsize{$=r~~\loc'~~$}} (7);
\path [->] (6) edge [loop right] node [right] {\scriptsize{$\alphabet'$ }} (6);
\path [->] (7) edge [loop right] node [right] {\scriptsize{$\alphabet'$ }} (7);
\end{tikzpicture}
}
\end{center}

\item Family~$\Rfamily_6$: for all transitions $\loc\tto{\neq r}{a}\loc'$, we add a~$1$-NRA that
	only accepts data words such that  one block contains some $(\loc,x)$
	and $(a,d)$ with $x\neq d$ where the next block does not have~$(\loc',x)$.
\begin{center}
\scalebox{.9}{
\begin{tikzpicture}[>=latex',shorten >=1pt,node distance=1.5cm,on grid,auto, thick,
roundnode/.style={circle, draw=black, thick, minimum size=2mm},
squarednode/.style={rectangle, draw=black, thick, minimum size=3mm}]
 
\node[squarednode] (1)   {$1$}; 
\node[squarednode] (2)  [right =2cm of 1] {$2$};
\node[squarednode] (3)  [right=2cm of 2] {$3$};
\node[squarednode] (4)  [right =2cm of 3] {$4$};
\node[squarednode,accepting] (5)  [right =2cm of 4] {$5$};
\node[squarednode,accepting] (6)  [above right=1cm and 1.5cm  of 5] {$6$};
\node[squarednode] (7)  [below =2cm of 6] {$7$};

\path [->] (1) edge [loop above] node [above] {\scriptsize{$\alphabet'$ }} ();
\path [->] (1) edge node [midway,above] {\scriptsize{$\star$ }} (2);
\path [->] (2) edge [loop above] node [above] {\scriptsize{$\locs\setminus\{\loc\}$ }} ();
\path [->] (2) edge node [sloped,above] {\scriptsize{$~~\loc~~\downarrow$}} (3);
\path [->] (3) edge [loop above] node [above] {\scriptsize{$\locs$ }} ();
\path [->,red] (3) edge node [sloped,above,red] {\scriptsize{$\neq r~~a~~$}} (4);
\path [->] (4) edge [loop above] node [above] {\scriptsize{$\locs\setminus\{\loc\}$ }} ();
\path [->] (4) edge node [sloped,above] {\scriptsize{$\star$ }} (5);
\path [->] (5) edge [loop above] node [above]  {\scriptsize{else}} ();
\path [->] (5) edge node [sloped,above] {\scriptsize{$\star$ }} (6);
\path [->] (5) edge node [sloped,below] {\scriptsize{$=r~~\loc'~~$}} (7);
\path [->] (6) edge [loop right] node [right] {\scriptsize{$\alphabet'$ }} (6);
\path [->] (7) edge [loop right] node [right] {\scriptsize{$\alphabet'$ }} (7);
\end{tikzpicture}
}
\end{center}

\item Family~$\Rfamily_7$: for all transitions $\loc\ttto{\neq r}{a}{r}\loc'$, we add a~$1$-NRA that
	only accepts data words such that  one block contains some $(\loc,x)$
	and $(a,d)$ with $x\neq d$ where the next block does not have~$(\loc',d)$.
\begin{center}
\scalebox{.9}{
\begin{tikzpicture}[>=latex',shorten >=1pt,node distance=1.5cm,on grid,auto, thick,
roundnode/.style={circle, draw=black, thick, minimum size=2mm},
squarednode/.style={rectangle, draw=black, thick, minimum size=3mm}]
 
\node[squarednode] (1)   {$1$}; 
\node[squarednode] (2)  [right =2cm of 1] {$2$};
\node[squarednode] (3)  [right=2cm of 2] {$3$};
\node[squarednode] (4)  [right =2cm of 3] {$4$};
\node[squarednode,accepting] (5)  [right =2cm of 4] {$5$};
\node[squarednode,accepting] (6)  [above right=1cm and 1.5cm  of 5] {$6$};
\node[squarednode] (7)  [below =2cm of 6] {$7$};

\path [->] (1) edge [loop above] node [above] {\scriptsize{$\alphabet'$ }} ();
\path [->] (1) edge node [midway,above] {\scriptsize{$\star$ }} (2);
\path [->] (2) edge [loop above] node [above] {\scriptsize{$\locs\setminus\{\loc\}$ }} ();
\path [->] (2) edge node [sloped,above] {\scriptsize{$~~\loc~~\downarrow$}} (3);
\path [->] (3) edge [loop above] node [above] {\scriptsize{$\locs$ }} ();
\path [->,red] (3) edge node [sloped,above,red] {\scriptsize{$\neq r~~a~~\downarrow$}} (4);
\path [->] (4) edge [loop above] node [above] {\scriptsize{$\locs\setminus\{\loc\}$ }} ();
\path [->] (4) edge node [sloped,above] {\scriptsize{$\star$ }} (5);
\path [->] (5) edge [loop above] node [above]  {\scriptsize{else}} ();
\path [->] (5) edge node [sloped,above] {\scriptsize{$\star$ }} (6);
\path [->] (5) edge node [sloped,below] {\scriptsize{$=r~~\loc'~~$}} (7);
\path [->] (6) edge [loop right] node [right] {\scriptsize{$\alphabet'$ }} (6);
\path [->] (7) edge [loop right] node [right] {\scriptsize{$\alphabet'$ }} (7);
\end{tikzpicture}
}
\end{center}
\end{enumerate}	

The proof is complete. 
\end{proof}

By Lemmas~\ref{lemma_nonuniv_to_synch} and~\ref{lemma_reduction_synch_univ} and  $\ackermann$-completeness of the non-universality problem for $1$-NRA,  which follows from Theorem 2.7 and the proof of Theorem 5.2 in~\cite{DBLP:journals/tocl/DemriL09}, and the result for counter automata with incrementing errors in~\cite{DBLP:conf/lics/FigueiraFSS11},
we obtain the following theorem.

\begin{theorem}
	The synchronization problem for $1$-NRAs is $\ackermann$-complete.	
\end{theorem}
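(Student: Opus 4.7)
The plan is to establish both upper and lower bounds on the complexity of the synchronization problem for $1$-NRAs by reducing it, in polynomial time, to and from the non-universality problem for $1$-NRAs, whose $\ackermann$-completeness is known from prior work.

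For the \textbf{upper bound}, I would invoke Lemma~\ref{lemma_reduction_synch_univ}, which provides a polynomial-time reduction from the synchronization problem for $1$-NRAs to the non-universality problem for $1$-NRAs. Since the non-universality problem for $1$-NRAs lies in $\ackermann$ (by Theorem~2.7 of~\cite{DBLP:journals/tocl/DemriL09} combined with the analysis of incrementing counter automata in~\cite{DBLP:conf/lics/FigueiraFSS11}), this reduction immediately places the synchronization problem for $1$-NRAs in $\ackermann$ as well. I should double-check that the construction of $\R_{\mathsf{comp}}$ in the proof of Lemma~\ref{lemma_reduction_synch_univ} indeed produces a $1$-NRA (not a multi-register NRA), which is the case since every individual component $\Rfamily_i$ is built with a single register.

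For the \textbf{lower bound}, I would apply Lemma~\ref{lemma_nonuniv_to_synch}, which reduces the non-universality problem for NRAs to the synchronization problem for NRAs. A key verification step is that this reduction, when instantiated on a $1$-NRA input, produces a $1$-NRA output (so that no extra register is introduced), so the reduction actually maps non-universality for $1$-NRAs to synchronization for $1$-NRAs. Given $\ackermann$-hardness of non-universality for $1$-NRAs (from the proof of Theorem~5.2 in~\cite{DBLP:journals/tocl/DemriL09} together with~\cite{DBLP:conf/lics/FigueiraFSS11}), this immediately yields $\ackermann$-hardness of the synchronization problem for $1$-NRAs.

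The hardest part is not any new argument but rather verifying that the two reductions of Lemmas~\ref{lemma_nonuniv_to_synch} and~\ref{lemma_reduction_synch_univ} both preserve the single-register property and run in polynomial time; once this is checked the interreducibility is immediate, and the theorem follows by combining the two directions with the cited $\ackermann$-completeness of non-universality for $1$-NRAs.
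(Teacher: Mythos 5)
Your proposal matches the paper's argument exactly: the theorem is obtained by combining Lemma~\ref{lemma_nonuniv_to_synch} (for hardness, noting that the reduction preserves the number of registers) and Lemma~\ref{lemma_reduction_synch_univ} (for membership) with the known $\ackermann$-completeness of non-universality for $1$-NRAs from~\cite{DBLP:journals/tocl/DemriL09} and~\cite{DBLP:conf/lics/FigueiraFSS11}. Your additional check that both reductions preserve the single-register property is precisely the point the paper also flags at the end of its proof of Lemma~\ref{lemma_nonuniv_to_synch}.
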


\section{Length-Bounded synchronizing data words for NRA\lowercase{s}}\label{short}

As proved in the previous section, the synchronization problem for NRAs is in general undecidable. 
In this section, we study the length-bounded synchronization problem for NRAs, in which 
 the synchronizing data words are required to be shorter than a given length (written in binary).

To decide the synchronization problem in $1$-RAs, both in the deterministic and nondeterministic setting, 
we rely on Lemma~\ref{lemmafiniteD}. With this lemma at hand, it was sufficient to search for synchronizing data words that 
first input a single datum~$x$ (chosen arbitrary) as many times as necessary
to have the set of successor configurations included in~$\locs\times \{x\}$. In the next step, this obtained set of successor configurations 
was synchronized in a singleton. However, the shortest synchronizing data words do not always follow this pattern, for 
an example see Figure~\ref{fig:boundedRA}.
Observe that the  data word~$(a,x)(b,y)(b,z)$ is synchronizing with length~$3$ (not exceeding the bound~$3$).
However, all synchronizing data words that repeat a datum such as~$x$, to first bring the RA to a finite set, have length at least~$4$.
The example shows that one cannot rely on the techniques developed in Section~\ref{synchNRAs} to 
decide the length-bounded synchronization problem for NRA.

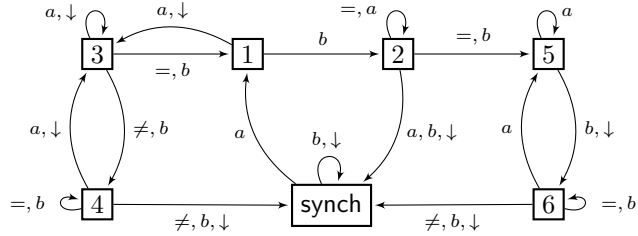
\begin{figure}[t]
\begin{center}

\begin{tikzpicture}[>=latex',shorten >=1pt,node distance=2cm,on grid,auto, 
		roundnode/.style={circle, draw=black,  thick, minimum size=5mm},
squablacknode/.style={rectangle, draw=black,  thick, minimum size=4mm}]

\node[squablacknode] (1) at (0,0) {};
\node at (0,0) {$1$};
\node[squablacknode] (6) [right=of 1] {};
\node [right=of 1] {$2$};

\node[squablacknode] (2) [right =of 6] {};
\node [right =of 6] {$5$};
\node[squablacknode] (4) [below =of 2] {};
\node[below =of 2] {$6$};
\node[squablacknode] (3) [left =of 1] {};
\node [left =of 1] {$3$};
\node[squablacknode] (5) [below =of 3] {};
\node [below =of 3] {$4$};

\node[squablacknode] (7) [below right=of 1, xshift=-0.3cm, yshift=-.6cm] {$\mathsf{synch}$};

\path [->] (3) edge  node [sloped, midway, below] {\scriptsize{$=, b$}}  (1) ;
\path [->] (1) edge [bend right] node [sloped, midway, above] {\scriptsize{$a, \downarrow$}}  (3) ;

\path [->] (6) edge node [sloped, midway, above] {\scriptsize{$=, b$}}  (2) ;
\path [->] (1) edge  node [sloped, midway, above] {\scriptsize{$b$}}  (6) ;
\path [->] (6) edge [bend left] node [right] {\scriptsize{$a,b, \downarrow$}}  (7);
\path [->] (7) edge [bend left]  node [left] {\scriptsize{$a$}}  (1);

\path [->] (4) edge  node [sloped, midway, below] {\scriptsize{$\neq, b, \downarrow$}}  (7);
\path [->] (5) edge  node [sloped, midway, below] {\scriptsize{$\neq, b, \downarrow$}}  (7);

\path [->] (2) edge [bend left] node [right] {\scriptsize{$b, \downarrow$}}  (4);
\path [->] (4) edge [bend left] node [left] {\scriptsize{$a$}}  (2);

\path [->] (3) edge [bend left] node [right] {\scriptsize{$\neq, b$}}  (5);
\path [->] (5) edge [bend left] node [left] {\scriptsize{$a, \downarrow$}}  (3);
\path[->] (3) edge [out=115, in=75,looseness=8]  node [left=0.1cm] {\scriptsize{$a, \downarrow$}} (3);

\path[->] (6) edge [out=115, in=75,looseness=8]  node [left=0.1cm] {\scriptsize{$=, a$}} (6);

\path[->] (2) edge [out=115, in=75,looseness=8]  node [right=0.1cm] {\scriptsize{$a$}} (2);

\path[->] (5) edge [out=195, in=165,looseness=8]  node [left=0.1cm] {\scriptsize{$=, b$}} (5);
\path[->] (4) edge [out=-15, in=15,looseness=8]  node [right=0.1cm] {\scriptsize{$=, b$}} (4);
\path[->] (7) edge [out=115, in=75,looseness=6]  node [sloped, midway, above] {\scriptsize{$b,  \downarrow$}} (7);
\end{tikzpicture}

\end{center}
\caption{An RA with synchronizing data word $(a,x)(b,y)(b,z)$ with three distinct data values $x,y,z$. 
	The approach of using a unique data value to shrink the infinite set of 
	configurations to a finite subset only yields synchronizing data words of length greater than 3. 
   } 
\label{fig:boundedRA}
\end{figure}

\medskip

In this section, we prove 
\begin{theorem} \label{theo-nexp}
	The  length-bounded synchronization problem for NRAs is $\nexptime$-complete.
\end{theorem}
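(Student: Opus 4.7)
The plan is to establish the $\nexptime$ upper bound and the matching $\nexptime$ lower bound separately.

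For the upper bound, the key observation is that although $N$ is given in binary and may be exponential in the input, a candidate synchronizing data word $w$ with $\abs{w}\le N$ uses at most $N$ distinct data values, so $w$ can be written down in space exponential in the input. The algorithm first nondeterministically guesses such a $w$, then verifies deterministically that $\post(\locs\times\domain^{\abs{\reg}}, w)$ is a singleton. Verification is performed symbolically: maintain a set of \emph{symbolic configurations} $(\loc,\tau)$, where $\tau$ is an equivalence type specifying, for each register, either which of the (at most $N$) data values already appearing in $w$ it currently stores, or else that it stores a value outside $\data(w)$. The number of such symbolic configurations is at most $\abs{\locs}\cdot(N+1)^{\abs{\reg}}$, which is exponential in the input; a single-step update on reading $(a,d)$ costs polynomial time per symbolic configuration. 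Iterating the update along all $\abs{w}\le N$ positions and checking that the final set is a singleton gives total exponential time, hence an $\nexptime$ procedure.

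For the lower bound, the plan is to reduce from the membership problem for nondeterministic Turing machines running in time $\mathcal{O}(2^n)$. Given such an NTM $\tm$ and input $x$ of size $n$, I would construct an NRA $\R$ together with a bound $N$ (written in binary, of size polynomial in $n$) such that $\R$ has a synchronizing data word of length at most $N$ if and only if $\tm$ accepts $x$. The intended synchronizing word encodes an entire accepting run of $\tm$: a sequence of $\mathcal{O}(2^n)$ machine configurations, each of $\mathcal{O}(2^n)$ cells, separated by delimiter symbols, annotated with position and time-stamp data. The NRA contains several parallel subautomata that, launched simultaneously from every starting configuration, each verify one consistency property of the encoded run—initial configuration coincides with $x$, successive configurations respect the transition relation, cells outside the head location are preserved, and the final configuration is accepting—and all converge to a common synchronization location only when all checks succeed. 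The family of binary counter NRAs from Lemma~\ref{lemmacountingNRA} is the crucial ingredient: it lets these subautomata compare cells at matching positions, and match configurations at successive time stamps, using only $\mathcal{O}(n)$ locations to track counters of range $2^n$, which is essential for a polynomial-size reduction.

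The main obstacle lies in the lower-bound construction, where \emph{every} starting configuration must be forced to behave correctly under the guessed word. Unlike plain language recognition, synchronization demands that all infinitely many initial configurations end in a single common configuration, so the NRA must reject ``cheating'' data words: any word that fails one of the consistency checks must leave at least one starting configuration in a state distinct from the others. In practice this means equipping each subautomaton with explicit \emph{reset} or \emph{dummy} branches—in the spirit of $\lreset$ in the $\R_{\tower(n)}$ construction of Lemma~\ref{lemmatowerNRA}—that trap the corresponding run whenever a deviation is detected, so that synchronization within the bound $N$ becomes impossible. Showing that the counter subautomata compose cleanly with the checking subautomata, that no spurious synchronizing word of length $\le N$ slips through, and that the overall NRA remains polynomial in $n$, is the delicate engineering step of the proof.
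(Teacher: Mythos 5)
Your upper bound is sound and is essentially the paper's (the paper merely asserts that checking a guessed word is doable in $\exptime$; your symbolic-type verification fills that in, though the abstraction should also record the equality pattern among registers holding values outside $\data(w)$, and a ``singleton'' in the abstract domain must be one in which no register holds such an outside value). The gap is in the lower bound, where your plan stops exactly where the difficulty begins: everything you defer to ``the delicate engineering step'' is the actual content of the proof, and at least two ideas it cannot do without are absent from your sketch.

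First, you do not explain how a polynomial-size NRA with a bounded number of registers can verify the successor relation between Turing-machine configurations that lie $2^n$ cells apart, for \emph{every} tape cell. Annotating cells with ``position and time-stamp data'' does not suffice: corresponding cells of consecutive configurations carry \emph{different} position data, so they cannot be related by an equality test, only by an exact exponential distance check that a single run of an $\mathcal{O}(n)$-location automaton cannot perform. The paper's solution is to force the witness word to contain $2^n$ copies of the entire computation with a single \emph{dotted} cell whose position shifts by one from copy to copy; each copy certifies one tape position, the data act as position identifiers that agree \emph{across} copies, and one register suffices to jump from a dotted cell in one copy to the corresponding cell of the successor configuration in the next copy. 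Second, you invoke the counters of Lemma~\ref{lemmacountingNRA} as reusable black boxes, but those counters encode the count in the \emph{set of simultaneously occupied locations} --- a device tied to the all-runs semantics of synchronization; to exploit them inside the violation-detecting gadgets the paper must interleave auxiliary $\Bit{i}$-letters into the witness word itself and account for them in the bound $N$, which your sketch does not address. Finally, the paper does not build the synchronization instance directly: it proves that the length-bounded universality problem is $\conexptime$-complete and transfers hardness through the length-preserving reduction of Lemma~\ref{lemma_nonuniv_to_synch}. Your direct construction would have to re-derive exactly that complement-automaton discipline (on a good word, \emph{every} run of \emph{every} violation detector must still reach the synchronizing location), so the detour through non-universality is not extra work you are avoiding but the clean packaging of work your proposal leaves undone.
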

The $\nexptime$-membership of the length-bounded synchronization problem is straightforward: guess a data word~$w$ 
shorter than the given length (that is written in binary and thus may be  exponential in the length)  
and check in $\exptime$ whether $w$ is synchronizing. 
Our main contribution  is to prove the $\nexptime$-hardness of this problem, for which in turn,
by Lemma~\ref{lemma_nonuniv_to_synch},  it is sufficient to show that the \emph{length-bounded universality}  problem
is  co-$\nexptime$-complete.
The length-bounded universality  problem  asks, given an RA and 
$N\in \nat$ encoded in binary, whether all data words~$w$ with $\abs{w}\leq N$ are  in the language of the automaton.

\begin{theorem}
	The length-bounded  universality problem for NRAs is  co-$\nexptime$-complete.
\end{theorem}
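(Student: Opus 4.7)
The plan is to establish both directions separately. For the co-$\nexptime$ upper bound, I would show that the complement problem (length-bounded non-universality) lies in $\nexptime$: given an NRA $\R$, accepting location set $\locs_\acc$ and bound $N$ written in binary, nondeterministically guess a data word $w$ with $\abs{w}\le N$. Although $N$ and thus $\abs{w}$ may be exponential in the input size, every register value that appears in any reachable configuration along the run belongs to $\{\nu_\is(r)\mid r\in\reg\}\cup\data(w)$, so at each step the set of reachable configurations has size at most $\abs{\locs}\cdot (\abs{\data(w)}+1)^{\abs{\reg}}$, which is exponential. A straightforward subset-style simulation then decides $w\notin L(\R)$ in exponential time.

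For the co-$\nexptime$ lower bound I would present a variant of the reduction used for the $\nexptime$-hardness of the length-bounded synchronization problem, directly from membership in $\mathcal{O}(2^n)$-time nondeterministic Turing machines. Given an NTM $M$ and input $x$ of length $n$, I construct a polynomial-size NRA $\R$ (equipped with an initial location and accepting locations) together with a bound $N=2^{\mathcal{O}(n)}$ written in binary, such that $M$ accepts $x$ if and only if there exists a data word $w$ with $\abs{w}\le N$ and $w\notin L(\R)$. Data words encode a candidate accepting computation as a sequence of $2^n$ configurations, each of size $2^n$, where each cell is represented as a (symbol, position) pair and positions are encoded by distinct data values. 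I design $\R$ so that $L(\R)$ contains every data word that is \emph{not} a valid accepting computation; equivalently, the complement of $L(\R)$ is the set of valid accepting computations of $M$ on $x$.

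The automaton $\R$ is obtained as the union of several polynomial-size NRAs, each detecting one kind of flaw:  a syntactic malformation of the encoding (a simple DFA-style check on the projection onto $\alphabet$), a discrepancy with the initial configuration of $M$ on $x$ (easily checked position by position by guessing the faulty cell), non-acceptance of the final configuration (one equality-guarded check), and, most importantly, violation of the transition relation between two consecutive configurations. For the last check, the NRA nondeterministically guesses a position $p$ represented by a data value $d$, stores $d$ in a register, records the symbol seen at position $p$ in the current configuration together with those at $p\pm 1$, and then skips exactly $2^n$ cells to the next configuration, where it verifies that the symbol at position $p$ is inconsistent with $M$'s transition function. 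Skipping $2^n$ cells with a polynomial-size NRA is done by embedding the binary-counter gadget of Lemma~\ref{lemmacountingNRA} ($\R_{\counter(n)}$), so that counting to $2^n$ between configurations and addressing of exponentially many positions both remain within polynomial size.

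The main obstacle will be the transition-check component: it must simultaneously maintain a position-identifier in its register, count $2^n$ cells using the binary-counter subautomaton, and remain compatible with the encoding format used by the other components, so that the union accepts precisely the ill-formed or non-accepting data words. Once this component is carefully engineered analogously to the synchronization hardness construction, the reduction is polynomial, $N$ is polynomial in binary, and the equivalence ``$M$ accepts $x$'' $\iff$ ``some data word of length $\le N$ is outside $L(\R)$'' follows, yielding co-$\nexptime$-hardness and completing the proof.
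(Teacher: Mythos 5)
Your upper bound is fine and matches the paper's (guess an at-most-exponentially-long witness and verify non-membership by an exponential-time computation of the reachable configuration set), and the overall architecture of your lower bound --- a union of polynomial-size flaw-detecting NRAs whose complement is the set of encodings of accepting $2^n$-time computations of $\tm$ on $x$ --- is also the paper's. The gap is in the transition-consistency component. You propose that a single component stores the position datum $d$ in its register, remembers the local tape symbols in its finite state, and then ``skips exactly $2^n$ cells'' to the successor configuration by ``embedding'' the gadget $\R_{\counter(n)}$ of Lemma~\ref{lemmacountingNRA}. This does not work: $\R_{\counter(n)}$ counts to $2^n$ only in the synchronization semantics, where the counter value is represented by the \emph{set} of locations currently carrying tokens. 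In the acceptance semantics relevant to universality, a single run of a polynomial-size $1$-NRA occupies one location at a time and cannot maintain an $n$-bit counter; the paper's counters become usable for acceptance only because the witness word is forced to interleave explicit $\Bit{i}$-letters that externalize the incrementation, and because the consistency of those letters is checked by a \emph{union} of small automata, not by the same component that is holding a tape datum in its register and a symbol triple in its state. Measuring an exponential displacement inside the one component that must also remember $\sigma_1\sigma_2\sigma_3$ and $d$ is exactly the step that fails.

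The same obstruction reappears if you instead locate the successor cell by data equality (``position $p$ carries the same datum in every configuration''): enforcing that invariant requires detecting two letters at distance exactly $2^n+1$ carrying different data, which again demands an exponential distance measurement while a datum is pinned in the single register. The paper breaks this circularity with the device you are missing: the computation $\rho$ is written out $2^n$ times, with the $i$-th cell of every configuration \emph{dotted} in the $i$-th copy. The transition check then never counts --- it finds the corresponding absolute position in the \emph{next copy} by an equality guard on the stored datum, and pins down the correct cell window inside the successor configuration by the shifted position of the dot; all exponential distance conditions are delegated to separate counter automata driven by the interleaved $\Bit{}$-letters, which is also why the paper's bound $N$ strictly exceeds the length $K$ of the bare encoding. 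Without this replication-and-dotting trick (or an equivalent mechanism), your transition-check component cannot be realized in polynomial size, and the reduction does not go through as described.
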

\begin{proof}
The length-bounded universality problem for NRAs can be solved in co-$\nexptime$, by guessing a
(possibly exponentially long) data word, and check whether the guessed word is a witness
for non-universality of the RA.

We prove that the complement of the length-bounded universality problem is $\nexptime$-hard. 
The proof is a reduction from the  \emph{membership problem of  $\mathcal{O}(2^n)$-time bounded nondeterministic Turing machines}: 
given a nondeterministic Turing machine $\tm$ and an input word $x$, decide whether $\tm$ accepts $x$ within time bound~$2^{\abs{x}}$. 
This problem is  $\nexptime$-complete.

Given a nondeterministic Turing machine $\tm$ and an input $x$ of length $n$, 
we construct an NRA $\R$ equipped with an initial location and a set of accepting locations, 
and a bound $N$ (encoded in binary) such that 
there exists a witness of non-universality $w$ (i.e., $w\not\in L(\R)$) with $\abs{w}\leq N$   if, and only if, 
$\tm$ has some accepting computation on $x$ within time bound~$2^n$.

Let $\tm$ have the set $Q$ of control states and the tape alphabet $\Gamma$. 
Let us recall that a configuration  of~$\tm$ is a word in the language $\Gamma^* (Q\times\Gamma) \Gamma^*$, where each letter in $(Q\times\Gamma) \cup \Gamma$ encodes a single cell and the position of the reading/writing head.
A computation~$\rho$ of $\tm$ is a sequence $c_0c_1c_2 \cdots$ of configurations that respects the transition function of 
the Turing machine. 
Without loss of generality, we assume that $\tm$ has a self-loop on  all accepting states.
Hence for the input~$x\in \Gamma^{*}$ of length~$n$, 
all accepting computations~$\rho$ of $\tm$ are sequences of length exactly $2^n$, and 
all configurations $c_i$ along such a computation are words $c_i\in \Gamma^* (Q\times\Gamma) \Gamma^* $ of length at most $2^{n}$. 
In the following, we pad the configurations shorter than $2^n$  with $\Hsquare$ at the tail such that the length of all 
such configurations become equal to $2^n$.

Let $\Sigma_\tm := \Sigma \cup \Sigma'$, where 
\[\Sigma=(Q\times\Gamma) \cup \Gamma \cup \dot{(Q\times\Gamma)} \cup \dot{\Gamma} \cup \{\Hsquare,\dot{\Hsquare},\#,\star\}\]
be such that  $\Hsquare,\dot\Hsquare,\#,\star \not\in \Gamma$.
Here, $\dot{(Q\times \Gamma)}$ and $\dot{\Gamma}$ denote a \emph{dotted} version of letters in $Q\times \Gamma$ and $\Gamma$; formally 
\[\{\dot{(q,a)}\mid (q,a)\in (Q\times\Gamma)\} \text{ \quad \quad   and  \quad \quad  } \{\dot a\mid a\in\Gamma\},\]
and $\Sigma'$ will be defined later.
Let $K=2^{3n}+2^{2n}+1$.
Given a computation~$\rho= c_1 \cdots c_{2^n}$,
we define $u(\rho)\in\Sigma^{K}$, roughly speaking, such that

\begin{enumerate}
\item It consists of $2^{n}$ copies of $\rho$ (with some extra delimiters).
	\item Between all consecutive copies of $\rho$ there is a $\star$ delimiter, and $u(\rho)$ starts and ends with  $\star$, too.
Hence,  there are $2^n+1$ occurrences of $\star$ in $u(\rho)$.
	\item In each copy of $\rho$, there is a $\#$ delimiter between consecutive configurations. 
	Since there are $2^n$ configurations in (each copy of)~$\rho$, the number of $\#$ in $u(\rho)$ is $2^n(2^n-1)$.

	\item In the $i$-th copy of  $\rho$, the letter for the $i$-th cell of every participating configuration $c_i$ is dotted, 
	all other letters are non-dotted. Hence, in each copy of $\rho$ 
	there are exactly $2^n$ dotted letters (one in each configuration of  $\rho$), with distance $2^n+1$.
	\item The distance between two $\star$ delimiters is $2^{2n}+2^n-1$, due to the fact that $\rho$ consists of $2^n$
	configurations,  each of which has $2^n$ tape cells in turn and is separated from the next configuration by a $\#$ delimiter.	
\end{enumerate}

\begin{figure}[t]
\centering
  \scalebox{.9}{
\begin{tikzpicture}[->,>=stealth',shorten >=1pt,auto,node distance=4cm,thick,node/.style={circle,draw,scale=0.9}, roundnode/.style={circle, draw=black, thick, minimum size=6mm},]
\tikzset{every state/.style={minimum size=0pt}};

% 1st copy
 \fill [gray!30] (-4.1,2.2) rectangle (4.3,1.3);
 
 % 2nd copy
 \fill [gray!30] (-4.1,1.05) rectangle (4.3,0.15);
 
  % 3rd copy
 \fill [gray!30] (-4.1,-0.1) rectangle (4.3,-1);
 
 % 4th copy
 \fill [gray!30] (-4.1,-1.3) rectangle (4.3,-2.15);

\node (table) at (0,0) {\setlength{\tabcolsep}{2mm}
  \begin{tabular}{ccccccccccc}
  $\star$ & $\textcolor{magenta}{\dot{(q_\init,a_1)}}$ & $a_2$ & $\Hsquare$ & $\Hsquare$ & $\#$ & $\textcolor{magenta}{\dot a}$ & $(q,a_2)$ & $\Hsquare$ & $\Hsquare$ & $\dots$ \\
  $0$ & $\textcolor{blue}{1}$ & $\textcolor{blue}{2}$ & $\textcolor{blue}{3}$ & $\textcolor{blue}{4}$ & $0$ & $\textcolor{blue}{6}$ & $\textcolor{blue}{7}$ & $\textcolor{blue}{8}$ & $\textcolor{blue}{9}$ &\\ 
  \\
  $\star$ & $(q_\init,a_1)$ & $\textcolor{magenta}{\dot{a_2}}$ & $\Hsquare$ & $\Hsquare$ & $\#$ & $a$ & $\textcolor{magenta}{\dot{(q,a_2)}}$ & $\Hsquare$ & $\Hsquare$ & $\dots$\\
  $0$ & $1$ & $2$ & $3$ & $4$ & $0$ & $6$ & $7$ & $8$ & $9$ &\\
  \\
  $\star$ & $(q_\init,a_1)$ & $a_2$ & $\textcolor{magenta}{\dot \Hsquare}$ & $\Hsquare$ & $\#$ & $a$ & $(q,a_2)$ & $\textcolor{magenta}{\dot \Hsquare}$ & $\Hsquare$ & $\dots$\\
  $0$ & $1$ & $2$ & $3$ & $4$ & $0$ & $6$ & $7$ & $8$ & $9$ &\\
   \\
  $\star$ & $(q_\init,a_1)$ & $a_2$ & $ \Hsquare$ & $\textcolor{magenta}{\dot\Hsquare}$ & $\#$ & $a$ & $(q,a_2)$& $\Hsquare$ & $\textcolor{magenta}{\dot \Hsquare}$ & $\dots$\\
  $0$ & $1$ & $2$ & $3$ & $4$ & $0$ & $6$ & $7$ & $8$ & $9$ &\\
\end{tabular}};

\node  at (-5.5,2)  {\begin{tabular}{ll} \textcolor{gray}{copies of the} \\ \textcolor{gray}{encodings}\end{tabular}};

\node (1) at (-5,2) {};
\node (2) at (-4,1.6) {};
\node (3) at (-4.1,0.5) {};

\path [->] (1) edge[gray] (2);
\path [->] (1) edge[gray] (3);

\node  at (-.5,3)  {\begin{tabular}{ll} \textcolor{magenta}{shifting} \\ \textcolor{magenta}{dotted letters}\end{tabular}};

\node (1') at (-1.3,3.1) {};
\node (2') at (-2.8,2.1) {};
\path [->] (1') edge[magenta] (2');

\node (3') at (-1,2.8) {};
\node (4') at (-1.5,0.7) {};
\path [->] (3') edge[magenta] (4');

\node  at (5.4,1)  {\begin{tabular}{ll} \textcolor{blue}{unique data} \\ \textcolor{blue}{identifiers}\end{tabular}};

\node (1'') at (4.7,1) {};
\node (2'') at (3.2,1.6) {};

\path [->] (1'') edge[blue] (2'');

	\end{tikzpicture}
} % end of scalebox
\caption{
	Partial encoding of~$u(\rho)$ for an accepting $2^2$-time bounded computation~$\rho$ of a Turing machine on $a_1 a_2$.
}
\label{fig:bound_encoding}
\end{figure}
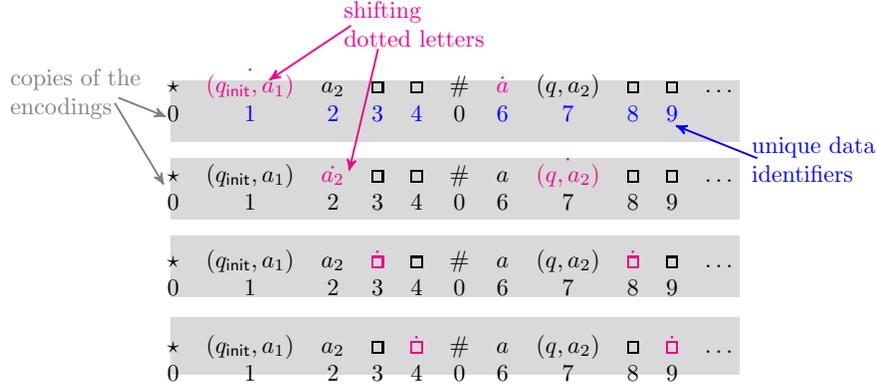

Figure~\ref{fig:bound_encoding} illustrates an example of $u(\rho)$. Observe that for all~$\rho=c_0 c_1 \cdots c_{2^n}$, 
we have
\[\abs{u(\rho)}= \underbrace{2^n}_{\text{number of copies of }\rho} \overbrace{2^n}^{\text{number of configurations }c_i \text{ in } \rho} 
\underbrace{2^n}_{\text{length of } c_i } + \overbrace{2^{n}(2^n-1)}^{\text{number of  } \# }+\underbrace{(2^n+1)}_{\text{number of } \star} =K. \]

We define a data language $\lang$  over the alphabet $\Sigma$ such that data words in this language are 
faithful encodings of computations $\rho$ of $\tm$
over the input word~$x$. In particular, the language contains all data words~$v$ that satisfy the following conditions:

\begin{enumerate}
		\addtocounter{enumi}{5} 
	\item  Let $\proj(v)$ be the projection of $v$ into $\Sigma$ (i.e., omitting the data values).
	There exists some accepting computation~$\rho$ of $\tm$ on the input $x$ such that 
	$\proj(v)=u(\rho)$.
	\item  The letters $\star$ and $\#$ occur only with a unique datum, say datum $0$ (and no other letter occurs with that datum).
	\item For all  occurrences of $\star$, for all $1\leq i\leq 2^{2n}+2^n-1$, all letters at the $i$-th positions
	after each $\star$ must carry the same datum, say datum $i$. Except for occurrences of $\#$, the datum~$i$ is exclusive for the $i$-th positions
	after occurrences of $\star$.

\end{enumerate}

Given a data word~$v\in \lang$ such that $\proj(v)=u(\rho)$ for some computation~$\rho$, condition $(8)$ and previous conditions on $u(\rho)$  entail that
for all $1\leq j,k\leq 2^n$
the $j$-th tape cell in the $k$-th configuration $c_k$ of all copies of $\rho$ in $v$ carries the same datum
(revisit Figure~\ref{fig:bound_encoding}).
Observe that  all data words $v\in \lang$ use exactly $2^{2n} + 1$ distinct data values.

By definition of $\lang$, we see that $\lang$ is non-empty if, and only if, there is an accepting computation~$\rho$ of
$\tm$ over~$x$. 
Recall that $\Sigma_{\tm}=\Sigma \cup \Sigma'$ (where $\Sigma'$ is defined later).
Below, we construct a $1$-NRA $\R$ over alphabet $\Sigma_{\tm}$ such 
that the language accepted by $\R$ (projected into $\Sigma$, ignoring $\Sigma'$ letters) is    
 the complement of $\lang$. 
At the end, we  examine the existence of $N\in \mathcal{O}(K)$ such that
$\tm$ has an accepting computation over $x$ if, and only if, $\R$ is (length-bounded) non-universal 
with respect to the bound~$N$. 

The $1$-NRA $\R$ is the union of several $1$-NRAs and DFAs that we describe in the following. 
Each of these automata violates one of the necessary conditions for data words $v$ to be in $\lang$. 

\begin{itemize}
\item We add a DFA that accepts data words $v$ such that $\proj(v)$ is not in the regular language  
$(\star L)^* \star$ where $L$ is
defined by
$$ \left(  (\Gamma + \dot\Gamma)^*  \, \left( (Q\times\Gamma) +
\dot{(Q\times\Gamma)} \right) \, (\Gamma + \dot\Gamma)^*\, (\Hsquare + \dot \Hsquare)^*\, \# \right)^*.$$
\item We add a DFA that accepts data words $v$ such that $\proj(v)$ does not start with 
$$\star \left(\dot{(q_\init,a_1 )}a_2 a_3\dots a_n  \Hsquare^*  \#\right), $$
where $q_\init$ is the initial control state of $\tm$ and $x=a_1 a_2 \cdots a_n$ is the input. 
This regular expression also guarantees  that in the first copy of $\rho$, the first cell is dotted.
\item We add a DFA that accepts data words $w$ containing at least two dotted letters between two consecutive $\#$. 
\item  We add a~$1$-NRA that accepts data words in which some delimiter occurs with some datum different from the datum for the first $\star$.
\begin{center}
\scalebox{.8}{
\begin{tikzpicture}[>=latex',shorten >=1pt,node distance=1.5cm,on grid,auto, thick,
roundnode/.style={circle, draw=black, thick, minimum size=2mm},
squarednode/.style={rectangle, draw=black, thick, minimum size=3mm}]
 
\node[squarednode] (1)   {$1$}; 
\node[squarednode] (2)  [right =2cm of 1] {$2$};
\node[squarednode,accepting] (3)  [right=2.5cm of 2] {$3$};

\path [->] (1) edge node [midway,above] {\scriptsize{$~~\star~~~~\downarrow$ }} (2);
\path [->] (2) edge [loop above] node [above] {\scriptsize{else}} ();
\path [->] (2) edge node [sloped,above] {\scriptsize{$\neq r$ \quad \quad $\star,\#~~$}} (3);
\path [->] (3) edge [loop right] node [right] {\scriptsize{$\alphabet$ }} (3);
\end{tikzpicture}
}
\end{center}

\item We add a $1$-NRA that accepts data words in which some other letter  appears with 
the datum  dedicated to delimiters $\star$  and $\#$. 
\begin{center}
\scalebox{.9}{
\begin{tikzpicture}[>=latex',shorten >=1pt,node distance=1.5cm,on grid,auto, thick,
roundnode/.style={circle, draw=black, thick, minimum size=2mm},
squarednode/.style={rectangle, draw=black, thick, minimum size=3mm}]
 
\node[squarednode] (1)   {$1$}; 
\node[squarednode] (2)  [right =2cm of 1] {$2$};
\node[squarednode,accepting] (3)  [right=2.5cm of 2] {$3$};

\path [->] (1) edge node [midway,above] {\scriptsize{$~~\star~~\downarrow$ }} (2);
\path [->] (2) edge [loop above] node [above] {\scriptsize{else}} ();
\path [->] (2) edge node [sloped,above] {\scriptsize{$=r$ \quad \quad $\alphabet\setminus\{\star,\#\}~~$}} (3);
\path [->] (3) edge [loop right] node [right] {\scriptsize{$\alphabet$ }} (3);
\end{tikzpicture}
}
\end{center}

\item We add $1$-NRA that accepts data words in which there are two letters (other than $\#$) between two 
consecutive~$\star$ that carry the same datum. 
\begin{center}
\scalebox{.9}{
\begin{tikzpicture}[>=latex',shorten >=1pt,node distance=1.5cm,on grid,auto, thick,
roundnode/.style={circle, draw=black, thick, minimum size=2mm},
squarednode/.style={rectangle, draw=black, thick, minimum size=3mm}]
 
\node[squarednode] (1)   {$1$}; 
\node[squarednode] (2)  [right =2cm of 1] {$2$};
\node[squarednode] (3)  [right =2.5cm of 2] {$3$};
\node[squarednode,accepting] (4)  [right=2.5cm of 3] {$4$};

\path [->] (1) edge [loop above] node [above] {\scriptsize{$\alphabet$}} (1);
\path [->] (1) edge node [midway,above] {\scriptsize{$\star$}} (2);
\path [->] (2) edge node [midway,above] {\scriptsize{$ \alphabet\setminus \{\#,\star\}$ \quad \quad $\downarrow$}} (3);
%\path [->] (3) edge [bend left] node [midway,below] {\scriptsize{$\star$}} (2);
\path [->] (2) edge [loop above] node [above] {\scriptsize{else}} (2);
\path [->] (3) edge node [sloped,above] {\scriptsize{$=r$ \quad \quad $\alphabet\setminus\{\star,\#\}$}} (4);
\path [->] (3) edge [loop above] node [above] {\scriptsize{else}} (3);
\path [->] (4) edge [loop right] node [right] {\scriptsize{$\alphabet$ }} (4);
\end{tikzpicture}
}
\end{center}

\item We add a $1$-NRA that accepts data words $v$ such that
there are two  consecutive $\#$ whose distance  is not exactly~$2^{n}$ (ignoring the occurrences of $\star$). 
For this we use a variant of $\R_{\counter(n)}$ implementing a binary counter introduced in Section \ref{synchNRAs}. 
For accepting data words $v$ such that the distance between two consecutive $\#$ is \emph{less than} $2^{n}$, we add a transition 
$$2^{n}_c \xrightarrow{\, \# \,} \loc_\acc,$$
and for accepting those words  that the distance is \emph{more than} $2^n$, we add a transition 
$$2^{n} \xrightarrow{\, \Sigma \,} \loc_\acc.$$
Here, $\loc_\acc$ is an accepting location with a self-loop for every letter in $\Sigma$.
\end{itemize}
For the next four $1$-NRAs we can use simple variants of $\R_{\counter(n)}$: 
\begin{itemize}
\item We add a $1$-NRA that accepts data words $v$
 such that  between two consecutive $\star$, the letter $\#$ does not occur exactly $2^n-1$ times. 
\item We add a $1$-NRA  that accepts data words $v$ such that  $\star$ does not occur exactly $2^n + 1$ times. 
\item We add a $1$-NRA that  accepts data words $v$ such that
the distance between two consecutive  dotted letters is 
  not exactly $2^n+1$, if no delimiter $\star$ is seen between these two letters.
We add another $1$-NRA that  accepts data words $v$ such that
the distance between two consecutive  dotted letters is 
 not exactly $2^n+2$ if  $\star$ is seen.
\item We add a $1$-NRA that  accepts data words $v$ such that
the letters with $2^{2n}+2^n-1$ distance carry different data.

\end{itemize}

To implement the above binary counters with $1$-NRAs, we finally define 
\[\Sigma'=\{\Bit{i}^{d},\Bit{i}^\#,\Bit{i}^\star, \dot{\Bit{i}}, \Bit{i}^{x}, \Bit{i+n}^{x}\mid 0\leq i \leq n\},\]
where
\begin{itemize}
	\item letters $\Bit{0}^{d},\dots,\Bit{n}^{d}$ for counting the distance between two consecutive $\#$.
	The counter takes into account only letters in $\Sigma \setminus\{\star\}$, ignoring the occurrences of $\star$ and other $\Bit{i}$-letters from $\Sigma'$.
	The $1$-NRA detects whether the distance is less or greater than  $2^n$.
	\item letters $\Bit{0}^\#, \dots, \Bit{n}^\#$ for counting the occurrences of $\#$. The $1$-NRA 
	detects whether the number of $\#$ between two consecutive $\star$ is less or greater than $2^n-1$.
	\item letters $\Bit{0}^\star,\dots,\Bit{n}^\star$ for counting the occurrences of $\star$ (to check against  $2^n+1$).
	\item letters $\dot{\Bit{0}},\dots,\dot{\Bit{n}}$ for counting the distance between two consecutive dotted letters
	(to check against $2^n+1$ or $2^n+2)$.
\item letters $\Bit{0}^{x},\dots,\Bit{2n}^{x}$ for counting the distance between two letters that carry the same datum
	(to check against~$2^{2n}+2^n+1$).

\end{itemize}
We construct all these gadgets such that the $\Bit{}$-letters always
carry the same datum as the delimiters.

The union of all above $1$-NRAs and DFAs accepts all data words except those  
$v$ such that $\proj(v)=(\star \rho)^{2^n} \star$ 
(that in addition respect the uniqueness conditions on data appearing in~$v$).
Finally, we add NRAs that check whether $\rho= c_1 \cdots c_{2^n}$ in such $v$ is not a 
faithful computation of $\tm$, or it is not an \emph{accepting} computation.
To this aim, for all words $\sigma_1\sigma_2\sigma_3\in ((Q\times\Gamma)\cup\Gamma)^3$ of length three 
such that $\sigma_1\sigma_2\sigma_3$ can appear at some position $i$ in a  valid configuration~$c$ of  $\tm$, 
we define $Post(\sigma_1\sigma_2\sigma_3)$ to be the set of words 
$u\in((Q\times\Gamma)\cup\Gamma)^3$ that can  appear in a successor configuration of~$c$ in the same position~$i$ (according to the rules of $\tm$). 

\begin{itemize}
\item 
For all words $\dot{\sigma}_1\sigma_2\sigma_3\in (\dot{Q\times\Gamma})\cup\dot{\Gamma})((Q\times\Gamma)\cup\Gamma)^2$
that starts with a dotted letter,
we add a $1$-NRA that accepts data words that  for some occurrence of the subword  $(\dot{\sigma_1},d_1)(\sigma_2,d_2)(\sigma_3,d_3)$ 
with some data $d_1,d_2,d_3$,
 the subword $\tau_1\dot{\tau_2}\tau_3$ (ignoring the data values)
 with exactly $2^{2n}+2^{n+1}+1$ distance is not in $Post(\sigma_1\sigma_2\sigma_3)$.
Observe that the subword $\dot{\sigma_1}\sigma_2\sigma_3$ is intuitively
indicating some part of  some configuration~$c$ in some copy of $\rho$, and 
$\tau_1\dot{\tau_2}\tau_3$ with distance $2^{2n}+2^{n+1}+1$ is a subword of the
	successor configuration of $c$ in the next copy of $\rho$.  		
	
 The following NRA is for the case $(q_{\init},a_1)a_2\Hsquare$. To implement this 
$1$-NRA, we rely on the previous conditions that two letters (apart from the delimiters)
with the same datum have the exact distance $2^{2n}+2^{n+1}+1$ (checked with a parallel $1$-NRA).
\begin{center}
\scalebox{.9}{
\begin{tikzpicture}[>=latex',shorten >=1pt,node distance=1.5cm,on grid,auto, thick,
roundnode/.style={circle, draw=black, thick, minimum size=2mm},
squarednode/.style={rectangle, draw=black, thick, minimum size=3mm}]
 
\node[squarednode] (1)   {$1$}; 
\path [->] (1) edge [loop above] node [above] {\scriptsize{$\Sigma$ }} (1);

\node[squarednode] (2) at (2,0)  {$2$}; 
\path [->] (1) edge node [above] {\scriptsize{$\dot{(q_\init,a_1)},\downarrow$ }} (2);

\node[squarednode] (3) at (3,0)  {$3$}; 
\path [->] (2) edge node [above] {\scriptsize{$a_2$ }} (3);

\node[squarednode] (4) at (4,0)  {$4$}; 
\path [->] (3) edge node [above] {\scriptsize{$\Hsquare$ }} (4);

\path [->] (4) edge [loop above] node [above] {\scriptsize{$\Sigma\backslash\ \{\star\}$ }} (4);

\node[squarednode] (5) at (5,0)  {$5$}; 
\path [->] (4) edge node [above] {\scriptsize{$\star$ }} (5);

\path [->] (5) edge [loop above] node [above] {\scriptsize{$\Sigma,\neq\!r$ }} (5);

\node[squarednode] (6) at (7,0)  {$6$}; 
\path [->] (5) edge node [above] {\scriptsize{$(q_\init,a_1),=\!r$ }} (6);

\node[squarednode] (7) at (8,0)  {$7$}; 
\path [->] (6) edge node [above] {\scriptsize{$\dot a_2$ }} (7);

\node[squarednode] (8) at (9,0)  {$8$}; 
\path [->] (7) edge node [above] {\scriptsize{$\Hsquare$ }} (8);

\path [->] (8) edge [loop above] node [above] {\scriptsize{$\Sigma\backslash\{\#\}$ }} (8);

\node[squarednode] (9) at (10,0)  {$9$}; 
\path [->] (8) edge node [above] {\scriptsize{$\#$ }} (9);
\path [->] (9) edge [loop above] node [above] {\scriptsize{$\Sigma\backslash\{\#\}$ }} (9);

\node[squarednode] (10) at (11.5,0)  {$10$}; 
\path [->] (9) edge node [above] {\scriptsize{$\tau_1$ }} (10);

\node[squarednode] (11) at (13,0)  {$11$}; 
\path [->] (10) edge node [above] {\scriptsize{$\dot\tau_2$ }} (11);

\node[squarednode,accepting] (12) at (14.5,0)  {$12$}; 
\path [->] (11) edge node [above] {\scriptsize{$\tau_3$ }} (12);
\path [->] (12) edge [loop above] node [above] {\scriptsize{$\Sigma$ }} (12);
\end{tikzpicture}
}
\end{center}
\item We add a DFA that accepts data words $v$ such that the last 
configuration in $\rho$ does not contain  a letter in 
$(Q_{\acc}\times\Gamma) \cup \dot{(Q_{\acc}\times\Gamma)}$, where $Q_\acc$ is the set 
of accepting control states of $\tm$. 
\end{itemize}

To complete the proof, we 
examine the existence of $N\in \mathcal{O}(K)$ such that
$\tm$ has an accepting computation over $x$ if, and only if, $\R$ is (length-bounded) non-universal 
with respect to the bound~$N$.
 Given the shortest witness~$w\in \Sigma_{\tm}^+$ of non-universality of $\R$, the projection~$v$ of
$w$ into $\Sigma$ encodes an accepting computation of $\tm$ over $x$, and subsequently has
length exactly $K$. The extra letters of $w$ compared to~$v$ are to implement the  five needed counters faithfully. 
However, these letters do not  increase the length of $w$ much more than~$K$: 
for instance, 
the condition for counting the occurrences of $\#$ requires that we accompany 
every $\#$ with a single $\Bit{i}^\#$-letter.
Hence, 

\[
N \leq \left(
\overbrace{2^{3n+1}}^{\text{cell letters and $\Bit{i}^d$ } }
+
\underbrace{2^{n+1}(2^n-1)}_{\text{$\#$ and $\Bit{i}^\#$ } } 
+ 
\overbrace{2^{n+1}+2}^{\text{$\star$ and $\Bit{i}^\star$ } }
+
\underbrace{2^{3n+1}}_{\text{cell letters and $\dot{\Bit{i}}$}}
+
\overbrace{2^{3n+1}}^{\text{cell letters and $\Bit{i}^x$}}
\right).
\]
Note that $N$ is still exponential in $n$.

The construction of $\R$ is complete and the $\nexptime$-hardness follows from the sketched reduction. 
Note that the result already holds for $1$-NRAs.
\end{proof}

There is a natural reduction from the non-universality problem for $1$-NRAs to the 
emptiness problem for single-register alternating RAs ($1$-ARAs). 
The trivial $\nexptime$ membership (guess and check) and Theorem~\ref{theo-nexp} lead to  
 the $\nexptime$-completeness of the 
length-bounded  emptiness problem for $1$-ARAs.

\medskip
\subparagraph*{{\bf Acknowledgements}}
We thank Sylvain Schmitz 
for helpful discussions on well-structured systems and non-elementary complexity classes.
We thank James Worrell for  inspiring discussions, especially drawing our attention to a 
trick that  simplified  the $\nexptime$-hardness construction.
We appreciate  the anonymous reviewers for  their  insightful comments and suggestions.

% Bibliography

\newpage
\section*{Appendix}
\section{Proofs for Deterministic Register Automata}
\label{append_DRAs}
\begin{qlemma}{\ref{lemmasyncD}}
\ \lemmasyncD
\end{qlemma}

\begin{proof}
Let~$\R =\tuple{\locs,\reg,\alphabet,T}$ be a DRA on the data domain~$\domain$ and with $k\geq 1$ registers.
Recall that we denote by $\data(w)$ the data occurring 
in data words~$w$; for configurations $q=(\loc,\valuation)$ we use the same notation 
$\data(q)=\{\valuation(r)\mid r \in \reg\}$ to denote the 
data appearing in the valuation of~$q$. 
Let $\pi:Y_1 \to Y_2$ be a bijection on data where $Y_1,Y_2\subseteq \domain$.
For every configuration $q=(\loc,\nu)$, define $\pi(q)=(\loc,\nu')$, where $\nu'$ satisfies $\nu'(r)=\pi(\nu(r))$ for all $r\in \reg$. 
For every data word $w=(a_1,d_1)\dots(a_n,d_n)$, define $\pi(w)=(a_1,\pi(d_1))\dots(a_n,\pi(d_n))$. 
Note that the application of $\pi$ on $q$ and $w$ preserves the reachability property, \ie, 
$\post(\pi(q),\pi(w))=\{\pi(q')\mid q'\in \post(q,w)\}$.

\bigskip

Assuming that
$\R$ has some synchronizing data word, we first prove the following claim by an induction.

\medskip
\noindent {\bf Claim.} For  all  pairs of configurations~$q_1,q_2$,
if there exists~$w$  such that $\abs{\post(\{q_1,q_2\},w)}=1$,
then 
\begin{itemize}
	\item 	
		for all sets~$X=\{x_1,x_2,\cdots,x_{2k+1}\}\subset \domain$ with  $\data(q_1),\data(q_2)\subseteq X$, 
	\item there exists some data word~$w_{q_1,q_2}\in (\Sigma\times X)^*$  
	such that $\abs{\post(\{q_1,q_2\},w_{q_1,q_2})}=1$.
\end{itemize}

\noindent Note that by $\abs{X}=2k+1$, the data efficiency of $w_{q_1,q_2}$ is  at most~$2k+1$. 

\medskip
\noindent {\bf Proof of Claim.}  
Let $q_1$ and $q_2$ be two configurations of~$\R$ and define $\data(q_1,q_2)=\data(q_1)\cup \data(q_2)$. Since 
$\R$ has some synchronizing data words,  there exists~$w$ such that $\abs{\post(\{q_1,q_2\},w)}=1$.
The proof is by an induction on the length of~$w$.

\noindent {\bf Base of induction.}  
Assume~$w=(a,d)$ have length~$\abs{w}=1$.
Let~$X$ be any arbitrary set of data such that $\abs{X}=2k+1$ and 
$\data(q_1,q_2)\subseteq X$. 
There are two cases:
\begin{itemize}
	\item  $d\in X$: This entails that $\data(w) \subseteq X$.
	Observe that $w_{q_1,q_2}=w$ satisfies the induction statement. 
	\item $d \not \in X$:
	Since $\abs{\data(q_1,q_2)}\leq 2k$, there exists data~$x\neq d$ such that
  $x=X \setminus\data(q_1,q_2)$.
	Since $x\neq d$, we can define the bijection $\pi:\{d\}\cup \data(q_1,q_2)\to \{x\}\cup \data(q_1,q_2)$
	such that $\pi(d)=x$ and $\pi(d')=d'$ for all $d'\in \data(q_1,q_2)$.
	Observe that $\pi(q_i)=q_i$ for all~$i\in \{1,2\}$. 
	Then  
	$$\abs{\post(\{q_1,q_2\},(a,d))}=\abs{\post(\{\pi(q_1),\pi(q_2)\},(a,\pi(d))} = \abs{\post(\{q_1,q_2\},(a,x))}.$$
	This and the assumption $\abs{\post(\{q_1,q_2\},(a,d))}=1$ yield 
	$\abs{\post(\{q_1,q_2\},(a,x))}=1$. 
	The word~$w_{q_1,q_2}=(a,x)$ satisfies the induction statement.

\end{itemize}
The base of induction hence holds.

\noindent {\bf Step of induction.} Assume that  the induction hypothesis holds for $i-1$.
Consider some word~$(a,d)\cdot w$
such that $\abs{w}=i-1$ and $\abs{\post(\{q_1,q_2\},(a,d)\cdot w)}=1$.

Consider some set~$X$ which has cardinality~$2k+1$ and $\data(q_1,q_2)\subseteq X$,
we construct the data word~$w_{q_1,q_2}$ as follows.
Let $p_1=\post(q_1,(a,d))$ and $p_2=\post(q_2,(a,d))$, and
let $\data(p_1,p_2)=\data(p_1) \cup \data(p_2)$.
Due  to the fact that $p_1,p_2$ are successors of $q_1,q_2$ after inputting~$(a,d)$,
we know that if $d\in \data(q_1,q_2)$ then  $d\in \data(p_1,p_2)$.
There are two cases:
\begin{itemize}

	\item $d\in \data(q_1,q_2)$ or $d\not \in \data(p_1,p_2)$. 
	 These guarantee that $\data(p_1,p_2) \subseteq \data(q_1,q_2)$ if
	$d\in \data(q_1,q_2)$, and that $\data(p_1,p_2)= \data(q_1,q_2)$ if 
	$d\not \in \data(p_1,p_2)$.
  As a result, $\data(p_1,p_2) \subseteq X$.
	By induction hypothesis, there exists some data word~$w_{p_1,p_2}$ over data domain~$X$ 
	such that $\abs{\post(\{p_1,p_2\},w_{p_1,p_2})}=1$.
	For $w_{q_1,q_2}=(a,d) \cdot w_{p_1,p_2}$ the   statement of induction holds, as  
	$\abs{\post(\{q_1,q_2\},w_{q_1,q_2})}=1$.

	\item  $d\not \in \data(q_1,q_2)$ and $d\in \data(p_1,p_2)$. 
	Without loss of generality, we assume that~$d\not \in X$.
	Otherwise $d\in X$ would imply $\data(p_1,p_2)\subseteq X$, and we simply let $w_{q_1,q_2}=w_{p_1,p_2}$. 
	Since $\abs{\data(q_1,q_2)}\leq 2k$, there exists some datum~$x\neq d$ such that
  $x\in X \setminus\data(q_1,q_2)$.
	Since $x\neq d$, we can define the bijection $\pi:\{d\}\cup \data(q_1,q_2)\to \{x\}\cup \data(q_1,q_2)$
	such that $\pi(d)=x$ and $\pi(d')=d'$ for all $d'\in \data(q_1,q_2)$.
	Since $\data(p_1,p_2)\setminus \{d\} \subseteq \data(q_1,q_2)$, having~$d$ in the domain of~$\pi$,
	the bijection~$\pi$ ranges over~$\data(p_1,p_2)$.
	By induction hypothesis, there exists some data word~$w_{p_1,p_2}$ over data domain~$(X\setminus\{x\})\cup(\{d\})$ 
	such that $\abs{\post(\{p_1,p_2\},w_{p_1,p_2})}=1$.
  Then, $\abs{\post(\{\pi(p_1),\pi(p_2)\},\pi(w_{p_1,p_2}))}=1$.
	For all $1\leq i\leq 2$,  we have  $\pi(p_i)\in \post(q_i,(a,x))$ 
	since $p_i\in \post(q_i,(a,d))$ and $x=\pi(d)$.
	By above arguments, we conclude that  $\abs{\post((\{q_1,q_2\},(a,x)\pi(w_{p_1,p_2})}=1$.
	As  $\{x\}\cup \data(\{q_1,q_2\})\subseteq X$, thus the data word~$w_{q_1,q_2}=(a,x)\pi(w_{p_1,p_2})$ satisfies the 
	statement of induction.
\end{itemize}
 
The above arguments prove that in all cases, there exists~$w_{q_1,q_2}\in (\alphabet \times X)^*$ that merges two configurations
$q_1$  and $q_2$ into a singleton, which completes the proof of {\bf Claim}.

\bigskip
\bigskip
Since~$\R$ has some synchronizing data word, using Lemma~\ref{lemmafiniteD},
we know that there exists some word~$w$ with data efficiency~$k$ 
such that $\post(\locs\times \domain^k, w)\subseteq \locs \times \data(w)^k$.
Consider some set~$X=\{x_1,x_2,\cdots,x_{2k+1}\}\subset \domain$ such that $\data(w)\subseteq X$.
We use the pairwise synchronization technique as follows. 
Define $S_{n}=\locs \times X^k$ and $n=\abs{\locs}(2k+1)^k$, \ie, $\abs{S_n}=n$. 
For all $i=n-1,\cdots,1$ repeat the following:
\begin{enumerate}
\item Take a pair of configurations~$q_1,q_2\in S_{i+1}$. By the~{\bf Claim} above, one can find 
	some word~${w_{q_1,q_2}\in (\alphabet\times X)^{*}}$ such that $\abs{\post(\{q_1,q_2\},w_{q_1,q_2})}=1$,
\item Define $v_i= w_{q_1,q_2}$ and $S_{i}=\post(S_{i+1},v_i)$.
\end{enumerate}
Note that by determinism of $\R$, for every $i\in\{1,\cdots, n-1,\}$, we have $\abs{S_{i}}\le \abs{S_{i+1}}-1$. 
Thus the word~$w_{\synch}=w\cdot v_{n-1} \cdots v_2 \cdot v_1$ is a synchronizing data word for~$\R$.
Since $\data(w)\subseteq X$ and $\data(v_i)\subseteq X$ for all $i\in \{1,\cdots,n-1\}$, 
the data efficiency of $w_{\synch}$ is at most $2k+1$. The proof is complete.
\end{proof}

\begin{qlemma}{\ref{theokDRA}}
\ \theokDRA
\end{qlemma}
\begin{proof}
	We prove $\pspace$-hardness by a reduction from the non-emptiness problem for $k$-DRA. 
	Let $\R=(\locs,\reg,\Sigma,T)$ be a $k$-DRA equipped with an initial location $\loc_i$ and an accepting location $\loc_f$, where, without loss of generality, we assume that all outgoing transitions from $\loc_i$ update all registers, and that $\loc_f$ has no outgoing edges. We also assume that $\R$ is complete, otherwise, we add some non-accepting location and direct  all undefined transitions to it.

	The reduction is such that from $\R$ we construct another $k$-DRA $\R_{\mathsf{syn}}$ such that the language of $\R$ is not empty if, and only if, $\R_{\mathsf{syn}}$ has some synchronizing data word. 
	We define $\R_{\mathsf{syn}}=(\locs_{\mathsf{syn}},\reg,\Sigma_{\mathsf{syn}},T_{\mathsf{syn}})$ as follows. 
	The set of locations is $\locs_{\mathsf{syn}}=\locs\cup\{\lreset\}$, where $\lreset\not\in \locs$ is a new location;
	the alphabet is $\Sigma_{\mathsf{syn}}=\Sigma\cup\{\star\}$, where $\star\not\in\Sigma$. 
	To define $T_{\mathsf{syn}}$, we add the following transitions to $T$. 
	\begin{itemize}
	\item $\loc_f \ttto{}{a}{\reg} \loc_f$ for all letters $a\in\alphabet_{\mathsf{syn}}$,
	\item $\loc_i \ttto{}{\star}{\reg} \loc_i$ 
	\item $\lreset \ttto{}{a}{\reg} \loc_i$ for all letters $a\in\alphabet_{\mathsf{syn}}$,
	\item $\loc \ttto{}{\star}{\reg} \lreset$ for all  $\loc\in\locs_{\mathsf{syn}}$ except for $\lreset,\loc_i,\loc_f$. 
	\end{itemize}
	Note that $\R_{\mathsf{synch}}$ is indeed deterministic and complete. 
	To establish the correctness of the reduction, we prove that the language of $\R$ is not empty if, and only if, $\R_{\mathsf{syn}}$ has a synchronizing data word. 
	
	First, assume that the language of $\R$ is not empty. 
	Then there exists a data word $w=(a_1,d_1)\dots(a_n,d_n)$ such that $w\in L(\R)$. 
	Hence there exists a run starting from $(\loc_i,\val_i)$ and ending in $(\loc_f,\val_f)$ for some $\nu_i,\nu_f\in \domain^{\abs{\reg}}$. 
	The data word $(\star,d)(\star,d) w (\star,d)$ for some $d\in\domain$ synchronizes $\R_{\mathsf{syn}}$ in location $\loc_f$.

	Second, assume that $\R_{\mathsf{syn}}$ has some synchronizing data word. Let $w\in(\Sigma_{\mathsf{syn}}\times\domain)^*$ be one of the shortest data synchronizing data words. 
	All transitions in $\loc_f$ are self-loops with update on all registers; Hence, $\R_{\mathsf{syn}}$ can only be synchronized in $\loc_f$. 
	Hence, we also have $\post((\loc_i,\nu_i),w)=\{(\loc_f,\nu_f)\}$ (for some $\nu_i,\nu_f\in\domain^{\abs{\reg}}$). 
	By the fact that $w$ is a shortest synchronizing data word, we can infer that the corresponding run does not contain any $\star$-transitions except for two self-loops in $\loc_i$ in the very beginning.	
	Hence there exists a run from $(\loc_i,\nu_i)$ to $\loc_f$ and thus $L(\R)\neq\emptyset$.
\end{proof}
\section{Proofs for Non-deterministic Register Automata}
\label{append_NRAs}
\begin{qlemma}{\ref{lemmacountingNRA}}
\ \lemmacountingNRA
\end{qlemma}
\begin{proof}
	The family of $1$-NRAs~$(\R_{\counter(n)})_{n\in \nat}$ is defined as follows. 
We define the alphabet of RA~$\R_{\counter(n)}$ by~$\alphabet=\{\#,\star, \Bit{0},\Bit{1},\cdots,\Bit{n}\}$.
The structure of~$\R_{\counter(n)}$ is composed of three distinguished locations~$\synch,\lreset,\lzero$ and  
locations $\loctwo{n},\loctwo{n-1},\cdots,\loctwo{1},\loctwo{0}$ and $\locctwo{n},\locctwo{n-1},\cdots,\locctwo{1},\locctwo{0}$.
The general structure of $\R_{\counter(n)}$ is partially depicted in \figurename~\ref{fig:countingNRA}. 
The RA~$\R_{\counter(n)}$ is constructed such that for all synchronizing data words~$w$,  
some datum~$x\in \data(w)$  appears in~$w$ at least $2^n$~times.
A counting feature is thus embedded in~$\R_{\counter(n)}$: intuitively,
the set of all reached configurations represents the counter value.
Starting from $\{(\lzero,x)\}$, 
the first increment results in $\{\locctwo{n},\cdots,\locctwo{2},\locctwo{1},\loctwo{0}\}\times\{x\}$,
where location~$\loctwo{i}$ means that the $i$-th least significant bit in the  binary representation of the counter value is set to~$\inBinary{1}$, and  
location~$\locctwo{i}$ means that the $i$-th bit is set to~$\inBinary{0}$.
Informally, we say that there is an $x$-token in every reached location. Here,  $\locctwo{n},\cdots,\locctwo{2},\locctwo{1},\loctwo{0}$ have
$x$-tokens.
A sequence of counter increments is encoded by re-placing the $x$-tokens, as shown  in the following sequence of sets of locations: 
$\{\locctwo{n}, \cdots,\locctwo{2},\loctwo{1},\locctwo{0}\}$,  
$\{\locctwo{n},\cdots,\locctwo{2},\loctwo{1},\loctwo{0}\}$,  
$\{\locctwo{n},\cdots,\locctwo{3},\loctwo{2},\locctwo{1},\locctwo{0}\}$, etc. 
The transitions of $\R_{\counter(n)}$ are defined in such a way that, starting from $\{(\lzero,x)\}$,   \emph{either} $\loctwo{i}$ \emph{or} 
$\locctwo{i}$ have tokens, but never both of them at the same time. We now present a detailed explanation of the structure of~$\R_{\counter(n)}$.

All transitions in~$\synch$ are self-loops with an update on the register $\synch \ttto{}{\alphabet}{r} \synch$. 
Thus,  $\R_{\counter(n)}$ can only be synchronized in~$\synch$. 
Moreover, $\synch$ is only accessible by $\#$-transitions. 
Similarly, all transitions except for those with label $\star$, are self-loops in location $\lreset$; thus, $\R_{\counter(n)}$ can only be synchronized by leaving $\lreset$ by reading $\star$. 
We use this also to avoid transitions which are \emph{incorrect} with respect to the binary incrementing process: all incorrect actions are guided to $\lreset$ to enforce another~$\star$. 
Assuming~$w$ to be one of the shortest synchronizing words, we see that
$\post(\locs\times \domain,w)=\{(\synch,x)\}$, where $w$ starts with $(\star,x)$ and ends with $(\#,x)$.

The counting  involves an \emph{initializing process} and several \emph{incrementing} processes.
\begin{itemize}
\item \emph{initializing the counter to $\lzero$}:
	the $\star$-transitions are devised to place a token in~$\lzero$: 
	from all locations $\ell\in \locs \setminus\{\synch\}$ we have
	$\ell  \ttto{}{\star}{r} \lzero$. This sets the counter to~$\inBinary{0}$.
	
\item \emph{incrementing the counter}: we use $\Bit{0},\dots,\Bit{n}$-transitions with equality guards to control the
	increment. 
	Intuitively, an equality-guarded $\Bit{i}$-transition is taken to set the $i$-th bit in the binary representation of the counter value according
  to the standard rules of binary incrementation.
	\\
	Initially, the token in $\lzero$ splits in $\loctwo{0}$ and $\locctwo{n},\cdots \locctwo{1}$ to represent $\inBinary{0}\cdots\inBinary{01}$, 
	by taking the transitions
	$\lzero \tto{=r}{\Bit{0}} \loctwo{0}$ and $\lzero \tto{=r}{\Bit{0}} \locctwo{j}$ for all $1\leq j\leq n$.  
	Equality-guarded $\Bit{i}$-transitions for $i\in\{1,\dots,n\}$ are incorrect for $\lzero$ and thus guided to $\lreset$.
	Whenever data different from $x$ is processed, $\R_{\counter(n)}$ takes self-loops (omitted in \figurename~\ref{fig:countingNRA}) 
	and keeps the $x$-tokens unmoved. 
	\\
	The equality-guarded $\Bit{i}$-transitions should only be taken if the $i$-th bit is not set, or, equivalently, 
	if  the location $\loctwo{i}$ contains no token.
	This is guaranteed by a $\Bit{i}$-transition  $\loctwo{i} \tto{=r}{\Bit{i}} \lreset$, for every $0\leq i\le n$, which results in an incorrect transition and should be avoided. (Otherwise the counting process has to restart from~$\inBinary{0}$.) In \figurename~\ref{fig:countingNRA}, we depict the corresponding transitions for $i=2$ and $i=n$. 
	\\
	Further,  we need to guarantee that for all $i\geq 1$ a $\Bit{i}$-transition is taken only if all less significant bits are set, 
	or, equivalently, if all locations~$\loctwo{i-1},\cdots \loctwo{0}$ contain a token. 
	This is ensured by a $\Bit{i}$-transition $\locctwo{j} \tto{=r}{\Bit{i}} \lreset$, for every $0\leq j<i$, which again results in an incorrect transition. See, \eg, the transition $\locctwo{2}\tto{=r}{\Bit{i}}\lreset$ in \figurename~\ref{fig:countingNRA} for every $3\le i\le n$.

	Finally, $\Bit{i}$-transitions  must produce tokens in $\loctwo{i}$ and $\locctwo{0},\cdots \locctwo{i-1}$,
	thus $\locctwo{i} \tto{=r}{\Bit{i}} \loctwo{i}$ and $\loctwo{j} \tto{=r}{\Bit{i}} \locctwo{j}$  for all $0\leq j< i$. All tokens in locations $\loctwo{j}$ and $\locctwo{j}$, respectively, for $j>i$ remain where they are, which is implemented by equality-guarded $\Bit{i}$-self-loops in  $\loctwo{j}$ and $\locctwo{j}$, respectively.
\end{itemize}  
By construction, it is easy to see that $\Bit{i}$-transitions are the only way to produce a token in~$\loctwo{i}$,
which can be fired if $\locctwo{i}$ has a token. The  $\Bit{i}$-transitions then consume the token in 
$\locctwo{i}$. 
This guarantees that after the first $\star$-transition, which puts a token into  $\lzero$, 
the two locations $\loctwo{i}$ and $\locctwo{i}$ will never have a token at the same time.

Finally, all equality-guarded $\#$-transitions in $\locctwo{n}$ and $\loctwo{i}$ for all $0\le i<n$ are sent to $\lreset$. In contrast, all $\#$-transitions in $\loctwo{n}$ and $\locctwo{i}$ for all $0\le i<n$ are sent to $\synch$, with an update on the register.  
This guarantees that the counter must correctly count from~$\inBinary{0}$ to $\inBinary{1}\inBinary{0}\cdots \inBinary{0}$, meaning that at least one
datum~$x$ appears at least $2^n$ times while synchronizing~$\R_{\counter(n)}$. 
\end{proof}

\begin{qlemma}{\ref{lemma_nonuniv_to_synch}}
\ \lemmaNonunivToSynch
\end{qlemma}
\begin{proof}
The reduction is based on the    construction presented in Theorem 17 in~\cite{DBLP:conf/fsttcs/0001JLMS14}.

Let $\R =\tuple{\locs,\reg,\alphabet,T}$ be an NRA equipped  with 
an initial location~$\loc_\is$ and a set $\locs_\acc$ of accepting locations,
where, without loss of generality, we assume that all outgoing transitions from~$\loc_\is$ update all registers.	
We also assume that $\R$ is complete, otherwise, we add some non-accepting location and direct all undefined transitions
 to it.

 We construct an NRA~$\R_{\mathsf{syn}}$ such that there exists some data word that is not in $L(\R)$  if, and only if,~$\R_{\mathsf{syn}}$ has some synchronizing data word. 
We define $\R_{\mathsf{syn}} =\tuple{\locs_{\mathsf{syn}},\reg,\alphabet_{\mathsf{syn}},T_{\mathsf{syn}}}$ as follows.
The set of locations is $\locs_{\mathsf{syn}}=\locs\cup\{\lreset,\synch\}$ where $\synch,\lreset \not\in\locs$ are two new locations. 
The alphabet is $\alphabet_{\synch}=\alphabet\cup\{\#,\star\}$ where $\#,\star\not\in\alphabet$. 
The transition relation~$T_{\mathsf{syn}}$ is the union of $T$ and set containing the following transitions: 
\begin{itemize}
			\item $\synch \ttto{}{a}{\reg} \synch$ for all letters $a\in\alphabet_{\mathsf{syn}}$,
			\item $\lreset \ttto{}{\star}{\reg} \loc_\is$ and $\lreset \ttto{}{a}{\reg} \lreset$ for all letters $a\in\alphabet_{\mathsf{syn}}\backslash\{\star\}$,
			\item $\loc \ttto{}{\star}{\reg} \loc_\is$  for all locations~$\loc \in \locs$, 
			\item $\loc\ttto{}{\#}{\reg}\synch$ for all non-accepting locations~$\loc\in\locs\backslash\locs_\acc$, 
			\item $\loc\ttto{}{\#}{\reg} \lreset$ for all accepting  		locations~$\loc\in\locs_\acc$. 
\end{itemize}

Next, we prove the correctness of the reduction. 
	
First, assume there exists a data word $w=(a_1,d_1)\dots(a_n,d_n)$ such that $w\not\in L(\R)$. 
Hence, all runs starting in~$(\loc_\is,\valuation_i)$  with  $\valuation_i\in \domain^{\abs{\reg}}$ 
end in some configuration $(\loc,\val)$ with $\loc\not\in\locs_\acc$. 
The data word $(\star,d) \cdot w\cdot (\#,d)$ with~$d\in \domain$ synchronizes~$\R_{\mathsf{syn}}$ in location~$\synch$, proving that
$\R_{\mathsf{syn}}$ has some synchronizing data word.

Second, assume that $\R_{\mathsf{syn}}$ has some synchronizing data word. 
All transitions in~$\synch$ are self-loops with update on all registers;
thus,  $\R_{\mathsf{syn}}$ can only synchronize in~$\synch$. 
Moreover, $\synch$ is only accessible with $\#$-transitions;
assuming~$w$ is one of the shortest synchronizing data words, we see that
$\post(\locs\times \domain,w)=\{(\synch,\valuation))\}$ for some $\valuation\in \domain^{\abs{\reg}}$.
From all locations~$\ell\in \locs$ we have
$\ell  \ttto{}{\star}{\reg} \loc_\is$;
we say that $\star$-transitions \emph{reset}~$\R_{\mathsf{syn}}$.
Moreover, the only outgoing transition in location~$\lreset$ is the $\star$-transition.
Thus, a \emph{reset} followed by some $\#$  must occur while synchronizing. 
Let $w=w_0(\star,d_\star) w_1 (\#,d_{\#}) w_2$, where $w_1\in(\Sigma\times \domain)^+$ 
is the data word between the last occurrence of $\star$ and the first following occurrence of $\#$, 
and $w_2\in(\Sigma'\backslash\{\star\})^*$.
We prove that $w_1\not\in L(\R)$.  
By contradiction, assume that~$w_1$ is in the language; thus, 
there exist valuations~$\valuation_i,\valuation_f\in \domain^{{\abs{\reg}}}$
such that $\R_{\mathsf{syn}}$ has a run over~$w_1$, \ie, starting in~$(\loc_\is,\valuation_i)$
 and ending in~$(\loc_f,\valuation_f)$ where~$\loc_f\in \locs_\acc$.
In fact, since all outgoing transitions in~$\loc_\is$ update all registers, then
for all valuations~$\valuation_i$,  $\R_{\mathsf{syn}}$ has an accepting run over~$w_1$. 

Note that $w_0$ cannot be a synchronizing word for $\R_{\mathsf{syn}}$, because this would contradict 
the assumption that $w$ is one of the shortest synchronizing data word. 
It implies that there must be some configuration~$q$ such that $\post_{\R_{\mathsf{syn}}}(q,w_0)$ contains 
some configuration~$(\loc,\valuation)$ 
with $\loc\neq \synch$.
From~$(\loc,\valuation)$, inputting  the next~$(\star,d_\star)$ (that is after~$w_0$ in synchronizing word~$w$), we reach 
$(\loc_\is,\{d_\star\}^{\abs{\reg}})$.
Since for all valuations~$\valuation_i$,  starting in~$(\loc_\is,\valuation_i)$, $\R_{\synch}$ has an accepting run over~$w_1$, 
it must have an accepting run from $(\loc_\is,\{d_\star\}^{\abs{\reg}})$ to some accepting configuration~$(\loc_f,\valuation_f)$ too.
Reading  the last~$\#$ (that is after~$w_1$ in synchronizing word~$w$),  $\lreset$ is reached. 
Since $w_2$ does not contain any $\star$, $\lreset$ is never left, meaning that  $\R_{\mathsf{syn}}$ cannot synchronize in $\synch$, a contradiction. 	
The proof is complete.	

Note that the reduction preserves the number of registers in the NRAs. 
\end{proof}
\end{document}